%
\documentclass[runningheads]{llncs}
%
%
 
\usepackage[hidelinks]{hyperref}

\usepackage{appendix}
\usepackage[dvipsnames]{xcolor}
\usepackage{lipsum}
\usepackage{multirow}
\usepackage{graphicx}
\usepackage{amssymb,amsmath}
\usepackage{float}
\usepackage{tikz}
\usetikzlibrary{fadings}
\usetikzlibrary{patterns}
\usetikzlibrary{shadows.blur}
\usepackage{multicol}
\setlength{\columnseprule}{0.1pt}

\usepackage{enumitem}
\setlistdepth{9}
\renewlist{itemize}{itemize}{9}
\setlist[itemize]{label=$\cdot$}
\setlist[itemize,1]{label=\textbullet}
\setlist[itemize,2]{label=--}
\setlist[itemize,3]{label=*}

\newcommand{\R}{\mathbb{R}}

\newcommand{\accept}{\mathsf{accept}}
\newcommand{\reject}{\mathsf{reject}}
\newcommand{\evaluate}{\mathsf{evaluate}}
\newcommand{\validate}{\mathsf{validateTrue}}

\newcommand{\vrfy}{\mathsf{vrfySig}}

\newcommand{\hA}{\mathcal{A}}

\newcommand{\MT}{\mathsf{MT}}

\newcommand{\cR}{\mathsf{root}}
\newcommand{\cLedger}{\mathsf{ledger}}
\newcommand{\cHashes}{\mathsf{blockhashes}}
\newcommand{\cQ}{\mathsf{Q}}
\newcommand{\cH}{\mathsf{H}}

\newcommand{\cC}{\mathsf{C}}

\newcommand{\cD}{\mathsf{D}}
\newcommand{\cP}{\mathsf{P}}

\newcommand{\tx}{\mathsf{tx}}
\newcommand{\txid}{\mathsf{txid}}
\newcommand{\TX}{\mathsf{TX}}

\newcommand{\yuan}[1]{{\color{red}[Yuan: #1]}}

\newcommand{\ignore}[1]{}
\DeclareMathOperator*{\argmax}{arg\,max}

\def\bitcoin{%
	\leavevmode
	\vtop{\offinterlineskip 
		\setbox0=\hbox{B}%
		\setbox2=\hbox to\wd0{\hfil\hskip-.03em
			\vrule height .3ex width .15ex\hskip .08em
			\vrule height .3ex width .15ex\hfil}
		\vbox{\copy2\box0}\box2}}

\begin{document}
\title{Generic Superlight Client for Permissionless Blockchains}
%
%
\author{Yuan Lu \and
Qiang Tang \and
Guiling Wang}
\authorrunning{Y. Lu et al.}
%
\institute{New Jersey Institute of Technology, Newark NJ 07102, USA \\
\email{\{yl768, qiang, gwang\}@njit.edu}}
\maketitle              

\begin{abstract}
We conduct a systematic study on the  light-client protocol of permissionless blockchains, in the setting where full nodes and light clients  are rational. In the game-theoretic model, we design a superlight-client protocol to enable a light client to employ some relaying full nodes (e.g., two or one) to read the blockchain. The protocol is ``generic'', i.e., it can be deployed disregarding underlying consensuses, and it is also ``superlight'', i.e.,  the computational cost of the light  client to predicate the (non)existence of a transaction in the blockchain becomes a small constant. Since our   protocol resolves a fundamental challenge of broadening the usage of blockchain technology, it captures a wide variety of important use-cases such as multi-chain wallets, DApp browsers and more.
\keywords{Blockchain  \and Light client \and Game-theoretic security.}
\end{abstract}


\section{Introduction}


The blockchain \cite{Nak08,buterin2014next} can be considered as an abstracted global ledger  \cite{GKL15,KMS16,badertscher2017bitcoin},
which can be read and written by the users of higher level applications \cite{buterin2014next,Filecoin,DWA17,fairswap}. 
However, the basic abstraction of \emph{reading the   ledger}\footnote{\label{note:lightwriting}Writing in  the blockchain is trivial, as one can   gossip with some full nodes to diffuse its messages to the entire blockchain network (a.k.a., network diffuse functionality \cite{GKL15,BDO12}). Then the blockchain's liveness ensures the inclusion of the messages \cite{GKL15}.} 
implicitly requires the user       maintain a personal  {\em full node}  \cite{btccore,ethgo} to execute the consensus protocol and maintain a local blockchain replica.
Nevertheless, with the rapid popularity of blockchain, an increasing number of blockchain  users become  merely caring about  the high-level applications such as cryptocurrency, instead of maintaining any personal full node  \cite{GLK18};
let alone, many of them  are resource-starved, say
browser extensions and smartphones  \cite{TD17,CryptoKitties,Sarah16,steemit},  that have too limited resources to stay on-line to stick with the consensus protocol.

Thus an urgent demand of blockchain's {\em light clients}, a.k.a., superlight clients or lightweight clients \cite{Flyclient,powsidechain,KMZ17,VitalikPoS}, rises up.
%
Consider a quintessential scenario: 
Alice is the  cashier of a pizza store;
a customer Bob tells her   \bitcoin1,000 has been paid for some pizzas, via a bitcoin transaction   with $\txid$ \href{https://www.blockchain.com/btc/tx/a1075db55d416d3ca199f55b6084e2115b9345e16c5cf302fc80e9d5fbf5d48d}{$\mathtt{0xa1075d...}$}, and claims the transaction is already in the blockchain;
then, Alice needs to check that, by activating a lite wallet app installed in her mobile phone.
That is to say, 
Alice, and many typical blockchain users, need    a superlight client, which can   {\em stay  off-line} to opt out of the consensus,
 and  can  still wake  up   any time  to ``read'' the records in the blockchain  with high security and low computational cost.
%

\subsection{Insufficiencies of existing practices}

The fundamental challenge of designing the superlight client   stems from a fact:
  the records in the blockchain are    
``authenticated'' by the latest chain agreed across the blockchain network (a.k.a., the main-chain) \cite{pass2017rethinking,Nak08,snowwhite,algorand,ouroboros}.
So without running the consensus to obtain a trusted replica of the main-chain  at hand, 
the   client has to rely on some other   {\em full nodes} to forward the records in the blockchain.
That   said, the light-client protocol therefore must deal with those probably distrustful full nodes that can forward fake blockchain readings.

\smallskip
\noindent{\bf Some ad-hoc attempts}. 
 A few   proposals attempt to  prevent the   client   being cheated, by relying on    heavyweight   assumptions.
%
%
For example, a few proposals \cite{electrum,snowwhite} assume a diverse list of   known full nodes to serve as relays to forward blockchain readings. 
The list is supposed to consist of some ``mining'' pools and  a few so-called blockchain ``explorers'',
so the client can count on the honest-majority of these known relays to read the chain.
But for many real-world permissionless blockchains, this  assumption is too heavy to hold with high-confidence. 
Say  Cardano \cite{Cardano}, a top-10 blockchain by market capitalization, has quite few ``explorer'' websites on the run;
even worse, the naive idea of recruiting ``mining'' pools  as relays is  more elusive, considering  
most of them would not participate without moderate incentives  \cite{GLK18}.
So it is unclear how to identify an honest-majority set of  known relays for each  permissionless blockchain in the wild.
As such, these ad-hoc solutions become unreliable, considering their heavyweight assumptions  are seemingly elusive in practice.

\smallskip
\noindent{\bf Cryptographic approaches}.
To design the light-client protocol against malicious relay full nodes,
a few cryptographic approaches are proposed \cite{Flyclient,KMZ17,Nak08,EPBC}. 

\underline{\smash{\em Straight use of SPV is problematic}}. The most straightforward way to instantiating the idea is to let the    client  keep   track of  the suffix of the main-chain, and then check the existence of transactions by verifying SPV proofs \cite{Nak08}, but the naive approach causes at least one major  issue:
the client has to frequently be  on-line to track the growth of main-chain. Otherwise, when  the client wakes up from a deep sleep, it   needs to at least verify the block headers of the main-chain linearly. Such the bootstrapping can be costly, considering the main-chain is ever-growing, say the   headers of Ethereum is growing at a pace of $\sim1$ GB per year.
As a result, in many critical use-cases such as web browsers and/or mobile phones, the    idea of   straightly using SPV proofs  becomes rather unrealistic.
%
%

%
%
\underline{\smash{\em PoW-specific results}}. 
For PoW chains, some existing superlight clients such as FlyClient and NiPoPoW \cite{Flyclient,KMZ17} circumvent the problem of   SPV proofs.
These ideas notice the   main-chain is essentially ``authenticated'' by its {\em  few suffix blocks},
and then develop some PoW-specific techniques to allow 
the {\em suffix blocks} be proven to  a client at only sublinear cost.
%
%
But they come with one major limit, namely, need to verify   PoWs to discover the correct suffix, and therefore cannot fit the promising class of proof-of-stake (PoS) consensuses \cite{ouroboros,snowwhite,algorand}.
%

%

%
%
%
%
%
%
%

\underline{\smash{\em Superlight client for PoS still unclear}}. For  PoS chains, it is yet unclear how to realize an actual superlight client that can go off-line to completely opt out  of consensus. 
The major issue is lacking an efficient way to proving the suffix of   PoS  chains to an off-line client,
as the validity of the suffix blocks  relies on the signatures of stakeholders, 
whose validities further depend on the recent stake distributions,
which   further are authenticated by the blockchain itself \cite{algorand,snowwhite,ouroboros,VitalikPoS}.
%
Some recent efforts \cite{GKZ19,Coda,Cosmos} allow the \emph{always-online} clients
to use minimal space to track  the suffix   of  PoS chains, without maintaining the stake distributions. 
But   for the major challenge of enabling the clients to go off-line, there only exist few fast-bootstrapping proposals for {\em full nodes} \cite{vault,ouroboros-genesis}, which still require the client to download and verify a linear portion of the  PoS chain to respawn, thus  incurring substantial cost \cite{algorand,ouroboros-praos}.
%

\smallskip
\noindent{\bf Demands of ``consensus-oblivious'' light client.} 
Most existing light-client protocols are highly specialized for concrete consensuses (e.g., PoW).
This not only prevents us   adapting them to instantiate actual superlight clients for the PoS chains,
but also hinders their easy deployment and user experience in many important use-cases in reality.
A typical scenario is a multi-chain wallet, which is expected to support various cryptocurrencies atop different chains, each of which is even running a distinct consensus. 
Bearing that   existing light-client protocols are highly specialized \cite{GKZ19,Coda,Cosmos,Flyclient,KMZ17}, 
the multi-chain wallet needs to instantiate different   protocols for distinct consensuses.
That said, without a generic solution, the multi-chain wallet ends up to contain many independent ``sub-wallets''.
That wallet not only is  burdensome for the  users, but also challenges  the developers to correctly implement all sub-wallets. 
In contrast, if there is a generic protocol fitting all, one can simply tune some parameters at best.


\smallskip
\noindent{\bf Explore a generic solution  in a different setting.} 
The cryptographic setting seems to be   an inherent obstacle-ridden path to the  generic light-client protocol.
Recall it  usually needs to prove the main-chain's suffix  validated by   the consensus rules  \cite{GKZ19,Cosmos,Flyclient,KMZ17}. Even if one  puts forth a ``consensus-oblivious'' solution in the cryptographic setting,
it  likely has to convert all ``proofs'' for suffix into a generic statement of verifiable computation (VC) \cite{Coda}, 
which is unclear how to be realized practically, considering VC itself is not  fully practical yet for complicated   statements 
(see Appendix \ref{related} for a  thorough review on the pertinent topics).
%

To meet the urgent demand of the generic light-client protocol,
we explicitly deviate from     the  cryptographic setting, and focus on the light-client problem due to the game-theoretic approach, in light of many successful studies such as 
rational multi-party computation \cite{ADGH06,groce2012fair,HT04,GK06,IML05,cryptoeprint:2011:396,kol2008games,fuchsbauer2010efficient}  and rational verifiable computation  \cite{DWA17,truebit,PKC14,Kupcu17}.
In the rational setting, we can hope that a consensus-independent incentive mechanism exists to assist a simple  light-client protocol,
so all rational protocol participants (i.e., the relay full nodes and the light client)   follow the protocol for selfishness. As a result, the client can efficiently retrieve the correct information about the blockchain, and the full nodes are fairly paid.

\medskip
Following that, this paper would present:
{\em  a systematic treatment to the  light-client problem of   permissionless blockchains in the game-theoretic setting. }

\ignore{
As such, it is still unclear how to realize an actual superlight client for a variety of permissionless blockchains
all existing solutions to   the problem of superlight clients is not unsatisfactory regarding the following :

Several protocols for the superlight client   were proposed  in the cryptographic setting   \cite{Flyclient,KMZ17}, but are restricted to merely few types of permissionless consensuses, namely, proof-of-work (PoW), 
and cannot support different types of consensuses such as the promising proof-of-stake (PoS).
The lack of a generic solution  at least hinders many interesting use-cases such as the multiple-chain wallet, for instance, the pizza store expects a single lightweight   app to support a  wide variety of   blockchains using distinct types of consensuses, such that it can improve  the pizza business with more cryptocurrency users.

The conventional cryptographic approach  is seemingly an inherently obstacle-ridden path to the desired generic superlight client.
For permissionless blockchains, the (non)existence of a transaction is essentially ``authenticated'' by {\em the few suffix blocks} of the latest  chain agreed among the blockchain network (a.k.a. the main-chain) \cite{pass2017rethinking,Nak08,snowwhite,algorand,ouroboros},
whose validity      depends on {\em the  underlying consensus}  and all previous blocks, say the   suffix blocks of a PoS chain depend  on the stake distribution induced by previous blocks in the chain.
Such the issue makes it unclear how a generic superlight client can be realized in the cryptographic setting.

Thus we explicitly deviate from the conventional cryptographic approach 
to reconsider the problem  
due to the security notion in the standard game-theoretic setting \cite{HPS16,DWA17,park2018spacemint,fuchsbauer2010efficient,groce2012fair,kiayias2019coalition}, 
where a generic superlight client is made from a simple incentive design  (without any elusive assumptions such as trusted third-parties).
}

\subsection{Our results}



By reconsidering the light-client problem through the powerful lens of  the {\em  game-theoretic setting}, we
design a  superlight protocol to enable a light client to recruit several relay full nodes (e.g., one\footnote{Note that the case where only one relay is recruited models a pessimistic scenario that all recruited full nodes are colluding to form a single coalition.} or two) to securely evaluate a general class of predicates about the blockchain.

\smallskip
\noindent{\bf Contributions}. To summarize, our technical contributions are three-fold:
\begin{itemize}
	\item Our light-client protocol can be bootstrapped in the rational setting, efficiently and generically.
	First, the protocol is superlight, in the sense that the client   can go off-line and wake up any time to {\em evaluate} a general class of chain predicates	at a tiny constant computationally cost;  as long as the truthness or falseness of these chain predicates is reducible to few transactions' inclusion in the blockchain.
	Moreover, our generic protocol gets rid of the dependency on consensuses and can be deployed in nearly any permissionless blockchain (e.g.,   Turing-complete blockchains \cite{buterin2014next,Woo14})
	without even velvet forks \cite{zamyatin2018wild}, thus   supporting   the promising PoS type of consensuses.

	\item Along the way, we conduct a systematic study to understand whether, or to what extent, our protocol for the superlight client is secure, in the rational setting (where is no always available  trusted third-parties).
	We make non-trivial analyses of the incomplete-information extensive game induced by our light-client protocol, 
	and conduct a comprehensive study  to understand how to finely tune the incentives to achieve security in different scenarios, from the standard setting of non-cooperative full nodes to the pessimistic setting of colluding full nodes.

	\item Our protocol   allows the rational client to {\em evaluate} (non)existence of a given transaction.
	As such, a rational light client can be convinced by rational full nodes that a given transaction is   {\em not} in any block of the entire chain,
	which provides a simple   way to performing non-existence ``proof''.
	In contrast, to our knowledge, relevant studies in the cryptographic setting either give up non-existence proof \cite{Nak08,powsidechain}, 
	or require to heavily modify the data structure of the current blockchains \cite{IOPs,miller2014authenticated}.
\end{itemize}


\smallskip
\noindent{\bf Solution in a nutshell}.
Assuming the light client and relay nodes are rational, we leverage the smart contract to facilitate a simple and useful incentive mechanism,
such that being honest becomes their best choice, namely, for the highest utilities, 
(i) the relay nodes must forward the blockchain's records correctly, 
and (ii) the  client must pay the relays honestly. 
From high-level, the ideas are:
\begin{itemize}
	\item   \underline{\smash{\em Setup}}.
	The light client and relay node(s)   place their initial deposits in an ``arbiter'' smart contract, 
	such that a carefully designed incentive mechanism later can leverage these deposits to 
	facilitate rewards/punishments to deter deviations from the light-client protocol.
	
	
	\item   \underline{\smash{\em Repeated queries}}.
	After  setup, 
	the   client can 
	repeatedly query the relays to forward blockchain readings (up to $k$ times). Each query  proceeds as:
	\begin{enumerate} 
		\item {\em Request}. 
		The client firstly specifies the details of the predicate to query in the arbiter contract, 
		which can be done since writing in the contract is trivial for the network diffuse functionality.$^1$
		
		
		\item {\em Response}. Once the relays see the  specifications of the chain predicate   in the  arbiter contract, they are incentivized to evaluate the predicate and forward  the ground truth   to the   client off-chain.
		
		\item {\em Feedback}. 
		Then 
		the   client decides an output, according to what it receives from the relays.
		Besides, the client shall report what it receives to the arbiter contract; otherwise,   gets a fine.
		\item {\em Payout}. Finally, the    contract  verifies whether  the relays are honest, according to the feedback from the client, and then facilitates an incentive mechanism to reward (or punish)  the relays.
	\end{enumerate}
	Without a proper incentive mechanism, the above simple protocol is insecure to any extent as it is. So carefully designed incentives are added in the arbiter contract to ensure ``following the protocol'' to be  a desired   equilibrium, 
	namely, the rational   relays and   client, would not deviate.
\end{itemize}


\vspace{-1cm}
\begin{figure}[!htb]
	\centering
	
		\tikzset{every picture/.style={line width=0.75pt}} 
		
		\begin{tikzpicture}[x=0.75pt,y=0.75pt,yscale=-0.95,xscale=0.95]
		
		\draw  [color={rgb, 255:red, 128; green, 128; blue, 128 }  ,draw opacity=1 ][fill={rgb, 255:red, 155; green, 155; blue, 155 }  ,fill opacity=0.15 ] (210,120) -- (420,120) -- (420,200) -- (210,200) -- cycle ; \draw  [color={rgb, 255:red, 128; green, 128; blue, 128 }  ,draw opacity=1 ] (236.25,120) -- (236.25,200) ; \draw  [color={rgb, 255:red, 128; green, 128; blue, 128 }  ,draw opacity=1 ] (210,130) -- (420,130) ;
		\draw [color={rgb, 255:red, 208; green, 2; blue, 27 }  ,draw opacity=1 ]   (170,140) -- (298,140) ;
		\draw [shift={(300,140)}, rotate = 180] [color={rgb, 255:red, 208; green, 2; blue, 27 }  ,draw opacity=1 ][line width=0.75]    (10.93,-3.29) .. controls (6.95,-1.4) and (3.31,-0.3) .. (0,0) .. controls (3.31,0.3) and (6.95,1.4) .. (10.93,3.29)   ;
		
		\draw [color={rgb, 255:red, 208; green, 2; blue, 27 }  ,draw opacity=1 ] [dash pattern={on 0.84pt off 2.51pt}]  (300,140) -- (438,140) ;
		\draw [shift={(440,140)}, rotate = 180] [color={rgb, 255:red, 208; green, 2; blue, 27 }  ,draw opacity=1 ][line width=0.75]    (10.93,-3.29) .. controls (6.95,-1.4) and (3.31,-0.3) .. (0,0) .. controls (3.31,0.3) and (6.95,1.4) .. (10.93,3.29)   ;
		
		\draw [color={rgb, 255:red, 74; green, 144; blue, 226 }  ,draw opacity=1 ]   (456.5,128) .. controls (388.84,80.24) and (213.27,88.91) .. (158.32,126.43) ;
		\draw [shift={(157.5,127)}, rotate = 324.87] [color={rgb, 255:red, 74; green, 144; blue, 226 }  ,draw opacity=1 ][line width=0.75]    (10.93,-3.29) .. controls (6.95,-1.4) and (3.31,-0.3) .. (0,0) .. controls (3.31,0.3) and (6.95,1.4) .. (10.93,3.29)   ;
		
		\draw [color={rgb, 255:red, 208; green, 2; blue, 27 }  ,draw opacity=1 ]   (153.5,153) .. controls (180.39,167.85) and (187.03,180.74) .. (248.13,180.03) ;
		\draw [shift={(250,180)}, rotate = 539.0899999999999] [color={rgb, 255:red, 208; green, 2; blue, 27 }  ,draw opacity=1 ][line width=0.75]    (10.93,-3.29) .. controls (6.95,-1.4) and (3.31,-0.3) .. (0,0) .. controls (3.31,0.3) and (6.95,1.4) .. (10.93,3.29)   ;
		
		\draw  [color={rgb, 255:red, 0; green, 0; blue, 0 }  ,draw opacity=1 ][fill={rgb, 255:red, 155; green, 155; blue, 155 }  ,fill opacity=0.3 ] (250,156) -- (390,156) -- (390,196) -- (250,196) -- cycle ;
		\draw [color={rgb, 255:red, 208; green, 2; blue, 27 }  ,draw opacity=1 ] [dash pattern={on 0.84pt off 2.51pt}]  (390,190) .. controls (413.56,191.95) and (440.77,179.64) .. (458.64,161.41) ;
		\draw [shift={(460,160)}, rotate = 493.19] [color={rgb, 255:red, 208; green, 2; blue, 27 }  ,draw opacity=1 ][line width=0.75]    (10.93,-3.29) .. controls (6.95,-1.4) and (3.31,-0.3) .. (0,0) .. controls (3.31,0.3) and (6.95,1.4) .. (10.93,3.29)   ;
		
		\draw [color={rgb, 255:red, 208; green, 2; blue, 27 }  ,draw opacity=1 ]   (210,140) .. controls (238.61,140.68) and (241.34,144.7) .. (249.13,158.46) ;
		\draw [shift={(250,160)}, rotate = 240.46] [color={rgb, 255:red, 208; green, 2; blue, 27 }  ,draw opacity=1 ][line width=0.75]    (10.93,-3.29) .. controls (6.95,-1.4) and (3.31,-0.3) .. (0,0) .. controls (3.31,0.3) and (6.95,1.4) .. (10.93,3.29)   ;
		
		\draw  [color={rgb, 255:red, 0; green, 0; blue, 0 }  ,draw opacity=1 ][fill={rgb, 255:red, 255; green, 255; blue, 255 }  ,fill opacity=1 ] (260,175) -- (380,175) -- (380,188) -- (260,188) -- cycle ;
		
		\draw (322.5,172) node [anchor=south] [inner sep=0.75pt]  [font=\scriptsize] [align=left] {Incentive Mechanism};
		\draw (322,206.5) node  [font=\small,color={rgb, 255:red, 144; green, 19; blue, 254 }  ,opacity=1 ] [align=left] {Arbiter Contract};
		\draw (145,140) node  [font=\small,color={rgb, 255:red, 144; green, 19; blue, 254 }  ,opacity=1 ] [align=left] {Client};
		\draw (465.5,140) node  [font=\small,color={rgb, 255:red, 144; green, 19; blue, 254 }  ,opacity=1 ] [align=left] {Relay(s)};
		\draw (188.5,134) node  [font=\scriptsize] [align=left] {1.Request};
		\draw (178,183.5) node  [font=\scriptsize] [align=left] {3.Feedback};
		\draw (310,102) node  [font=\scriptsize] [align=left] {2.Response};
		\draw (450,187.5) node  [font=\scriptsize] [align=left] {4.Payout};
		\draw (253,150) node  [font=\scriptsize,color={rgb, 255:red, 139; green, 87; blue, 42 }  ,opacity=1 ]  {${\displaystyle \$\$}$};
		\draw (321,182.5) node  [font=\scriptsize] [align=left] {\textcolor[rgb]{0.56,0.07,1}{Relay(s)' }  Deposit $\displaystyle \textcolor[rgb]{0.55,0.34,0.16}{\$\$\$}$ };
	
		\end{tikzpicture}
	\caption{Superlight client in the rational model (high-level).}
\label{fig:intuition}
\end{figure}
\vspace{-0.5cm}

%
\smallskip
\noindent{\bf Challenges \& techniques}.
Though   instantiating the  above    idea seems simple, it on the contrary  is challenged by the limit  of the ``handicapped''  arbiter contract.
In particular,  the arbiter contract in most blockchains (e.g., Ethereum) cannot directly verify the non-existence of  transactions,
though it is easy to verify any transaction's existence if being given the corresponding inclusion proof \cite{BCD14,powsidechain}.
Thus, for a chain predicate whose trueness (resp. falseness) is reducible to the  existence (resp. nonexistence) of some transactions,
the arbiter contract can at most verify either its trueness or its falseness, but not both sides.

	%
	%
	%
	In other words, considering the chain predicate as a binary $True$ or $False$ question, there exists a proof verifiable by the arbiter contract to attest   it is $True$ (or it is $False$), but not both.
	%
	%
	This enables the relays to adopt a malicious strategy:  ``always  forward  unverifiable bogus disregarding the actual ground truth'', because doing so would not be caught by the   contract, and thus the relays are still paid. 
	%
	%
	%
	The   challenge therefore becomes   how to  design an  incentive mechanism  to    deter the relays from   flooding unverifiable bogus claims, given only the ``handicapped'' verifiability of the   contract.
	%

	

%
To circumvent the  limit of the arbiter contract, we  squeeze the most of   its ``handicapped'' verifiability  to finely tune the incentive mechanism, such that ``flooding fake unverifiable claims'' become irrational.
Following that,  any deviations from the protocol are further deterred,
%
%
from the standard setting of non-cooperative  relays to the extremely hostile case of colluding relays:  
	
	
	
	\begin{itemize}
		
		\item  If  two {\em   non-cooperative relays} (e.g., two competing mining pools in practice) can be identified and recruited,
		we   leverage the natural tension between these   {\em two} selfish relays to ``audit'' each other. As such, fooling the   client is deterred,
		because  a selfish relay is incentivized to report the other's (unverifiable) bogus claim, by producing a proof attesting the opposite of the fake claim.
		%
		%
		%

		\item In the extremely adversarial scenario where  any two  recruited relays can {\em     form a coalition},
		the setting becomes rather pessimistic,
		as the client is essentially   requesting an unknown knowledge from a {\em  single} party.
		Nevertheless, the incentive   can still be slightly tuned    to function as follows:
		\begin{enumerate}
			\item The first tuning does {\em not} rely on any extra assumption.
			%
			The  adjustment is to let
			the arbiter contract assign a higher payoff to a proved claim while make a lower payoff to an unprovable claim. 
			So the best strategy of the {\em only} relay is to  forward the actual ground truth, 
			as long as the malicious benefit attained by fooling the client is smaller than the maximal reward promised by the   client.
			
			Though this result has   limited applicabilities, for example, cannot handle valuable queries,
			it  still captures a variety of meaningful real-world use-cases, in particular,
			many  DApp browsers, where the relay  is not rather interested in cheating the client.

			\item The second adjustment relies on    another moderate rationality assumption, that is: at least one selfish {\em public} full node (in the entire blockchain network) can keep on monitoring the internal states of the arbiter contract at a tiny cost and {\em will not cooperate} with the  recruited relay. 

			Thus whenever the recruited relay   forwards an {\em unprovable} bogus claim to the client,
			our design incentivizes the {\em selfish} public full node to ``audit'' the relay by proving the opposite side is the actual ground truth,
			which deters the recruited  relay  from   flooding unprovable bogus.
			

		\end{enumerate}

		%

	\end{itemize}


\noindent{\bf Application scenarios}.
Our protocol    supports a wide variety of applications, as it solves a fundamental issue preventing low-capacity users using blockchain:
\begin{itemize}
	\item \emph{Decentralized application browser}. 
	The DApp browser is a natural application scenario. For example, a lightweight browser for CryptoKitties \cite{CryptoKitties} can get rid of a trusted Web server.
	When surfing the DApp via a distrustful Web server, the users   need to verify whether the content rendered by the    server is correct,
	which can be done through our light-client protocol efficiently.
	\item 
	\emph{Mobile wallet for multiple cryptocurrencies}. 
	Our protocol can be leveraged to implement a super-light mobile wallet to verify the (non)existence of cryptocurrency transactions. In particular, it can keep track of multiple coins atop   different blockchains running over diverse types of consensuses. 
\end{itemize}


\ignore{

\emph{Blockchain's IoT ends}. 
Internet of Things (IoT) has been envisioned as a major use-case of blockchain technology for years, but the fundamental challenge of light client still prevents the resource-starved IoT devices from accessing the full power of the blockchain. With our protocol at hand, an IoT device can read the blockchain efficiently.

\smallskip
\emph{Marketplace for tokenized asset}. 
To search a desirable cryptokitty \cite{CryptoKitties} (which is an simplified example of tokenized digital asset) to buy, 
a mobile user usually searches in a marketplace hosted by a Web server, relying which to faithfully search the blockchains to find a correct kitty. 
Our protocol can naturally support a distributed marketplace by allowing a mobile user to efficiently ``read'' the ledger to figure out the kitties for-sale without relying a Web server.

\smallskip
\emph{Blockchain's IoT ends}. 
Internet of Things (IoT) has been envisioned as a major use-case of blockchain technology for years, but the fundamental challenge of lightweight client still prevents the resource-starved IoT devices from reading the chain, and therefore cannot enjoy the full power of the blockchain. With our protocol in hand, an IoT device can read the public blockchain. This enhancement will allow the IoT devices to further take blockchain-driven actions, which could be the key steps in many blockchain-IoT applications.

\yuan{1. change the security goal from nash equilibrium to sequential equilibrium; 2. introduce the definition of Predicate to formalize queries over the blockchain; 3. discuss how to map the relay nodes and deposits into the stakeholders and stakes in PoS blockchain; 4. discuss that chain relay schemes can boost multi-chain light protocol via a universal setup;
3. introduce a formal ledger functionality to formalize the assumptions over blockchain; smart contract is written as an ideal functionality;  protocol is written formally.}
}
		
\section{Warm-up: game-theoretic security}\label{warmup}
%
Usually,
the game-theoretic analysis of an interactive protocol   
  starts by defining an    extensive game
\cite{HPS16,DWA17,park2018spacemint} 
to model the strategies (i.e., probabilistic interactive Turing machines) of each party in the protocol.
Then a  utility function   would assign every party a certain payoff, for each possible execution   induced by the strategies of all parties.
So the security of the protocol can be argued by the properties of the   game,
for example, its Nash equilibrium  \cite{HT04} or other stronger equilibrium notions  \cite{halpern2019sequential,HPS16,DWA17,park2018spacemint,kol2008games}.
Here we introduce a simple interactive ``protocol'' to exemplify the idea of  conducting analysis in game-theoretic model, while deferring  the   extensive preliminary definitions to Appendix \ref{append:seq}.

\subsection{An interactive protocol as an extensive-form game}

Consider   an   oversimplified    ``light-client protocol'':
Alice is a cashier of a pizza store; her client   asks a  full node (i.e. relay) to check  a transaction's (non)existence, 
and simply  terminates to output what is forwarded by the relay.

\smallskip
\noindent{\bf Strategy, action, history, and information set}.
Let the {\em oversimplified} ``protocol'' proceed in synchronous round. 
In each round,  the parties will execute its {\em strategy}, i.e., a probabilistic polynomial-time ITM in our context, to produce and feed a string to the protocol, a.k.a., take an {\em action}. 
During the course of the protocol, a sequence of actions would be made, and we say it is a {\em history} by convention of the game theory literature;
moreover, when a party  acts, it might have learned some (incomplete) information from earlier actions taken by other parties,
so the notion of {\em information sets} are introduced to characterize what has and has not been learned by each party
(see Appendix \ref{append:seq} for the deferred formal definitions).
%
Concretely speaking, the {\em oversimplified} ``light-client protocol'' can be described by the    extensive-form game as  shown  in Fig \ref{fig:warmup}:
\begin{enumerate}
	\item 	{\em Round 1 (chance acts)}. A definitional virtual    party called $chance$ 
	 sets the  ground truth,
	namely, it determines   $True$ or $False$ to represent whether the transaction exists (denoted by $a$ or $a'$ respectively). To capture the  uncertainty of the ground truth, the chance   acts arbitrarily.
	\item 	{\em Round 2 (relay  acts)}. Then, the relay is activated to forward $True$ or $False$ to the light client, which states whether the transaction exists or not. 
	Note   the strategy chosen by the relay is an ITM that can produce arbitrary strings in this round, 
	we need to map the  strings into the admissible actions, namely,
	$t$, $f$ and $x$.  For definiteness, we let the string of ground truth be interpreted as the action $t$,
	the string of the opposite of ground truth be interpreted as the action $f$,
	and all other strings (including    abort) be interpreted as $x$.
	
	
	\item 	{\em Round 3 (client  acts)}. Finally, the client outputs $True$ (denoted by $A$) or $False$  (denoted by $A'$) to represent whether the transaction exists or not, according to the (incomplete) information acquired from the protocol. Note the client knows how the relay acts, but cannot directly infer the action of  $chance$. So it faces three distinct information sets $I_1$, $I_2$ and $I_3$, which respectively represent the client  receives $True$, $False$ and others  in Round 2. The client cannot distinguish the histories inside each information set.
\end{enumerate}

 \vspace{-0.75cm}
\begin{figure} 
	\centering
	\includegraphics[width=8cm]{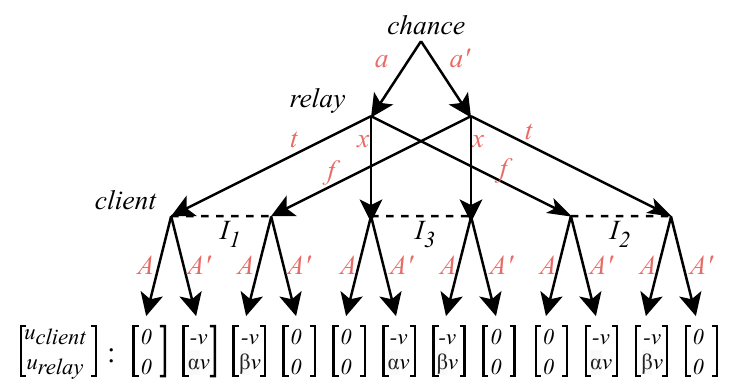}
	
	\caption{The extensive game of an oversimplified light-client ``protocol''.
		The utility function is an example to clarify the {\em insecurity} of such a trivial idea.
	}
	\label{fig:warmup}
	\vspace{-0.5cm}
\end{figure}

\smallskip
\noindent{\bf Utility function}.
After the protocol terminates, its game  reaches a so-called {\em terminal history}.
A well-defined {\em utility function}   specifies the economic outcome of each party, for each   terminal history    induced by the    extensive  game.

In practice, the utility function is determined by  some economic factors of the parties and   the protocol itself \cite{HT04,DWA17}.
%
For example, 
the rationale behind the utility function in Fig  \ref{fig:warmup} can be understood as:
(i) the relay is motivated to fool the client to believe the nonexistence of an existing transaction, because this literally ``censors'' Alice to harm her business by a loss of $\$v$, which also brings a malicious benefit $\$ \alpha\cdot v$ to the relay;
(ii) the relay also prefers to fool the client to believe the existence of a non-existing transaction, 
so the relay   gets   free pizzas valued by $\$\beta\cdot v$, which causes Alice lose $\$v$ (i.e., the amount supposed to be transacted to purchase pizzas),
(iii) after all, the oversimplified  protocol itself does not facilitate any punishment/reward, so will not affect the utility function.



\ignore{

In particular, the above {\em oversimplified} ``protocol'' can be viewed as
the   extensive-form game in Figure \ref{fig:warmup}, which can be reasoned by:
\begin{enumerate}
	\item 
	{\em Chance moves}.
	To capture the exogenous uncertainty of the (non)existence of the transaction,
	an additional game player called $chance$ is introduced.
	In the start of the game, $chance$ has two {\em actions}, namely, $a$ and $a'$ that represent existence and nonexistence respectively.
	We let $chance$ to take a {\em probabilistic} move by following an arbitrary Bernoulli distribution.
	\item 
	{\em Relay  moves}.
	Once $chance$ moves, the relay shall tell the   client about the (non)existence.
	To capture this, the game reaches either the {\em history} $a$ or $a'$, at which the relay can take an action out of $\{t, f, x\}$: $t$ means ``to tell the client the actual ground truth'' in the protocol, $f$ represents ``to forward the opposite of the fact'' in the protocol, and $x$ corresponds ``others'' including to do nothing.
	
	\item 
	{\em Client  moves}.
	 
	Then the client shall output $true$/$false$, conditioned on the information acquired from the relay full node.
	To capture the incomplete information acquired from the relay, the client  in the game might face three situations $I_1$, $I_2$ and $I_3$ (more precisely, {\em information sets of histories}) in the game. 
	The client can learn which information set it is facing, but cannot distinguish which particular history is reached in the information set.
	For example,
	$I_1$ and  $I_2$ respectively capture  that the   client receives a forwarded $true$ and a forwarded $false$, 
	but cannot  distinguish the forwarded result is the actual fact or the opposite of fact, while $I_3$ captures   the client receives nothing.
	In addition, to capture the moves of client, the game allows it to take an action out of $\{A, A'\}$.
	$A$ indicates ``to output $true$'', while $A'$ is ``to output $false$''.

	\item 
	{\em Utilities to be assigned}.
	
	After all moves, 
	the   ``light-client game'' terminates and we say   it reaches a {\em terminal history}.
	Each terminal history is a sequence of actions made by all players, and  
	will specify each player's economic outcome called {\em utility} (except the definitional virtual player $chance$).
	The function that assigns the utilities of players at each terminal history is called the {\em utility function},
	which usually is determined by the economic factors of the protocol and its participants.
	
	For example, 
	the rationale behind the utility function of the game in Figure \ref{fig:warmup} can be illustrated as follows:
	(i) the relay is motivated to fool the client to believe the nonexistence of an existing transaction, because this literally ``censors'' Alice and thus harms her business to incur a loss of $\$v$, which also brings some malicious benefit $\$ \alpha\cdot v$ to the relay;
	(ii) the relay is also incentivized to fool the client to believe the existence of a non-existing transaction, since this causes Alice to lose the transacted amount $\$v'$
	and the relay might get some pizzas valued by $\$\beta\cdot v'$ without paying;
	(iii) in addition, the oversimplified  protocol facilitates no punishment/reward.

\end{enumerate}

%


\noindent
{\em Remark}. 
%
It becomes natural to consider computationally-bounded players in an interactive protocol, so will we do through the paper.
%
To this end, our formal game-theoretic analysis later in Section  \ref{analysis} will capture the negligible error probability of cryptography. Such the treatments are irrelevant in the above introductory example.

}

\subsection{Security via equilibrium}

Putting the game structure and the utility function together,
we can argue the (in)security due to the equilibria in the    game.
In particular, 
we can  adopt the strong notion of {\em sequential equilibrium} for extensive games \cite{halpern2019sequential,HPS16,DWA17,park2018spacemint}
to demonstrate that the rational parties would not deviate,  at each stage during the execution of the protocol.
As a negative lesson, 
the oversimplified  ``light-client game'' in Fig  2 is {\em insecure} in the game-theoretic setting, as the relay can unilaterally deviate  to fool the client for higher utility. In contrast, if the protocol is {\em secure} in   game-theoretic settings,  its game shall realize desired equilibrium, such that rational parties would not diverge from the protocol for   highest utilities.
\ignore{
\subsection{Modeling an interactive protocol as an extensive-form game, a warm-up example}
Here we present an oversimplified ``light-client game'' in Figure \ref{fig:warmup},  and  showcase how it captures an oversimplified ``light-client protocol'' to exemplify    the  game-theoretic  security notions, while defer the rather extensive definitional preliminaries  to Appendix \ref{append:seq}.

\ignore{
	Here we present an oversimplified ``light-client protocol''  and  showcase how to map it to an oversimplified ``light-client game'' to exemplify    the  game-theoretic  security notions, while defer the relevant game-theory definitions to Appendix \ref{append:seq}.
	In particular,
	let us ``incorrectly'' assume  an {\em  over-idealized trusted third-party} that can {\em magically} punish/reward the client and   relay, around which an {\em  oversimplified light-client protocol} can  proceed as follows:
	\begin{enumerate}
		\item \underline{\smash{\em The relay responds}}. 
		In the 1st   round, the {\em honest} relay full node    evaluates and signs the ground truth of the chain predicate under query, 
		and sends the evaluation-signature pair to the light client.
		\item \underline{\smash{\em The client feeds back}}. In the 2nd round,
		the {\em honest} light client outputs the forwarded evaluation, and reports the evaluation-signature pair to the TTP.
		Then the   TTP  punishes and/or rewards the client and relay node, accordingly.
	\end{enumerate}
}

\begin{figure}[H]
	\centering

	\tikzset{every picture/.style={line width=0.75pt}} 
	
	\begin{tikzpicture}[x=0.75pt,y=0.75pt,yscale=-1,xscale=1]
	
	\draw [color={rgb, 255:red, 74; green, 144; blue, 226 }  ,draw opacity=1 ][line width=0.75]    (146,153) -- (184.46,217.42) ;
	\draw [shift={(186,220)}, rotate = 239.16] [fill={rgb, 255:red, 74; green, 144; blue, 226 }  ,fill opacity=1 ][line width=0.08]  [draw opacity=0] (10.72,-5.15) -- (0,0) -- (10.72,5.15) -- (7.12,0) -- cycle    ;
	
	\draw    (186,80) -- (213.5,98.34) ;
	\draw [shift={(216,100)}, rotate = 213.69] [fill={rgb, 255:red, 0; green, 0; blue, 0 }  ][line width=0.08]  [draw opacity=0] (10.72,-5.15) -- (0,0) -- (10.72,5.15) -- (7.12,0) -- cycle    ;
	
	\draw [color={rgb, 255:red, 74; green, 144; blue, 226 }  ,draw opacity=1 ][line width=0.75]    (186,80) -- (186,117) ;
	\draw [shift={(186,120)}, rotate = 270] [fill={rgb, 255:red, 74; green, 144; blue, 226 }  ,fill opacity=1 ][line width=0.08]  [draw opacity=0] (10.72,-5.15) -- (0,0) -- (10.72,5.15) -- (7.12,0) -- cycle    ;
	
	\draw    (186,220) -- (243,220) ;
	\draw [shift={(246,220)}, rotate = 180] [fill={rgb, 255:red, 0; green, 0; blue, 0 }  ][line width=0.08]  [draw opacity=0] (10.72,-5.15) -- (0,0) -- (10.72,5.15) -- (7.12,0) -- cycle    ;
	
	\draw [color={rgb, 255:red, 74; green, 144; blue, 226 }  ,draw opacity=1 ][fill={rgb, 255:red, 74; green, 144; blue, 226 }  ,fill opacity=1 ][line width=0.75]    (146,153) -- (184.56,82.63) ;
	\draw [shift={(186,80)}, rotate = 478.72] [fill={rgb, 255:red, 74; green, 144; blue, 226 }  ,fill opacity=1 ][line width=0.08]  [draw opacity=0] (10.72,-5.15) -- (0,0) -- (10.72,5.15) -- (7.12,0) -- cycle    ;
	
	\draw    (186,220) -- (186,183) ;
	\draw [shift={(186,180)}, rotate = 450] [fill={rgb, 255:red, 0; green, 0; blue, 0 }  ][line width=0.08]  [draw opacity=0] (10.72,-5.15) -- (0,0) -- (10.72,5.15) -- (7.12,0) -- cycle    ;
	
	\draw    (186,80) -- (243,80) ;
	\draw [shift={(246,80)}, rotate = 180] [fill={rgb, 255:red, 0; green, 0; blue, 0 }  ][line width=0.08]  [draw opacity=0] (10.72,-5.15) -- (0,0) -- (10.72,5.15) -- (7.12,0) -- cycle    ;
	
	\draw [color={rgb, 255:red, 74; green, 144; blue, 226 }  ,draw opacity=1 ][line width=0.75]    (186,220) -- (213.5,201.66) ;
	\draw [shift={(216,200)}, rotate = 506.31] [fill={rgb, 255:red, 74; green, 144; blue, 226 }  ,fill opacity=1 ][line width=0.08]  [draw opacity=0] (10.72,-5.15) -- (0,0) -- (10.72,5.15) -- (7.12,0) -- cycle    ;
	
	\draw  [dash pattern={on 0.84pt off 2.51pt}]  (186,120) -- (186,180) ;

	\draw  [dash pattern={on 0.84pt off 2.51pt}]  (216,100) -- (216,200) ;

	\draw  [dash pattern={on 0.84pt off 2.51pt}]  (246,80) -- (246,220) ;

	\draw    (279.5,40) .. controls (281.17,41.67) and (281.17,43.33) .. (279.5,45) .. controls (277.83,46.67) and (277.83,48.33) .. (279.5,50) .. controls (281.17,51.67) and (281.17,53.33) .. (279.5,55) .. controls (277.83,56.67) and (277.83,58.33) .. (279.5,60) .. controls (281.17,61.67) and (281.17,63.33) .. (279.5,65) .. controls (277.83,66.67) and (277.83,68.33) .. (279.5,70) .. controls (281.17,71.67) and (281.17,73.33) .. (279.5,75) .. controls (277.83,76.67) and (277.83,78.33) .. (279.5,80) .. controls (281.17,81.67) and (281.17,83.33) .. (279.5,85) .. controls (277.83,86.67) and (277.83,88.33) .. (279.5,90) .. controls (281.17,91.67) and (281.17,93.33) .. (279.5,95) .. controls (277.83,96.67) and (277.83,98.33) .. (279.5,100) .. controls (281.17,101.67) and (281.17,103.33) .. (279.5,105) .. controls (277.83,106.67) and (277.83,108.33) .. (279.5,110) .. controls (281.17,111.67) and (281.17,113.33) .. (279.5,115) .. controls (277.83,116.67) and (277.83,118.33) .. (279.5,120) .. controls (281.17,121.67) and (281.17,123.33) .. (279.5,125) .. controls (277.83,126.67) and (277.83,128.33) .. (279.5,130) -- (279.5,130) .. controls (281.17,131.67) and (281.17,133.33) .. (279.5,135) .. controls (277.83,136.67) and (277.83,138.33) .. (279.5,140) .. controls (281.17,141.67) and (281.17,143.33) .. (279.5,145) .. controls (277.83,146.67) and (277.83,148.33) .. (279.5,150) .. controls (281.17,151.67) and (281.17,153.33) .. (279.5,155) .. controls (277.83,156.67) and (277.83,158.33) .. (279.5,160) .. controls (281.17,161.67) and (281.17,163.33) .. (279.5,165) .. controls (277.83,166.67) and (277.83,168.33) .. (279.5,170) .. controls (281.17,171.67) and (281.17,173.33) .. (279.5,175) .. controls (277.83,176.67) and (277.83,178.33) .. (279.5,180) .. controls (281.17,181.67) and (281.17,183.33) .. (279.5,185) .. controls (277.83,186.67) and (277.83,188.33) .. (279.5,190) .. controls (281.17,191.67) and (281.17,193.33) .. (279.5,195) .. controls (277.83,196.67) and (277.83,198.33) .. (279.5,200) .. controls (281.17,201.67) and (281.17,203.33) .. (279.5,205) .. controls (277.83,206.67) and (277.83,208.33) .. (279.5,210) .. controls (281.17,211.67) and (281.17,213.33) .. (279.5,215) .. controls (277.83,216.67) and (277.83,218.33) .. (279.5,220) .. controls (281.17,221.67) and (281.17,223.33) .. (279.5,225) .. controls (277.83,226.67) and (277.83,228.33) .. (279.5,230) -- (279.5,230)(276.5,40) .. controls (278.17,41.67) and (278.17,43.33) .. (276.5,45) .. controls (274.83,46.67) and (274.83,48.33) .. (276.5,50) .. controls (278.17,51.67) and (278.17,53.33) .. (276.5,55) .. controls (274.83,56.67) and (274.83,58.33) .. (276.5,60) .. controls (278.17,61.67) and (278.17,63.33) .. (276.5,65) .. controls (274.83,66.67) and (274.83,68.33) .. (276.5,70) .. controls (278.17,71.67) and (278.17,73.33) .. (276.5,75) .. controls (274.83,76.67) and (274.83,78.33) .. (276.5,80) .. controls (278.17,81.67) and (278.17,83.33) .. (276.5,85) .. controls (274.83,86.67) and (274.83,88.33) .. (276.5,90) .. controls (278.17,91.67) and (278.17,93.33) .. (276.5,95) .. controls (274.83,96.67) and (274.83,98.33) .. (276.5,100) .. controls (278.17,101.67) and (278.17,103.33) .. (276.5,105) .. controls (274.83,106.67) and (274.83,108.33) .. (276.5,110) .. controls (278.17,111.67) and (278.17,113.33) .. (276.5,115) .. controls (274.83,116.67) and (274.83,118.33) .. (276.5,120) .. controls (278.17,121.67) and (278.17,123.33) .. (276.5,125) .. controls (274.83,126.67) and (274.83,128.33) .. (276.5,130) -- (276.5,130) .. controls (278.17,131.67) and (278.17,133.33) .. (276.5,135) .. controls (274.83,136.67) and (274.83,138.33) .. (276.5,140) .. controls (278.17,141.67) and (278.17,143.33) .. (276.5,145) .. controls (274.83,146.67) and (274.83,148.33) .. (276.5,150) .. controls (278.17,151.67) and (278.17,153.33) .. (276.5,155) .. controls (274.83,156.67) and (274.83,158.33) .. (276.5,160) .. controls (278.17,161.67) and (278.17,163.33) .. (276.5,165) .. controls (274.83,166.67) and (274.83,168.33) .. (276.5,170) .. controls (278.17,171.67) and (278.17,173.33) .. (276.5,175) .. controls (274.83,176.67) and (274.83,178.33) .. (276.5,180) .. controls (278.17,181.67) and (278.17,183.33) .. (276.5,185) .. controls (274.83,186.67) and (274.83,188.33) .. (276.5,190) .. controls (278.17,191.67) and (278.17,193.33) .. (276.5,195) .. controls (274.83,196.67) and (274.83,198.33) .. (276.5,200) .. controls (278.17,201.67) and (278.17,203.33) .. (276.5,205) .. controls (274.83,206.67) and (274.83,208.33) .. (276.5,210) .. controls (278.17,211.67) and (278.17,213.33) .. (276.5,215) .. controls (274.83,216.67) and (274.83,218.33) .. (276.5,220) .. controls (278.17,221.67) and (278.17,223.33) .. (276.5,225) .. controls (274.83,226.67) and (274.83,228.33) .. (276.5,230) -- (276.5,230) ;

	\draw    (354,60) -- (390.02,55.38) ;
	\draw [shift={(393,55)}, rotate = 532.69] [fill={rgb, 255:red, 0; green, 0; blue, 0 }  ][line width=0.08]  [draw opacity=0] (10.72,-5.15) -- (0,0) -- (10.72,5.15) -- (7.12,0) -- cycle    ;
	
	\draw    (354,60) -- (390.02,64.62) ;
	\draw [shift={(393,65)}, rotate = 187.31] [fill={rgb, 255:red, 0; green, 0; blue, 0 }  ][line width=0.08]  [draw opacity=0] (10.72,-5.15) -- (0,0) -- (10.72,5.15) -- (7.12,0) -- cycle    ;
	
	\draw    (354,60) -- (390.2,73.92) ;
	\draw [shift={(393,75)}, rotate = 201.04] [fill={rgb, 255:red, 0; green, 0; blue, 0 }  ][line width=0.08]  [draw opacity=0] (10.72,-5.15) -- (0,0) -- (10.72,5.15) -- (7.12,0) -- cycle    ;
	
	\draw [color={rgb, 255:red, 74; green, 144; blue, 226 }  ,draw opacity=1 ][line width=0.75]    (354,60) -- (390.2,46.08) ;
	\draw [shift={(393,45)}, rotate = 518.96] [fill={rgb, 255:red, 74; green, 144; blue, 226 }  ,fill opacity=1 ][line width=0.08]  [draw opacity=0] (10.72,-5.15) -- (0,0) -- (10.72,5.15) -- (7.12,0) -- cycle    ;
	
	\draw  [dash pattern={on 0.84pt off 2.51pt}]  (354,60) -- (354,100) ;

	\draw    (354,100) -- (390.02,95.38) ;
	\draw [shift={(393,95)}, rotate = 532.69] [fill={rgb, 255:red, 0; green, 0; blue, 0 }  ][line width=0.08]  [draw opacity=0] (10.72,-5.15) -- (0,0) -- (10.72,5.15) -- (7.12,0) -- cycle    ;
	
	\draw    (354,100) -- (390.02,104.62) ;
	\draw [shift={(393,105)}, rotate = 187.31] [fill={rgb, 255:red, 0; green, 0; blue, 0 }  ][line width=0.08]  [draw opacity=0] (10.72,-5.15) -- (0,0) -- (10.72,5.15) -- (7.12,0) -- cycle    ;
	
	\draw    (354,100) -- (390.2,113.92) ;
	\draw [shift={(393,115)}, rotate = 201.04] [fill={rgb, 255:red, 0; green, 0; blue, 0 }  ][line width=0.08]  [draw opacity=0] (10.72,-5.15) -- (0,0) -- (10.72,5.15) -- (7.12,0) -- cycle    ;
	
	\draw [color={rgb, 255:red, 0; green, 0; blue, 0 }  ,draw opacity=1 ][line width=0.75]    (354,100) -- (390.2,86.08) ;
	\draw [shift={(393,85)}, rotate = 518.96] [fill={rgb, 255:red, 0; green, 0; blue, 0 }  ,fill opacity=1 ][line width=0.08]  [draw opacity=0] (10.72,-5.15) -- (0,0) -- (10.72,5.15) -- (7.12,0) -- cycle    ;
	
	\draw [color={rgb, 255:red, 0; green, 0; blue, 0 }  ,draw opacity=1 ]   (354,140) -- (390.02,135.38) ;
	\draw [shift={(393,135)}, rotate = 532.69] [fill={rgb, 255:red, 0; green, 0; blue, 0 }  ,fill opacity=1 ][line width=0.08]  [draw opacity=0] (10.72,-5.15) -- (0,0) -- (10.72,5.15) -- (7.12,0) -- cycle    ;
	
	\draw    (354,140) -- (390.02,144.62) ;
	\draw [shift={(393,145)}, rotate = 187.31] [fill={rgb, 255:red, 0; green, 0; blue, 0 }  ][line width=0.08]  [draw opacity=0] (10.72,-5.15) -- (0,0) -- (10.72,5.15) -- (7.12,0) -- cycle    ;
	
	\draw    (354,140) -- (390.2,153.92) ;
	\draw [shift={(393,155)}, rotate = 201.04] [fill={rgb, 255:red, 0; green, 0; blue, 0 }  ][line width=0.08]  [draw opacity=0] (10.72,-5.15) -- (0,0) -- (10.72,5.15) -- (7.12,0) -- cycle    ;
	
	\draw [color={rgb, 255:red, 0; green, 0; blue, 0 }  ,draw opacity=1 ][line width=0.75]    (354,140) -- (390.2,126.08) ;
	\draw [shift={(393,125)}, rotate = 518.96] [fill={rgb, 255:red, 0; green, 0; blue, 0 }  ,fill opacity=1 ][line width=0.08]  [draw opacity=0] (10.72,-5.15) -- (0,0) -- (10.72,5.15) -- (7.12,0) -- cycle    ;
	
	\draw  [dash pattern={on 0.84pt off 2.51pt}]  (354,140) -- (354,180) ;

	\draw [color={rgb, 255:red, 74; green, 144; blue, 226 }  ,draw opacity=1 ]   (354,180) -- (390.02,175.38) ;
	\draw [shift={(393,175)}, rotate = 532.69] [fill={rgb, 255:red, 74; green, 144; blue, 226 }  ,fill opacity=1 ][line width=0.08]  [draw opacity=0] (10.72,-5.15) -- (0,0) -- (10.72,5.15) -- (7.12,0) -- cycle    ;
	
	\draw    (354,180) -- (390.02,184.62) ;
	\draw [shift={(393,185)}, rotate = 187.31] [fill={rgb, 255:red, 0; green, 0; blue, 0 }  ][line width=0.08]  [draw opacity=0] (10.72,-5.15) -- (0,0) -- (10.72,5.15) -- (7.12,0) -- cycle    ;
	
	\draw    (354,180) -- (390.2,193.92) ;
	\draw [shift={(393,195)}, rotate = 201.04] [fill={rgb, 255:red, 0; green, 0; blue, 0 }  ][line width=0.08]  [draw opacity=0] (10.72,-5.15) -- (0,0) -- (10.72,5.15) -- (7.12,0) -- cycle    ;
	
	\draw [color={rgb, 255:red, 0; green, 0; blue, 0 }  ,draw opacity=1 ][line width=0.75]    (354,180) -- (390.2,166.08) ;
	\draw [shift={(393,165)}, rotate = 518.96] [fill={rgb, 255:red, 0; green, 0; blue, 0 }  ,fill opacity=1 ][line width=0.08]  [draw opacity=0] (10.72,-5.15) -- (0,0) -- (10.72,5.15) -- (7.12,0) -- cycle    ;
	
	\draw [color={rgb, 255:red, 0; green, 0; blue, 0 }  ,draw opacity=1 ]   (354,208) -- (390.02,203.38) ;
	\draw [shift={(393,203)}, rotate = 532.69] [fill={rgb, 255:red, 0; green, 0; blue, 0 }  ,fill opacity=1 ][line width=0.08]  [draw opacity=0] (10.72,-5.15) -- (0,0) -- (10.72,5.15) -- (7.12,0) -- cycle    ;
	
	\draw    (354,208) -- (390.02,212.62) ;
	\draw [shift={(393,213)}, rotate = 187.31] [fill={rgb, 255:red, 0; green, 0; blue, 0 }  ][line width=0.08]  [draw opacity=0] (10.72,-5.15) -- (0,0) -- (10.72,5.15) -- (7.12,0) -- cycle    ;
	
	\draw  [dash pattern={on 0.84pt off 2.51pt}]  (354,208) -- (354,227) ;

	\draw [color={rgb, 255:red, 0; green, 0; blue, 0 }  ,draw opacity=1 ]   (354,227) -- (390.02,222.38) ;
	\draw [shift={(393,222)}, rotate = 532.69] [fill={rgb, 255:red, 0; green, 0; blue, 0 }  ,fill opacity=1 ][line width=0.08]  [draw opacity=0] (10.72,-5.15) -- (0,0) -- (10.72,5.15) -- (7.12,0) -- cycle    ;
	
	\draw    (354,227) -- (390.02,231.62) ;
	\draw [shift={(393,232)}, rotate = 187.31] [fill={rgb, 255:red, 0; green, 0; blue, 0 }  ][line width=0.08]  [draw opacity=0] (10.72,-5.15) -- (0,0) -- (10.72,5.15) -- (7.12,0) -- cycle    ;

	\draw (154.5,120) node  [font=\scriptsize,color={rgb, 255:red, 255; green, 116; blue, 0 }  ,opacity=1 ]  {$a$};
	\draw (154,180) node  [font=\scriptsize,color={rgb, 255:red, 255; green, 116; blue, 0 }  ,opacity=1 ]  {$a'$};
	\draw (132,146) node  [font=\scriptsize]  {$chance$};
	\draw (181.5,116) node  [font=\scriptsize,color={rgb, 255:red, 255; green, 116; blue, 0 }  ,opacity=1 ]  {$t$};
	\draw (204.5,98) node  [font=\scriptsize,color={rgb, 255:red, 255; green, 116; blue, 0 }  ,opacity=1 ]  {$f$};
	\draw (233.5,72) node  [font=\scriptsize,color={rgb, 255:red, 255; green, 116; blue, 0 }  ,opacity=1 ]  {$x$};
	\draw (207.5,196) node  [font=\scriptsize,color={rgb, 255:red, 255; green, 116; blue, 0 }  ,opacity=1 ]  {$t$};
	\draw (181.5,179) node  [font=\scriptsize,color={rgb, 255:red, 255; green, 116; blue, 0 }  ,opacity=1 ]  {$f$};
	\draw (233.5,213) node  [font=\scriptsize,color={rgb, 255:red, 255; green, 116; blue, 0 }  ,opacity=1 ]  {$x$};
	\draw (170,77) node  [font=\scriptsize]  {$relay$};
	\draw (168,213) node  [font=\scriptsize]  {$relay$};
	\draw (337.5,77.5) node  [font=\scriptsize,color={rgb, 255:red, 255; green, 116; blue, 0 }  ,opacity=1 ]  {$I_{1}$};
	\draw (178.5,150.5) node  [font=\scriptsize,color={rgb, 255:red, 255; green, 116; blue, 0 }  ,opacity=1 ]  {$I_{1}$};
	\draw (208.5,150.5) node  [font=\scriptsize,color={rgb, 255:red, 255; green, 116; blue, 0 }  ,opacity=1 ]  {$I_{2}$};
	\draw (238.5,150.5) node  [font=\scriptsize,color={rgb, 255:red, 255; green, 116; blue, 0 }  ,opacity=1 ]  {$I_{3}$};
	\draw (315.5,156) node  [font=\scriptsize]  {$client$};
	\draw (333.5,43.5) node  [font=\small] [align=left] {\textbf{(cont.)}};
	\draw (399,45) node  [font=\tiny,color={rgb, 255:red, 255; green, 116; blue, 0 }  ,opacity=1 ]  {$TA$};
	\draw (400,55) node  [font=\tiny,color={rgb, 255:red, 255; green, 116; blue, 0 }  ,opacity=1 ]  {$TA'$};
	\draw (399.5,64) node  [font=\tiny,color={rgb, 255:red, 255; green, 116; blue, 0 }  ,opacity=1 ]  {$XA$};
	\draw (400.5,73) node  [font=\tiny,color={rgb, 255:red, 255; green, 116; blue, 0 }  ,opacity=1 ]  {$XA'$};
	\draw (201,59.5) node  [font=\small] [align=left] {\textbf{The relay forwards}};
	\draw (344,34.5) node  [font=\small] [align=left] {\textbf{The client feeds back \& outputs}};
	\draw (315.5,215) node  [font=\scriptsize]  {$client$};
	\draw (399,84) node  [font=\tiny,color={rgb, 255:red, 255; green, 116; blue, 0 }  ,opacity=1 ]  {$TA$};
	\draw (401,94) node  [font=\tiny,color={rgb, 255:red, 255; green, 116; blue, 0 }  ,opacity=1 ]  {$TA'$};
	\draw (399.5,103) node  [font=\tiny,color={rgb, 255:red, 255; green, 116; blue, 0 }  ,opacity=1 ]  {$XA$};
	\draw (400.5,113) node  [font=\tiny,color={rgb, 255:red, 255; green, 116; blue, 0 }  ,opacity=1 ]  {$XA'$};
	\draw (347.5,57) node  [font=\scriptsize,color={rgb, 255:red, 255; green, 116; blue, 0 }  ,opacity=1 ]  {$at$};
	\draw (346.5,99) node  [font=\scriptsize,color={rgb, 255:red, 255; green, 116; blue, 0 }  ,opacity=1 ]  {$a'f$};
	\draw (336.5,157.5) node  [font=\scriptsize,color={rgb, 255:red, 255; green, 116; blue, 0 }  ,opacity=1 ]  {$I_{2}$};
	\draw (315.5,76) node  [font=\scriptsize]  {$client$};
	\draw (399,125) node  [font=\tiny,color={rgb, 255:red, 255; green, 116; blue, 0 }  ,opacity=1 ]  {$TA$};
	\draw (400,135) node  [font=\tiny,color={rgb, 255:red, 255; green, 116; blue, 0 }  ,opacity=1 ]  {$TA'$};
	\draw (399.5,144) node  [font=\tiny,color={rgb, 255:red, 255; green, 116; blue, 0 }  ,opacity=1 ]  {$XA$};
	\draw (400.5,153) node  [font=\tiny,color={rgb, 255:red, 255; green, 116; blue, 0 }  ,opacity=1 ]  {$XA'$};
	\draw (399,164) node  [font=\tiny,color={rgb, 255:red, 255; green, 116; blue, 0 }  ,opacity=1 ]  {$TA$};
	\draw (401,174) node  [font=\tiny,color={rgb, 255:red, 255; green, 116; blue, 0 }  ,opacity=1 ]  {$TA'$};
	\draw (399.5,183) node  [font=\tiny,color={rgb, 255:red, 255; green, 116; blue, 0 }  ,opacity=1 ]  {$XA$};
	\draw (400.5,193) node  [font=\tiny,color={rgb, 255:red, 255; green, 116; blue, 0 }  ,opacity=1 ]  {$XA'$};
	\draw (347.5,138) node  [font=\scriptsize,color={rgb, 255:red, 255; green, 116; blue, 0 }  ,opacity=1 ]  {$af$};
	\draw (346.5,179) node  [font=\scriptsize,color={rgb, 255:red, 255; green, 116; blue, 0 }  ,opacity=1 ]  {$a't$};
	\draw (335.5,215.5) node  [font=\scriptsize,color={rgb, 255:red, 255; green, 116; blue, 0 }  ,opacity=1 ]  {$I_{3}$};
	\draw (400,203) node  [font=\tiny,color={rgb, 255:red, 255; green, 116; blue, 0 }  ,opacity=1 ]  {$TA'$};
	\draw (399.5,212) node  [font=\tiny,color={rgb, 255:red, 255; green, 116; blue, 0 }  ,opacity=1 ]  {$TA$};
	\draw (401,221) node  [font=\tiny,color={rgb, 255:red, 255; green, 116; blue, 0 }  ,opacity=1 ]  {$TA'$};
	\draw (399.5,230) node  [font=\tiny,color={rgb, 255:red, 255; green, 116; blue, 0 }  ,opacity=1 ]  {$tA$};
	\draw (347.5,206) node  [font=\scriptsize,color={rgb, 255:red, 255; green, 116; blue, 0 }  ,opacity=1 ]  {$ax$};
	\draw (346.5,226) node  [font=\scriptsize,color={rgb, 255:red, 255; green, 116; blue, 0 }  ,opacity=1 ]  {$a'x$};

	\end{tikzpicture}
	\caption{The oversimplified ``light-client game''. 
		Blue lines represent all moves following the  ``light-client protocol''.}
	\label{fig:warmup}
\end{figure}

Considering the incomplete-information extensive-form game shown  in Figure \ref{fig:warmup}, 
it actually captures an {\em oversimplified} light-client protocol among a light client, a relay node and an arbiter as follows:
\begin{enumerate}
	\item To capture the exogenous uncertainty of the chain predicate's ground truth,
	an additional game player called $chance$ is introduced. It makes a probabilistic move to choose the $trueness$ (denoted by $a$) or the $falseness$ (denoted by $a'$) of the chain predicate under query.
	\item After the $chance$ moves, the relay node shall forward the chain predicate's ground truth to the light client.
	
	To capture that, the game allows it take an action out of $\{t, f, x\}$: $t$ means ``to forward the actual ground truth'', $f$ represents ``to forward the opposite side of   ground truth'', and $x$ corresponds ``others'' including to do nothing.
	
	\item 
	Then the client shall output $true$/$false$ and   feed what it receives from the relay node back to the arbiter.
	
	To capture the information acquired from the relay node in the protocol, the light client might face three situations $I_1$, $I_2$ and $I_3$ (more precisely, {\em information sets}) in the game. 
	$I_1$ and  $I_2$ respectively capture  that the   client receives a forwarded $true$ and a forwarded $false$, 
	but cannot  distinguish the forwarded result represents the actual ground truth or the opposite. $I_3$ captures that the client receives nothing.
	
	In addition, to capture the moves of client, the game allows it to take an action out of $\{T,X\}\times\{A, A'\}$.
	$A$ indicates ``to output $true$'', while $A'$ is ``to output $false$''.
	$T$ represents ``to report the arbiter what is received from the relay node'', and $X$ means ``others'' including to do nothing.
	
	\item After all moves, the arbiter in the protocol would facilitate punishments/rewards accordingly,
	while the   ``light-client game'' terminates and we say   it reaches a {\em terminal history}. 
	For example, the sequence of moves $atTA$ is a {\em terminal history} that means
	``the client and the relay follow the oversimplified light-client protocol, when the  ground truth is  $true$'',
	and the terminal history $a'tTA'$ means ``the protocol is followed, in case  the chain predicate is actually $false$''.
	
	In addition to the punishments/rewards made by the arbiter, 
	the economic outcome is also determined by whether the client correctly outputs the  ground truth of chain predicate.
	For instance, if the   client outputs the opposite side of the ground truth, 
	the   client would lose the ``value'' attached to the chain predicate, 
	and on the contrary, the relay node  would attain the attached ``value''.
	
	So each {\em terminal history} can assign the client and the relay some {\em utilities},
	as a reflection of two factors:
	(i) the punishments/rewards facilitated by the arbiter;
	(ii) whether the client outputs the correct ground truth.

\end{enumerate}

%


\noindent
{\em Remark}. 
%
It becomes natural to consider computationally-bounded players in an interactive protocol, so will we do through the paper.
%
To this end, our formal game-theoretic modeling later in Section  \ref{analysis} will capture the negligible probability of breaking cryptographic primitives by probabilistic polynomial-time (P.P.T.) algorithms. We omit such details in the above introductory example.

\subsection{Defining security  due to equilibria in the  extensive-form game}

Once an extensive ``light-client game'' is modeled to capture the ``light-client protocol''. 
We would argue the protocol's security due to the equilibria in the corresponding game.

We would argue the 

The aim of the light-client protocol in the game-theoretic model is to allow a rational light client employ some rational relaying full nodes (e.g., two) to correctly evaluate a few chain predicates, as all rational parties are incentivized to ``proceed'' as specified by the protocol.
%
%
%
%
In details, we require such the   protocol satisfying the following \emph{correctness} and \emph{security}:
\begin{itemize} [leftmargin=0.2in]
	\item \emph{Correctness}. When the light client and the relay nodes are honestly following the protocol, we require: (i) the relaying full nodes get correctly paid; (ii)
	the light client correctly evaluates some chain predicates under the category of $\cP^\ell(\cdot)$ or $\cQ^\ell(\cdot)$, which take $\cC[0:T]$ as input where $T$ is the global time, i.e., the latest height of the blockchain \cite{KMS16} at the time when the light client is querying the predicate. Both  the above correctness requirements shall hold with the probability of 1.
	
	\item \emph{Security}. We adopt a strong game-theoretic security notion of \emph{sequential equilibrium} \cite{HPS16,DWA17}  for  incomplete information games to ensure that no party deviates from the protocol except a negligible probability of security parameter. 
	Consider that the game $\Gamma$ captures all P.P.T. computable actions during every stage of the light-client protocol. 
	
	More formally, let denote $(\bf Z_{bad},  Z_{good})$ as a partition of the terminal histories $\bf Z$ of the game $\Gamma$, where $\bf Z_{good}$ captures all terminal histories that all parties follow the protocol. Let $se$ be any \emph{sequential equilibrium} of $\Gamma$, and $\rho(se)(z)$ represents the probability of reaching $z \in \bf Z$ under $se$, our security notion requires: for any $se$ of $\Gamma$, $\sum_{z\in\bf Z_{bad}}\rho(se)(z)\leq negl(\lambda)$.
	%
	
\end{itemize}


}

\section{Preliminaries}\label{preliminary}

\smallskip
\noindent{\bf Blockchain addressing}.
A blockchain (e.g., denoted by $\cC$) is a chain of block (headers). 
Each block  commits a list of payload (e.g., transactions).
Notation-wise, we use   Python bracket  $\cC[t]$ to address the block (header) at the height $t$ of the chain $\cC$. 
For example, $\cC[0]$ represents the genesis block, and $\cC[0:N]$ represents a chain consisting of $N$ blocks (where $\cC[0]$ is known as genesis). 
W.l.o.g., a block $\cC[t]$ is defined   as a tuple of $(h_{t-1}, nonce, \cR)$, where $h_{t-1}$ is the hash of the block $\cC[t-1]$, $nonce$ is the valid   PoX (e.g., the correct preimage in PoW, and the valid signatures in PoS), and $\cR$ is   Merkle tree root of  payload. 
Through the paper,   $\cC[t].\cR$  denotes   Merkle $\cR$ of block $\cC[t]$.

\smallskip
\noindent{\bf Payload \& Merkle tree}.
Let  $\TX_t := \langle \tx_1, \tx_2, \cdots, \tx_n \rangle$ denote a sequence of transactions that is the payload of the block $\cC[t]$.
Recall   $\TX_t$ is included by the block $\cC[t]$ through Merkle tree   \cite{Nak08,Woo14},
%
%
%
%
which is an authenticated data structure scheme   of  three algorithms $(\mathsf{BuildMT}, \mathsf{GenMTP}, \mathsf{VrfyMTP})$.
$\mathsf{BuildMT}$ inputs    $\TX_t= \langle \tx_1, \cdots, \tx_n \rangle$ and outputs a Merkle tree $\MT$ with $\cR$. 
$\mathsf{GenMTP}$ takes the   tree $\MT$ (built for $\TX_t$) and a transaction $\tx \in \TX_t$ as input, and outputs a proof $\pi_j$ for the inclusion of $\tx$ in $\TX_t$ at the position $j$.
$\mathsf{VrfyMTP}$ inputs   $\pi_j$,  $\cR$ and  $\tx$  and outputs either 1 or 0. 
%
%
%
%
%
%
%
%
The Merkle tree scheme satisfies:
(i) \emph{Correctness}. $\Pr[\mathsf{VrfyMTP}(\MT.\cR,\tx,\pi_i)=1 \mid \pi_i\leftarrow\mathsf{GenMTP}(\MT, \tx)$, $\MT\leftarrow\mathsf{BuildMT}(\TX)]=1$;
(ii) \emph{Security}. for  $\forall$ P.P.T.   $\hA$,   $\Pr[\mathsf{VrfyMTP}(\MT.\cR,\tx,\pi_i)=1 \wedge \tx \neq \TX[i] \mid \pi_i\leftarrow \hA(1^\lambda, \MT, \tx)$, $\MT\leftarrow\mathsf{BuildMT}(\TX)$ $] \leq negl(\lambda)$.
%
%
%
The detailed construction of the Merkle tree scheme is deferred to Appendix \ref{append:merkle}.


\smallskip
\noindent{\bf Smart contract}.
%
%
Essentially, a smart  contract \cite{buterin2014next,Woo14} can be abstracted as an ideal functionality with a global ledger subroutine, so it can faithfully instruct the ledger to freeze ``coins'' as deposits and then correctly facilitate conditional payments \cite{KMS16,KZZ16,badertscher2017bitcoin}. 
This paper explicitly  adopts  the widely-used   notations invented by Kosba \emph{et al.} \cite{KMS16} to  describe  the smart contract, for example: 
\begin{itemize}
	\item The   contract can access the global time $T$, which can be seen as an equivalent notion of the height of the latest blockchain.
	\item The contract can access  a global dictionary  $\cLedger$  for   conditional payments.
	\item We   slightly enhance their notations to allow the contract to access a global dictionary $\cHashes$. Each item $\cHashes[t]$ is the hash of the block $\cC[t]$.\footnote{ Remark that the above modeling requires the block hashes can be read by smart contracts from the blockchain's internal states (e.g. available global variables) \cite{solidity}. 
	In Ethereum, this  currently can be realized via the proposal of Andrew Miller \cite{blockhash}  and will be   incorporated   due to the already-planned Ethereum enhancement EIP-210 \cite{EIP210}. 
	}

	\item The contract would not send its internal states to the light client, which captures the client opts out of   consensus. However, the   client can send messages to the contract, due to the well abstracted network diffusion functionality.
\end{itemize}

In addition, we emphasize  that the blockchain can be seen as a global ledger functionality \cite{badertscher2017bitcoin,KZZ16} that allows all full nodes to maintain their local blockchain replicas consistent to the global dictionary $\cHashes$ (within a clock period).


\section{Problem Formulation}\label{problem}

The light-client protocol   involves a light client, some relay full nodes (e.g. one or two), and an ideal functionality (i.e. ``arbiter'' contract). The light client relies on the relays to ``read'' the  chain, and the  relays expect to receive correct payments.

\subsection{Formalizing readings from the blockchain}
\label{sec:predicate}

The basic functionality of our light-client protocol is to allow the resource-starved clients to evaluate the falseness or trueness about some statements over the blockchain  \cite{KMZ17}.  This aim is subtly broader than \cite{powsidechain}, whose goal is restricted to prevent the client from deciding trueness when the statement is actually false.

\smallskip
\noindent{\bf Chain predicate}.
The paper focuses on a general class of   chain  predicates whose trueness (or falseness) can be induced by up to $l$ transactions' inclusions in the chain, such as ``whether the transaction with identifier $\txid$ is in the blockchain $\cC[0:N]$ or not''. 
%
Formally,   we focus on the chain  predicate in the form of:
		$$\cP^\ell(\cC[0:N])=
		\begin{cases}
			False \text{, otherwise} \\
			True \text{, $\exists \cC^\prime \subset \cC[0:N]$  s.t. $ \cD^\ell(\cC^\prime)=True$}\\
		\end{cases}$$
	
	\noindent or equivalently, there is $\cQ(\cdot)=\neg\cP(\cdot)$:
		$$\cQ^\ell(\cC[0:N])=
		\begin{cases}
			False \text{, $\exists \cC^\prime \subset \cC[0:N]$  s.t. $ \cD^\ell(\cC^{\prime})=True$} \\
			True \text{, otherwise}\\
		\end{cases}$$
	
	\noindent where $\cC^\prime$ is a subset of the blockchain $\cC[0:N]$, 
	and $\cD^\ell(\cdot)$ is a computable predicate taking $\cC^\prime$ as input and is writable as:
	$$\cD^\ell(\cC^\prime)=
		\begin{cases}
			True  \text{,  $\exists{\ }\{\tx_i\}$ that \text{$|\{\tx_i\}|\leq \ell $}: }\\
			{\ }{\ }{\ }{\ }{\ }{\ }{\ }{\ }    \text{$f(\{\tx_i\})=1 ~\wedge~ \forall~ \tx_i \in \{\tx_i\},$}  \\
			{\ }{\ }{\ }{\ }{\ }{\ }{\ }{\ }{\ }{\ }{\ }{\ }{\ }{\ } \text{$ \exists{\ } \cC[t] \in \cC'$ and P.P.T. computable $\pi_i$ s.t.} \\
			{\ }{\ }{\ }{\ }{\ }{\ }{\ }{\ }{\ }{\ }{\ }{\ }{\ }{\ }{\ }{\ }{\ }{\ }{\ }{\ } \text{$\mathsf{VrfyMTP}(\cC[t].\cR,\cH(\tx_i), \pi_i)=1$}  \\
			False \text{, otherwise}\\
		\end{cases}$$
	\noindent where $f(\{\tx_i\})=1$  captures that   $\{\tx_i\}$ satisfies a certain relationship, 
	e.g., ``the hash of each $\tx_i$ equals a specified identifier $\txid_i$'',   
	or ``each $\tx_i$ can pass the membership test of a given bloom filter'', 
	or ``the overall inflow of $\{\tx_i\}$ is greater than a given value''.
	We   let $\cP^\ell_N$ and $\cQ^\ell_N$  be short for $\cP^\ell(\cC[0:N])$ and $\cQ^\ell(\cC[0:N])$, respectively.

\smallskip
\noindent{\bf Examples of chain predicate.} The seemingly complicated definition of chain predicate actually has rather straightforward intuition  to capture a wide range of  blockchain ``readings'', as for any predicate under this category, either its trueness or its falseness can be succinctly attested by up to $\ell$ transactions' inclusion in the  chain. For $\ell=1$, some concrete examples are:
\begin{itemize}  [leftmargin=0.2in]
	\item ``A certain transaction $\tx$   is included in $\cC[0:N]$'',  the trueness which can be attested by $\tx$'s inclusion in the  chain.
	\item ``A set of transactions $\{\tx_j\}$ are \emph{all} incoming transactions sent to a particular address in $\cC[0:N]$'', the falseness of which can be proven, if $\exists$ a transaction $\tx$ s.t.: (i) $\tx \notin \{\tx_j\}$, (ii) $\tx$ is sent to the certain address, and (iii) $\tx$ is included in the chain $\cC[0:N]$. 
\end{itemize}

\smallskip
\noindent  \emph{Limits}. 
A chain predicate is   a binary question, whose trueness (or falseness) is reducible to the inclusion of some transactions. 
Nevertheless, its actual meaning  depends on how to concretely  specify it.
Intuitively,  a ``meaningful'' chain predicate  might need  certain specifications
  from an external party outside the system.
%
For example, the cashier of a pizza store     can specify a transaction to evaluate its (non)existence, 
only if the customer   tells  the $\txid$.




\smallskip
\noindent{\bf ``Handicapped'' verifiability of chain predicate.}  
%
%
W.lo.g., we will focus on the chain predicate in form of $\cP^\ell_N$, namely, whose trueness is provable instead of the falseness for presentation simplicity. Such the  ``handicapped'' verifiability can be well abstracted through a tuple of two algorithms $(\evaluate, \validate)$:
\begin{itemize}
	\item $\evaluate(\cP^\ell_N) \rightarrow$  $\sigma$ or $\bot$: The algorithm  
	   takes the replica of the blockchain as auxiliary input  and   outputs $\sigma$ or $\bot$, 
	where $\sigma$ is a proof for $\cP^\ell_N=True$, and $\bot$ represents its falseness; note the proof $\sigma$ here includes: a set of transactions $\{\tx_i\}$, a set of Merkle proofs $\{\pi_i\}$, and a set of blocks $\cC'$;
	\item $\validate(\sigma, \cP^\ell_N) \rightarrow 0$ or $1$: This algorithm   
	    takes    $\cHashes$ as auxiliary input  and   outputs 1 ($\accept$) or 0 ($\reject$) depending on whether $\sigma$ is deemed to be a valid proof for  $\cP^\ell_N=True$; note the validation   parses $\sigma$ as $(\{\tx_i\}, \{\pi_i\}, \cC')$  and verifies: (i)   $\cC'$ is included by $\cHashes[t]$ where $t \le N$; (ii) each $\tx_i$ is committed by a block in $\cC'$ due to Merkle proof $\pi_i$; (iii) $f(\{\tx_i\})=1$  where $f(\cdot)$ is the specification of the chain predicate. 
\end{itemize}

\begin{center}
	
	\tikzset{every picture/.style={line width=0.75pt}} 
	
	\hspace{1cm}
	\begin{tikzpicture}[x=0.75pt,y=0.75pt,yscale=-1,xscale=1]
	
	\draw   (80,120) -- (160,120) -- (160,140) -- (80,140) -- cycle ;
	\draw   (230,120) -- (310,120) -- (310,140) -- (230,140) -- cycle ;
	\draw    (120,100) -- (120,118) ;
	\draw [shift={(120,120)}, rotate = 270] [fill={rgb, 255:red, 0; green, 0; blue, 0 }  ][line width=0.08]  [draw opacity=0] (7,-4) -- (0,0) -- (7,4) -- (6,0) -- cycle   ;
	
	\draw    (270,100) -- (270,118) ;
	\draw [shift={(270,120)}, rotate = 270] [fill={rgb, 255:red, 0; green, 0; blue, 0 }  ][line width=0.08]  [draw opacity=0] (7,-4) -- (0,0) -- (7,4) -- (6,0) -- cycle   ;
	
	\draw    (160,124) -- (228,124) ;
	\draw [shift={(230,124)}, rotate = 180] [fill={rgb, 255:red, 0; green, 0; blue, 0 }  ][line width=0.08]  [draw opacity=0] (7,-4) -- (0,0) -- (7,4) -- (6,0) -- cycle   ;
	
	\draw    (160,138) -- (198,138) ;
	\draw [shift={(200,138)}, rotate = 180] [fill={rgb, 255:red, 0; green, 0; blue, 0 }  ][line width=0.08]  [draw opacity=0] (7,-4) -- (0,0) -- (7,4) -- (6,0) -- cycle   ;
	
	\draw    (310,130) -- (328,130) ;
	\draw [shift={(330,130)}, rotate = 180] [fill={rgb, 255:red, 0; green, 0; blue, 0 }  ][line width=0.08]  [draw opacity=0] (7,-4) -- (0,0) -- (7,4) -- (6,0) -- cycle   ;
	
	\draw    (120,160) -- (120,142) ;
	\draw [shift={(120,140)}, rotate = 450] [fill={rgb, 255:red, 0; green, 0; blue, 0 }  ][line width=0.08]  [draw opacity=0] (7,-4) -- (0,0) -- (7,4) -- (6,0) -- cycle   ;
	
	\draw    (270,160) -- (270,142) ;
	\draw [shift={(270,140)}, rotate = 450] [fill={rgb, 255:red, 0; green, 0; blue, 0 }  ][line width=0.08]  [draw opacity=0] (7,-4) -- (0,0) -- (7,4) -- (6,0) -- cycle   ;
	
	\draw    (120,160) -- (270,160) ;

	\draw (120,130) node  [font=\scriptsize]  {$\evaluate$};
	\draw (270,130) node  [font=\scriptsize]  {$\validate$};
	\draw (120,96.5) node  [font=\scriptsize]  {blockchain replica};
	\draw (270.5,96.5) node  [font=\scriptsize]  {hashes of blocks};
	\draw (195,120) node  [font=\scriptsize]  {$true:\sigma $};
	\draw (182,132) node  [font=\scriptsize]  {$false:\bot $};
	\draw (342,126.5) node  [font=\scriptsize]  {$0/1$};
	\draw (197.5,153.5) node  [font=\scriptsize]  {$P^{\ell }_N$};

	\end{tikzpicture}
	
\end{center}

The above algorithms satisfy:
(i) {\em Correctness}. 
For any chain predicate $\cP^\ell_N$, there is
$\Pr[\validate(\evaluate(\cP^\ell_N),$  $\cP^\ell_N)=1 \mid \cP^\ell_N=true]=1$,
and (ii)  {\em Verifiability}. for any P.P.T.  $\hA$ and   $\cP^\ell_N$, there is
$\Pr[\validate( \sigma\leftarrow\hA(\cP^\ell), $ $\cP^\ell_N)=1  \mid \cP^\ell_N=false]\le negl(\lambda)$, 
where $\evaluate$ implicitly takes the blockchain replica as  input, and
$\validate$  implicitly inputs  $\cHashes$.
The abstraction can also be  slightly adapted for $\cQ^\ell_N$ whose falseness is the provable side, though we omit that   for presentation simplicity.
Through the remaining of the paper, $\evaluate$ can be seen as a black-box callable by any full nodes that have the complete replica of the blockchain,
and $\validate$ is a   subroutine that can be invoked by the smart contracts that can access the dictionary $\cHashes$.

%
%

\subsection{System \& adversary model}

The system explicitly consists of a light client, some relay(s) and an arbiter contract.
All of them are computationally bounded to perform only polynomial-time computations. 
The messages between them can deliver synchronously within a-priori known delay $\Delta T$, via point-to-point channels. In details,


\smallskip
{\bf The rational lightweight client} $\mathcal{LW}$ is abstracted as:
\begin{itemize} 
	\item It is {\bf rational} and selfish;
	\item It is computationally bounded, i.e.,  it can only take an     action computable in probabilistic polynomial-time; 
	\item It opts out of  consensus; to capture this, we assume: 
	\begin{itemize}
		\item The client can {\em send} messages to the   contract due to the   network diffusion functionality  \cite{GKL15,KMS16};
		\item The client cannot receive messages from the contract  except a short setup phase,
		which can be done in practice because the client user   can temporarily boost a personal full node by fast-bootstrapping protocols.
	\end{itemize}
\end{itemize}


%
{\bf The rational full node} $\mathcal{R}_i$   is modeled as:
\begin{itemize} 
	\item It is {\bf rational}. Also, the full node $\mathcal{R}_i$ might (or might not) cooperate with another full node $\mathcal{R}_j$. The {\bf (non)-cooperation} of them is specified as:
	\begin{itemize}
		\item The {\bf cooperative} full nodes form a coalition to maximize the total utility, as they can share all information, coordinate all actions  and transfer payoffs, etc. \cite{osborne1994course}; essentially, we  follow  the conventional   notion to view   the  cooperative relays as {\bf a single} party \cite{cryptoeprint:2011:396}.
		\item  {\bf Non-cooperative} full nodes  maximize  their  own utilities independently in a selfish manner due to the  standard   non-cooperative game theory,
		which can be understood as that they are not allowed to choose  some ITMs to communicate with each other \cite{lepinksi2005collusion}; 
	\end{itemize}
	\item It can only take P.P.T. computable actions at any stage of the protocol;
	\item The   full node  runs the  consensus, such that:
	\begin{itemize}
		\item It stores the complete replica of the latest blockchain;
		\item It can send/receive messages to/from the smart  contract;

	\end{itemize}
	\item It can send messages to the light client via an off-chain private channel.\footnote{Such the assumption can be granted if considering the   client and the relays can set up private communication channels  on demand. In practice, this can be done because (i) the   client can ``broadcast'' its   network address via the blockchain \cite{elastico},  or (ii) there is a trusted   name service that tracks the network addresses of the relays.}
\end{itemize}

{\bf The arbiter contract} $\mathcal{G}_{ac}$   follows the standard abstraction of smart contracts \cite{KMS16,KZZ16},
with a few slight extensions. 
First, it would not send any messages to  the light client except during a short setup phase.
Second, it can access a   dictionary $\cHashes$ \cite{blockhash,EIP210}, which contains the hashes of all blocks. 
The latter abstraction allows the contract to invoke $\validate$ to verify the proof attesting the trueness of any predicate $\cP^\ell_N$, in case the predicate is actually true.


%

%

\subsection{Economic factors of our model}

It is necessary to clarify the economic parameters of the {\bf rational} parties to complete our game-theoretic model. We present those economic factors and argue the rationale behind them  as follows:
\begin{itemize}  
	
	\item $c$: It represents how much the client spends to maintain its (personal) trusted full node.
	Note $c$ does not mean the security relies on a trusted full node and  only characterizes the cost of maintaining the trusted full node.
	For example,   $c \rightarrow \infty$  would represent that no available trusted full node. Note $c$ does not characterize the cost of relay to maintain a relaying full node. 
	
	\item $v$: The factor means the ``value'' attached to the chain predicate under query. If the client incorrectly  evaluates the    predicate, it loses $v$.
	%
	%
	For example, the  cashier Alice is evaluating the (non)existence of a certain transaction;
	if Alice believes the existence of a non-existing transaction, she loses the  amount to be transacted;
	if Alice  believes the nonexistence of an existing transaction, her business is   harmed by such the censorship.

	\item $v_i(\cP^\ell_N, \cC) \rightarrow [0,v_i]$: This function characterizes the motivation of the relay $\mathcal{R}_i$  to cheat the light client.
	Namely, it represents the extra (malicious) utility that the relay   $\mathcal{R}_i$ earns, if   fooling the   client to incorrectly evaluate the chain predicate. 
	%
	%
	We explicitly let  $v_i(\cP^\ell_N, \cC)$ have an upper-bound  $v_i$.
	
	\item $\epsilon$: When a party chooses a  strategy (i.e., a  P.P.T. ITM)  to break underlying cryptosystems,
	we let $\epsilon$   represent the expected utility  of such a strategy, where   $\epsilon$ is a negligible function in     cryptographic security parameter   \cite{DHR00}. 
	\item In addition,   all communications and P.P.T. computations can be done costlessly (unless otherwise specified).
\end{itemize}


\subsection{Security goal}

The aim of the light-client protocol in the  game-theoretic model is to allow a rational light client employ some rational relaying full nodes (e.g., two) to correctly {\em evaluate} a few chain predicates, and these recruited full nodes are correctly paid as pre-specified.
%
%
%
%
In details, we require such the  light-client protocol $\Pi_{\mathcal{LW}}$ to satisfy the following \emph{correctness} and \emph{security} properties:
\begin{itemize}  
	\item   \underline{\smash{\em Correctness}}. If all parties are honest, we require: (i) the relay nodes are correctly paid; (ii)
	the light client correctly evaluates some chain predicates under the category of $\cP^\ell(\cdot)$, regarding the chain $\cC[0:T]$ (i.e. the chain
	at the time of evaluating). Both   requirements shall hold with  probability 1.
	
	\item   \underline{\smash{\em Security}}.   We adopt a strong game-theoretic security notion of \emph{sequential equilibrium} \cite{HPS16,DWA17,halpern2019sequential}  for  incomplete-information extensive games. 
	Consider   an extensive-form game $\Gamma$ that models the light-client protocol $\Pi_{\mathcal{LW}}$,
	and let $(\bf Z_{bad},  Z_{good})$ as a partition of the terminal histories $\bf Z$ of the game $\Gamma$.
	Given a  $\epsilon$-\emph{sequential equilibrium} of $\Gamma$ denoted by $\sigma$, the probability of reaching each terminal history $z \in \bf Z$ can be induced, which can be denoted by $\rho(\sigma, z)$.
	Our security goal would require: there is a $\epsilon$-sequential equilibrium $\sigma$ of $\Gamma$ where $\epsilon$ is at most a negligible function in cryptographic security parameter $\lambda$,
	such that under the $\epsilon$-equilibrium $\sigma$, the game $\Gamma$ always terminates in $\bf Z_{good}$.
	%
	
\end{itemize}

\noindent
{\em Remark}. 
The traditional game-theory analysis captures only computationally unbounded players. 
But it becomes natural to consider computationally-bounded players in an interactive protocol using cryptography, so will we do through the paper.
%
%
%
In such the setting, 
a strategy of a party can be a P.P.T. ITM to break the underlying cryptosystems. 
However, this strategy succeeds with only negligible probability. 
Consequently, our security goal (i.e., $\epsilon$-sequential equilibrium) can be refined into a computational variant to state the rational players switch strategies, only if the gain of deviation is non-negligible.


\section{A simple light-client protocol}
\label{protocol}

We carefully design a simple  light-client protocol, in which
a light client ($\mathcal{LW}$) can leverage it to employ  two (or one) relays
to evaluate the chain predicates   $\cP^\ell_N$ (as defined in Section \ref{sec:predicate}).


%

\subsection{Arbiter contract \&   high-level of the protocol}

The simple light-client protocol is centering around an arbiter smart contract $\mathcal{G}_{ac}$ as shown in Fig \ref{fig:contract}.
It begins with letting all parties place their initial deposits in the arbiter   contract $\mathcal{G}_{ac}$.
%
%
Later, the client can ask the relays to forward some readings about the blockchain, and then feeds what it receives  back to the contract.
As such, once the contract hears the feedback from the client, it can leverage the initial deposits to facilitate some proper incentive mechanism, in order to prevent the parties from deviating by rewards and/or punishments, which becomes the crux  of our  protocol.

For security in the rational setting, the incentive mechanism must be powerful enough to   precisely punish misbehaviors (and reward honesty).
Our main principle to realize such the   powerful incentive  is letting the arbiter contract  to learn as much as possible regarding how the protocol is actually executed off-chain, so it can precisely punish and then deter any deviations.

Nevertheless, the contract has ``handicapped'' abilities.
So we have to carefully design the protocol to circumvent its limits, for the convenience of designing the powerful enough incentive mechanism   later.

%
%

\begin{figure}[!]
	\centering

	\fbox{%
		\parbox{0.98\linewidth}{%
			\vspace{-3mm}
			\begin{center}
				{\bf The arbiter contract $\mathcal{G}_{ac}$ for $m$ relays ($m$ = 1 or 2)} 
			\end{center}
			\vspace{-5mm}

			\begin{multicols}{2}
			{\tiny			
				\begin{flushleft}
 
					\begin{itemize}[leftmargin=0.8cm]
						\item[{\bf Init.}] Let $\mathsf{state}:=\mathsf{INIT}$, $\mathsf{deposits}:=\{\}$, $\mathsf{relays}:=\{\}$, $\mathsf{pubKeys}:=\{\}$, $\mathsf{ctr}:=0$, $\mathsf{predicate}:=\emptyset$, $\mathsf{predicate.N}:=0$, $T_{end}:=0$
					\end{itemize}
					\vspace{2mm}
										
					\hrulefill{} {\bf Setup phase} \hrulefill{}
					\vspace{-2mm}
					\begin{itemize}[leftmargin=0.92cm]
						\item[{\bf Create.}]  On receiving the message $(\mathsf{create}, k, p, e, d_{L}, d_{F}, \Delta T)$ from $\mathcal{LW}$:
						\begin{itemize}
							\item[-] assert $\mathsf{state}=\mathsf{INIT}$ and $\cLedger[\mathcal{LW}] \ge \$k\cdot d_{L}$
							\item[-] store $k, p, e, r, d_{L}, d_{F},$ and $\Delta T$ as internal states
							\item[-] $\cLedger[\mathcal{LW}]:=\cLedger[\mathcal{LW}]-\$k\cdot d_{L}$
							\item[-] $\mathsf{ctr}:=k$ and $\mathsf{state}:=\mathsf{CREATED}$ 
							\item[-] send $(\mathsf{deployed}, k, p, e, d_{L}, d_{F}, \Delta T)$ to all
						\end{itemize}
						
						\item[{\bf Join.}] On receiving $(\mathsf{join}, pk_i)$ from $\mathcal{R}_i$ for first time:
						\begin{itemize}
							\item[-] assert $\mathsf{state}=\mathsf{CREATED}$ and $\cLedger[\mathcal{R}_i] \ge \$k\cdot d_{F}$
							\item[-] $\cLedger[\mathcal{R}_i]:=\cLedger[\mathcal{R}_i]-\$k\cdot d_{F}$
							\item[-] $\mathsf{pubKeys} := \mathsf{pubKeys} \cup (\mathcal{R}_i,pk_i)$
							\item[-] $\mathsf{state}:=\mathsf{READY}$, if $|\mathsf{pubKeys}|=m$
						\end{itemize}
					\end{itemize}
					
					\columnbreak
					
					\vspace{-2mm}
					\hrulefill{} {\bf Queries phase} \hrulefill{}
					\vspace{-2mm}
					\begin{itemize}[leftmargin=1.2cm]
						
						\item[{\bf Request.}]  On receiving $(\mathsf{request}, \cP^\ell)$ from $\mathcal{LW}$:
						\begin{itemize}
							\item[-] assert $\mathsf{state}=\mathsf{READY}$ and $\cLedger[\mathcal{LW}] \ge \$(p+e)$
							\item[-] $\cLedger[\mathcal{LW}]:=\cLedger[\mathcal{LW}]-\$(p+e)$
							\item[-] $\mathsf{predicate} := \cP^\ell_T$   {\color{Gray} //Note  $T$ is the current chain height}
							\item[-] $T_{end}:=T+\Delta T$
							\item[-] send $(\mathsf{quering}, \mathsf{ctr},   \mathsf{predicate})$ to each full node registered in $\mathsf{pubKeys}$
							\item[-] $\mathsf{state}:=\mathsf{QUERYING}$
						\end{itemize}
						
						\item[{\bf Feedback.}] On receiving $(\mathsf{feedback}$, {\color{TealBlue}$\mathsf{responses}$}$)$ from $\mathcal{LW}$ for first time:
						
						\begin{itemize}
							\item[-] assert $\mathsf{state}=\mathsf{QUERYING}$
							\item[-]  store {\color{TealBlue}$\mathsf{responses}$} for the current $\mathsf{ctr}$  
							
							



						\end{itemize}
						
						\item[{\bf Timer.}] Upon $T \geq  T_{end}$ and $\mathsf{state}:=\mathsf{QUERYING}$:
						\begin{itemize}
							\item[-] call {\color{BrickRed}$\mathsf{Incentive}$}$(${\color{TealBlue}$\mathsf{responses}$}$, \mathsf{predicate})$ subroutine
							\item[-] let $\mathsf{ctr}:=\mathsf{ctr}-1$
							\item[-] if $\mathsf{ctr} > 0$ then  $\mathsf{state}:=\mathsf{READY}$
							\item[-] else $\mathsf{state}:=\mathsf{EXPIRED}$
						\end{itemize}
						
					\end{itemize}

				\end{flushleft}
			}
			\end{multicols}
			\vspace{-3mm}
			
		}%
	}
	\caption{The   contract $\mathcal{G}_{ac}$ by     pseudocode notations  in \cite{KMS16}. 
		The {\color{BrickRed}$\mathsf{Incentive}$} subroutine is decoupled from the protocol and will be presented separately in the later section.
	}\label{fig:contract}
	
\end{figure}

First,  the contract $\mathcal{G}_{ac}$ does not know what the relay nodes forward to  the light client off-chain.
So the contract $\mathcal{G}_{ac}$ has to rely on the   client to know what the relays did.
%
At the first glance, the   client might cheat the contract, by claiming that it receives nothing from the relays or even forging the relays' messages, in order to avoid paying.
To deal with the issue, 
we require that:
(i) the relays authenticate what they forward to the client by digital signatures,
so the contract later can verify whether a message was originally sent from the relays, by checking the attached signatures;
(ii) the contract requires the light client to deposit an amount of $\$e$ for each query, which is returned to the   client, only if the client reports some forwarded blockchain readings signed by the relays.

Second, the contract has a ``handicapped'' verifiability, which allows it to efficiently verify a claim of $\cP^\ell_N=True$, if being give a succinct proof
$\sigma$. To leverage the property, the protocol is designed to let the relays  attach the corresponding  proof $\sigma$ whenever claiming the provable trueness.
Again, such the design is a simple yet still useful way to allow the contract ``learn'' more about the protocol execution, which later allows us to design powerful incentive mechanisms to precisely punish deviations.



\subsection{The light-client protocol}

In the presence of the contract $\mathcal{G}_{ac}$, our light-client protocol can be formally described as Fig  \ref{fig:protocol}. 
To make an oversimplified summary, it first comes with a one-time setup phase, during which the relay(s) and client make initial deposits, which later can be leveraged by the incentive mechanism to fine tune the payoffs. 
Then, the   client can  work independently and  request the relays to  evaluate a few chain predicates up to $k$ times, repeatedly. Since the payoffs are well adjusted, ``following the protocol'' becomes the rational choice of everyone in each query.

\begin{figure}[!]
	\centering
	
	\fbox{%
		\parbox{.98\linewidth}{%
			\vspace{-3mm}
			\begin{center}
				{\bf The light-client protocol $\Pi_{\mathcal{LW}}$ (where are $m$ relays)}
			\end{center}
			\vspace{-6mm}
			
			\begin{multicols}{2}
			{\tiny						
				\begin{flushleft}
					
					\vspace{-2mm}
					\hrulefill{} {\bf Setup phase} \hrulefill{}
					\vspace{-2mm}
					\begin{itemize}[leftmargin=0.3cm]
						
						\item Protocol for the light client $\mathcal{LW}$: 
						
						\begin{itemize}[leftmargin=1.0cm]
							\item[{\bf Create.}]  On  instantiating a protocol instance:
							\begin{itemize}[leftmargin=0.5cm]
								\item[-] decide $k, p, e,  d_{L}, d_{F},  \Delta T$ and let $\mathsf{ctr}_{lw}:=k$
								\item[-] send $(\mathsf{create}, k, p, e,  d_{L}, d_{F}, \Delta T)$ to $\mathcal{G}_{ac}$
							\end{itemize}
							
							\item[{\bf Off-line.}]  On receiving   $(\mathsf{initialized}, \mathsf{pubKeys})$ from $\mathcal{G}_{ac}$:
							\begin{itemize}[leftmargin=0.5cm]
								\item[-] record $\mathsf{pubKeys}$ and disconnect   the trusted full node
							\end{itemize}
						\end{itemize}
						
						
						\vspace{-2mm}
						\hspace{-0.4cm}\dotfill{}
						
						\item Protocol for the relay $\mathcal{R}_i$:
						\begin{itemize}[leftmargin=1.0cm]
							\item[{\bf Join.}]  On receiving   $(\mathsf{deployed}, k, p, e, d_{L}, d_{F}, \Delta T)$ from $\mathcal{G}_{ac}$:
							\begin{itemize}[leftmargin=0.5cm]
								\item[-] generate a key pair $(sk_i,pk_i)$ for signature scheme
								\item[-] send $(\mathsf{join}, pk_i)$ to  $\mathcal{G}_{ac}$
							\end{itemize}
							
						\end{itemize}
						
					\end{itemize}
					
					\vspace{-2mm}
					\hrulefill{} {\bf Queries phase} \hrulefill{}
					\vspace{-2mm}
					\begin{itemize}[leftmargin=0.3cm]
						
						\item Protocol for the light client $\mathcal{LW}$: 	
						
						\begin{itemize}[leftmargin=1.0cm]
							\item[{\bf Request.}]  On receiving a message (from the higher level app) to evaluate the predicate $\cP^\ell$:
							\begin{itemize} [leftmargin=0.5cm]
								\item[-] $T_{feed}:=T+2\Delta T$, and send $(\mathsf{request}, \cP^\ell)$ to $\mathcal{G}_{ac}$
							\end{itemize}
							
							\item[{\bf Evaluate.}]  On receiving  $(\mathsf{response}, \mathsf{ctr}_i, \mathsf{result}_i, \mathsf{sig}_i)$ from the relay $\mathcal{R}_{i}$:
							\begin{itemize}[leftmargin=0.5cm]
								\item[-] assert $T \le T_{feed}$ and $\mathsf{ctr}_i=\mathsf{ctr}_{lw}$
								\item[-] assert $\vrfy(\langle \mathsf{result}_i,\mathsf{ctr} \rangle, \mathsf{sig}_i, pk_i)=1$
								\item[-] $\mathsf{responses}:=\mathsf{responses} \cup (\mathsf{result}_i, \mathsf{sig}_i)$
								\item[-] if $|${\color{TealBlue}$\mathsf{responses}$}$|=m$ then
								\begin{itemize}[leftmargin=0.4cm]
									\item[] output $b \in \{True, False\}$, if  {\color{TealBlue}$\mathsf{responses}$}   claim $b$
								\end{itemize}
							\end{itemize}
							
							\item[{\bf Feedback.}]  Upon the global clock $T = T_{feed}$:
							\begin{itemize} [leftmargin=0.5cm]
								\item[-] $\mathsf{ctr}_{lw}:=\mathsf{ctr}_{lw}-1$
								\item[-] send $(\mathsf{feedback}$, {\color{TealBlue}$\mathsf{responses}$}$)$ to $\mathcal{G}_{ac}$
								
							\end{itemize}
							
						\end{itemize}					
						
						\vspace{-2mm}
						\hspace{-0.4cm}\dotfill{}
						
						\item Protocol for the relay $\mathcal{R}_i$:
						
						\begin{itemize}[leftmargin=1.0cm]
							\item[{\bf Respond.}]  On receiving  $(\mathsf{quering}, \mathsf{ctr}, \cP^\ell_N)$ from $\mathcal{G}_{ac}$:
							\begin{itemize}[leftmargin=0.5cm]
								\item[-] $\mathsf{result}_i := \evaluate(\cP^\ell_N)$
								\item[-] $\mathsf{sig}_i := \mathsf{sign}(\langle \mathsf{result}_i,\mathsf{ctr} \rangle, sk_i)$
								\item[-] send $(\mathsf{response}, \mathsf{ctr}, \mathsf{result}_i, \mathsf{sig}_i)$ to $\mathcal{LW}$
							\end{itemize}
						\end{itemize}

					\end{itemize}
					
				\end{flushleft}	
			}
			\end{multicols}
			\vspace{-3mm}
		}
	}
	\caption{The light-client protocol $\Pi_{\mathcal{LW}}$ among  honest relay  node(s) and the light  client.}	\label{fig:protocol}
\end{figure}

	\smallskip
	\noindent{\bf Setup phase}.
	As shown in Fig \ref{fig:protocol}, the user of a lightweight client $\mathcal{LW}$ connects to a trusted full node in the setup phase,
	and announces an ``arbiter'' smart contract $\mathcal{G}_{ac}$.
	After the contract $\mathcal{G}_{ac}$ is deployed, 
	some relay full nodes (e.g. one or two) 
	are recruited to join the protocol by depositing an amount of $\$k \cdot d_F$ in the contract. 
	The public keys of the relay(s) are also recorded by contract $\mathcal{G}_{ac}$.
	
	Once the setup phase is done,  each relay full node places the initial deposits $\$k\cdot d_F$ and the light client deposit $\$k\cdot d_L$, which will be used to deter their deviations from the protocol. 
	At the same time, $\mathcal{LW}$ records the public keys of the relay(s), 
	and then disconnects the trusted full node to work independently.

	
	In practice, 
	the setup can be done by using
	many fast bootstrap methods \cite{mimblewimble,vault,KMZ17}, which allows
	the   user to efficiently launch a personal trusted full node in the PC. 
	So the light client (e.g. a smart-phone) can connect to the PC to sync.
	%
	Remark that, besides the cryptographic security parameter $\lambda$, the protocol is specified with some other    parameters:
	\begin{itemize} 
		\item $k$: 		
		The protocol is expired, after the   client   requests the relay(s) to evaluate some chain predicates for   $k$ times. 
		\item $k \cdot d_L$: This is the   deposit placed by the   client to initialize the protocol.
		\item $k \cdot d_F$: The initial deposit of a full node to join the protocol   as a relay node. 
		\item $p$: Later in each query, 
		the   client shall place this amount to cover the well-deserved payment of the relay(s). 
		\item $e$: Later in each query, the   client shall place this   deposit $e$ in addition to $p$. 

	\end{itemize}

	\noindent{\bf Repeatable query phase}.
	Once the setup is done, $\mathcal{LW}$   disconnects the trusted full node, and can ask  the relay(s) to query some chain predicates repeatedly. 
	During the queries, $\mathcal{LW}$ can   message  the arbiter contract, but cannot read the internal states of $\mathcal{G}_{ac}$.
	Informally,  each query proceeds as:
	\begin{enumerate}
		
		\item \emph{Request}. In each query, $\mathcal{LW}$ firstly sends a $\mathsf{request}$ message to the contract $\mathcal{G}_{ac}$, which encapsulates detailed specifications of a chain predicate $\cP^\ell(\cdot)$, along
		with a deposit denoted by $\$(p+e)$, where $\$p$ is the promised payment and $\$e$ is a deposit refundable only when $\mathcal{LW}$ reports what it receives from the relays. 
		%
		
		\noindent
		Once   $\mathcal{G}_{ac}$ receives the $\mathsf{request}$ message from $\mathcal{LW}$, $\mathcal{G}_{ac}$ further parameterizes the chain predicate $\cP^\ell$ as $\cP^\ell_N$, where $N \leftarrow T$ represents the current global time (i.e. the latest blockchain height). 
		%

		\item \emph{Response}. 
		The the relay full node(s)
		can learn the predicate $\cP^\ell_N$ under query (whose ground truth is fixed since $N$ is fixed and would not be flipped with the growth of the global timer $T$), and the settlement of the deposit $\$(p+e)$ by reading the arbiter contract.
		
		\noindent
		Then, the relay   node can evaluate the   predicate $\cP^\ell_N$ with  using its local blockchain replica as auxiliary input. 
		When $\cP^\ell_N=True$,  the relay   node shall send the client a $\mathsf{response}$ message including a proof $\sigma$ for trueness,
		which can be verified by the arbiter contract but not the light client;
		%
		in case $\cP^\ell_N=False$, the ``honest'' full node shall reply to the light client with a $\mathsf{response}$ message including $\bot$.
		
		\noindent
		In addition, when the relay sends a  $\mathsf{response}$ message to $\mathcal{LW}$ off-chain,\footnote{\label{note:offchain} Note that we assume the off-chain communication can be established on demand in the paper,
			which in practice can be done through a name service or ``broadcasting'' encrypted network addresses through the blockchain \cite{elastico}.}
		it also authenticates the message  by attaching its signature (which is also bounded to an increasing only counter to prevent replaying).

		\item \emph{Evaluate \& Feedback}.
		Upon receiving a $\mathsf{response}$  message    from the relay $\mathcal{R}_i$, $\mathcal{LW}$ firstly verifies that it is   authenticated by a valid signature $\mathsf{sig}_i$. If $\mathsf{sig}_i$ is valid,
		$\mathcal{LW}$ parses the $\mathsf{response}$ message to check whether $\mathcal{R}_i$ claims $\cP^\ell_N=True$ or $\cP^\ell_N=False$.
		If receiving consistent $\mathsf{response}$ message(s) from all recruited relay(s), the light client decides this consistently claimed $True$/$False$. 
		%
		%
		
		\noindent
		Then the   client     sends a $\mathsf{feedback}$ message to the    contract $\mathcal{G}_{ac}$ with containing
		these signed $\mathsf{response}$ message(s).
		
		\noindent
		Remark   we do  \emph{not} assume  the   client   follows the protocol to output and feeds back to the   contract. Instead, we  focus on proving ``following the protocol to decide an output'' is the sequential rational strategy of the   client.
		
		\item \emph{Payout}. Upon receiving the $\mathsf{feedback}$ message sent from the light client,  
		the contract $\mathcal{G}_{ac}$ shall invoke the {\color{BrickRed}$\mathsf{Incentive}$} subroutine to facilitate some payoffs.
		
		\noindent
		Functionality-wise, the payoff rules of the incentive subroutine would punish and/or reward the relay node(s) and the light client,
		such that none of them would deviate from the   protocol.
		

	\end{enumerate}


\ignore{

\subsection{Remarks on the protocol}

We make the following remarks on the light-client  protocol.

\yuan{need rephrase for these remarks}
{\em Remarks on utilities}. When the light client outputs an incorrect chain reading, an amount of $v$ is subtracted from the utility of the light client (in additional to other base costs), as the fooled light client loses the ``value'' attached to the chain predicate; on the other side, the two relays might get an additional payoff of $v$ (in total). When the light client does not output, an amount of $c$ is subtracted from the light client, where $c$ can be understood as the cost of the light client to read the blockchain by itself (e.g., fast bootstrap a full node).

{\em Remarks on computational cost}.
The computational cost of the light client to ``read'' the blockchain is constant and essentially very small, as it only need:
(i) to store two public keys, (ii) to send
the blockchain two transactions in each query, (iii) to instantiate two secure
channels with two relays to receive query results, (iv) to verify
two signatures and to compute a few hashes to verify Merkle
tree proof in the worst case.
The workload of the on-chain arbiter contract is also tiny, since it only needs: (i) verify a couple of digital signatures; (ii) compute a few hash equalities in the worst case.

{\em Remarks on latency}. Though it is rather safe to take the synchronous network diffuse as an assumption \cite{GKL15},
a valid concern is that the blockchain network might propagate the messages  too slowly and subsequently incurs considerable delays. Here we make the following remarks on the reliefs of this worry.

}

\smallskip
\noindent
\emph{Remark on correctness}. It is immediate to see the correctness: when all parties are honest, the relay(s) receive the payment pre-specified  due to incentive mechanism in the contract, and the client always outputs the ground truth of chain predicate.

\smallskip
\noindent
\emph{Remark on security}. The security   would depend on the payoffs clauses facilitated by the incentive subroutine, which will be elaborated in later subsections as we intentionally decouple the protocol and the incentive design. Intuitively, if the incentive subroutines does nothing, there is no security to any extent; since following the protocol is not any variant of equilibrium. Thus, the incentive mechanism must be carefully designed to finely tune the payoffs, in order to 
make the sequential equilibrium to be following the protocol.

\section{Adding incentives for security}\label{incentive}

Without a proper {\color{BrickRed}$\mathsf{incentive}$} subroutine,  
our simple light-client protocol   is seemingly insecure to any extent,
considering at least the relay nodes are well motivated to  cheat the client.
So this section   formally treats the light-client protocol as an extensive game, and then studies on how to squeeze most out of the ``handicapped'' abilities of the arbiter contract to
design proper incentives, such that the utility function of the game can be well adjusted to deter any party from  deviating at any stage of the protocol's extensive game.

\subsection{Challenges of designing incentives}

The main challenge  of designing proper incentives to prevent the parties from deviating is the ``handicapped'' abilities of the arbiter contract $\mathcal{G}_{ac}$: there is no proof for a claim of $\cP^\ell_N=False$, so $\mathcal{G}_{ac}$ cannot directly catch a liar who claims bogus $\cP^\ell_N=False$.
We conquer the above issue in the rational setting, 
by allowing the  contract $\mathcal{G}_{ac}$ to  believe    unverifiable claims  are correctly forwarded by rational relay(s), even if no cryptographic proofs for them.
Our solution centers around  the fact:
if a claim of $\cP^\ell_N=False$ is actually fake, 
there shall exist a succinct cryptographic proof for $\cP^\ell_N=True$, which can falsify the bogus claim of $\cP^\ell_N=False$.
As such, we derive the basic principles of designing proper incentives in different scenarios:
\begin{itemize}
	\item When there are two non-cooperative relays, we create an incentive to leverage   non-cooperative parties to audit each other, 
	so sending   fake   $\cP^\ell_N=False$ become {\em irrational} and would not happen.
	\item When there is only one relay node (which models that there are no non-cooperative relays at all), we somehow try an incentive design to let the full node ``audit'' itself,
	which means: the relay would get a higher payment, as long as it presents a verifaible claim instead of an unverifiable claim. So the relay is somehow motivated to ``audit'' itself.
\end{itemize}


\ignore{
which hints us: if there are two non-cooperative relay(s), we can leverage the selfishness of them to ``audit'' each other. 
To this end, the incentive mechanism of the contract can proceed as: if a relay is dishonestly claiming $\cP^\ell_N=False$, 
the other relay ``sees'' a chance to earn a higher payoff if being honest, 
i.e., sending a proof attesting $\cP^\ell_N=True$ to falsify the cheating claim.
So $\mathcal{G}_{ac}$ ``obverses'' the dishonesty of the cheating relay
and then punishes accordingly.

Finally, we deter the relay(s) from  deviating by forming a coalition without any TTP. 
At first glance, the deterrence of cooperative relay(s) seems rather challenging, 
because these   relay(s) can collude to agree on ``always reporting  $\cP^\ell_N=False$ to the light client'' to earn more, because the arbiter contract has no ability to tell the (in)correctness of such unverifiable cheating claims, when the relays are coordinated to do so. So the relay(s) can get paid, even if deviating from the protocol to fool the light client.
To deter the rational coalition from sending such bogus, we develop a couple of neat solutions by mechanism design: 
\begin{itemize}
	\item The first tunning of incentive lets the coalition of relay(s) to get a little bit higher payoffs, when it reports a correct proof for $True$ instead of claiming $False$. 
	Thus, as long as the coalition of relay(s) cannot earn too much through fooling the client to believe bogus $False$, 
	the coalition's deviation from the protocol is naturally deterred.
	\item The next adjustment works in a more straightforward way by assuming   a rational  public full node in the blockchain network who does no cooperate with the recruited coalition of recruited relay(s),
	so this selfish {\em public} full node can keep an eye on the internal states of the arbiter contract with merely negligible cost in some security parameter, 
	and thus can be incentivized to audit the coalition  to prevent it from deviating  with except negligible probability.
\end{itemize}
}

\subsection{``Light-client game'' of the  protocol}

Here we present the structure of ``light-client game'' for the   simple light-client protocol presented in the earlier section. We would showcase how the extensive game does capture (i) all polynomial-time computable strategies and (ii) the incomplete information received during the course of the protocol. 

\begin{figure}
	\centering
	\includegraphics[width=8.6cm]{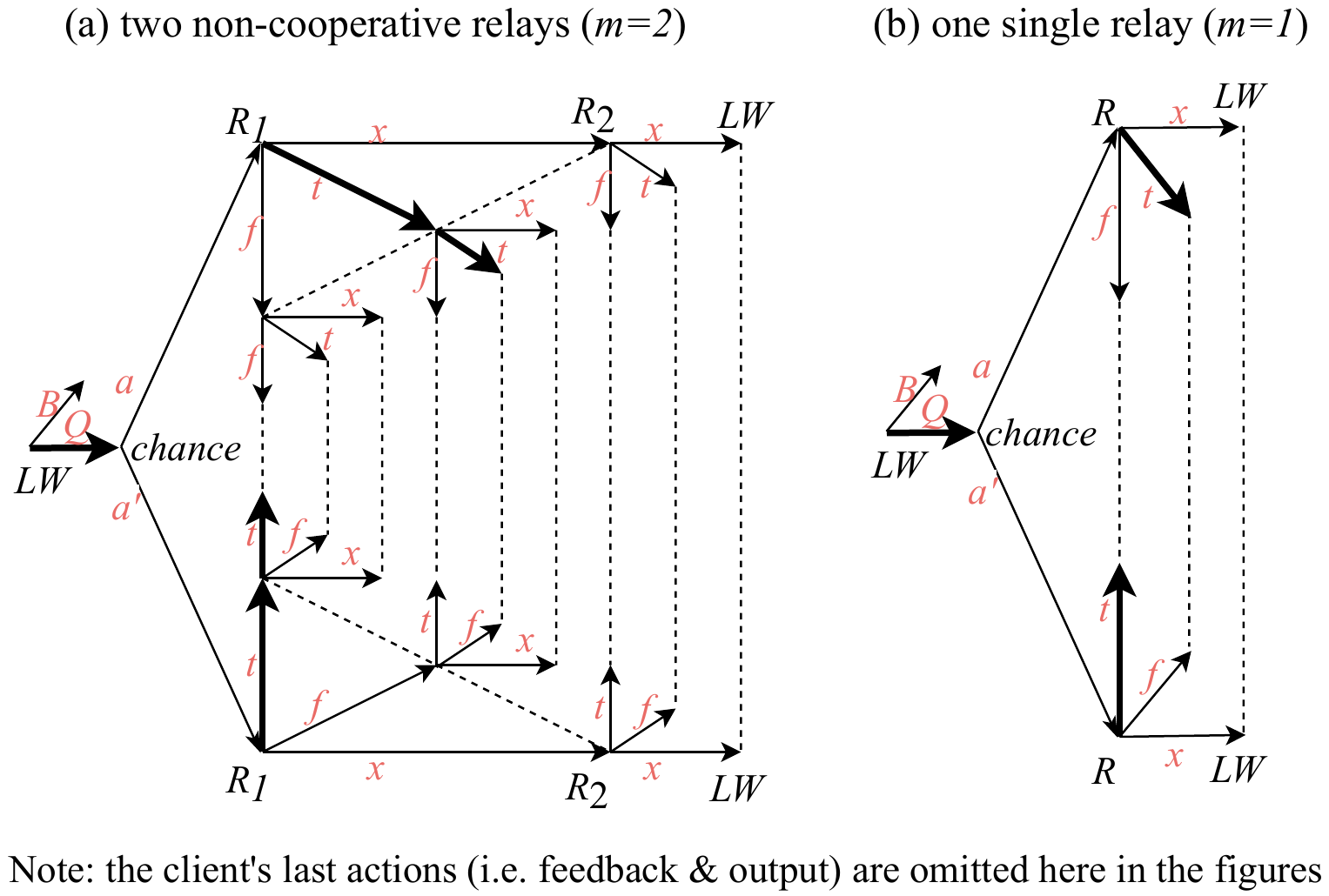}
	\caption{The repetition structure of the light-client game in one  query: (a) two non-cooperative relays   (i.e. $\Gamma_2$); (b) one single relay (i.e. $\Gamma_1$). The last actions of the client are not shown for presentation simplicity.}
	\label{fig:game}
	\vspace{-3mm}
\end{figure}

\smallskip
\noindent
{\bf Game structure for two relays}.
For the case of recruiting two (non-cooperative) relays, 
we denote the ``light-client'' game as $\Gamma^k_2$.
It has a repetition structure (i.e., a stage  game $\Gamma_2$) that can be repeated up to $k$ times as shown in Fig \ref{fig:game} (a),
since the client can raise queries for up to $k$ times in the protocol.
%
%
More precisely, for each query, the protocol proceeds as the following incomplete-information extensive stage game $\Gamma_2$:
\begin{enumerate} [leftmargin=0.55cm]
	\item \underline{\smash{\em Client makes a query}}. The client moves, with two optional actions $Q$ and $B$.
	$Q$ denotes ``sending a request message to query'', and   $B$ denotes ``others'' (including abort). 
	The game only proceeds when the light client acts $Q$.
	%

	\item \underline{\smash{\em Chance chooses the truth}}. At  the history $Q$, the special player ``$chance$'' moves, with two possible actions $a$ and $a'$.
	
	\noindent
	Let  $a$ represent $\cP^\ell_N=True$,  and   $a'$  for $\cP^\ell_N=False$. 
	The occurrence of   $a$ and $a'$ follows an arbitrary distribution $[\rho,1-\rho]$.
	Note  the action of $chance$ can be observed by the relay full nodes but not the   client.
	
	\item \underline{\smash{\em Relay responds}}. At   histories   $Q(a|a')$, \footnote{Remark that we are using standard regular expressions to denote the histories and information set. For example, $Q(a|a')$ represents $\{Qa, Qa'\}$}
	the relay node $\mathcal{R}_1$ acts, with three available actions $\{t,f,x\}$:
	\begin{itemize} 
		\item The action $t$ means   $\mathcal{R}_1$ forwards the   ground truth of $\cP^\ell_N$ to $\mathcal{LW}$ (with attaching correct ``proofs'' if there are any).\footnote{There exists another strategy to claim the truth of the predicate when the predicate is indeed true, but with invalid proof. This strategy is strictly dominated and would not be adopted at all, since it neither fools the client, nor get through the verification of contract to get any reward. We therefore omit it.}
		%
		\item The action $f$ represents that $\mathcal{R}_1$ forwards the opposite of the ground truth of chain predicate to $\mathcal{LW}$.
		%
		\item The action $x$ means as others, including abort and some attempts to break the cryptographic primitives.
	\end{itemize}
	
	\item  \underline{\smash{\em The other relay responds}}. At   histories   $Q(a|a')(t|f|x)$, it is the turn of $\mathcal{R}_2$ to move. 
	Since  $\mathcal{R}_1$ and $\mathcal{R}_2$  are non-cooperative,     histories $Qa(t|f|x)$ make of an information set of $\mathcal{R}_2$ denoted by $I_1$, and similarly, $Qa'(t|f|x)$ is another information set $I_2$.
	At either $I_1$ or $I_2$, $\mathcal{R}_2$  has three actions $\{t,f,x\}$, which can be understood as same as the actions of $\mathcal{R}_1$ at  $Q(a|a')$, since $\mathcal{R}_1$ and $\mathcal{R}_2$ are exchangeable notations.
	
	\item \underline{\smash{\em Client feeds back and outputs}}. Then the game  $\Gamma_2$ reaches the histories   $Q(a|a')(t|f|x)(t|f|x)$. As shown in Fig \ref{fig:game}, the   client $\mathcal{LW}$ is  facing nine information sets\footnote{Remark that the histories $Qatt$ and $Qa'ff$ cannot be distinguished by the light client, because for the light client, both of them correspond that two claims of $True$. All the nine information sets of the light client can be translated similarly.}:
		  $I^{\mathcal{LW}}_1=Q(att|a'ff)$, $I^{\mathcal{LW}}_2=Q(atf|a'ft)$,  $I^{\mathcal{LW}}_3=Q(at|a'f)x$,   
		  $I^{\mathcal{LW}}_4=Q(aft|a'tf)$,
		  $I^{\mathcal{LW}}_5=Q(aff|a'tt)$, $I^{\mathcal{LW}}_6=Q(af|a't)x$,
		  $I^{\mathcal{LW}}_7=Q(axt|a'xf)$, $I^{\mathcal{LW}}_8=Q(axf|a'xt)$, 
	 $I^{\mathcal{LW}}_9=Q(a|a')xx$.
		%
	\noindent
	At these information sets, the light client shall choose a probabilistic polynomial-time ITM to: (i)  send a $\mathsf{feedback}$ message back to the contract, and (ii) decide an output. 
	So the available actions of the client at each information set  can be interpreted as follows:
	\begin{itemize} 
		\item From $I^{\mathcal{LW}}_1$ to $I^{\mathcal{LW}}_5$, the client 
		receives two $\mathsf{response}$ messages from both relays in time, and it
		can take an action out of $\{T,L,R,X\}\times\{A, A', O\}$:
		$T$ means to report the arbiter contract $\mathcal{G}_{ac}$ both of the $\mathsf{response}$;
		$L$ (or $R$) represents that $\mathcal{LW}$ reports to the contract $\mathcal{G}_{ac}$ only one $\mathsf{response}$ message
		sent from $\mathcal{R}_1$ (or $\mathcal{R}_2$);
		$X$ represents others, including abort; 
		$A$ means to output $True$; $A'$ is to output $False$; 
		$O$ denotes to output nothing. 
		\item Through $I^{\mathcal{LW}}_3$ to $I^{\mathcal{LW}}_8$, the client $\mathcal{LW}$ 
		receives only one $\mathsf{response}$ message from $\mathcal{R}_1$ (or $\mathcal{R}_2$), and 
		can take an action out of $\{T,X\}\times\{A, A',O\}$:
		$T$ means to report the contract $\mathcal{G}_{ac}$ the only $\mathsf{response}$ message that it receives;
		$X$ means to do others, including abort; 
		$A$, $A'$ and $O$ have the same concrete meaning as before. 
		\item At $I^{\mathcal{LW}}_9$, $\mathcal{LW}$ receives nothing from the relays in time, it can take an action out of $\{T, X\} \times \{A, A', O\}$: $T$ can be translated as to send the contract nothing until the contract times out,
		$X$ represents others (for example, trying to crack digital signature scheme); 
		$A$, $A'$ and $O$ still have the same meaning as before. 
	\end{itemize}
	
\end{enumerate}

After all above actions are made, the protocol completes one query, and can go to the next query, as long as it is not expired or the client does not abort.
So the protocol's game $\Gamma^k_2$ (capturing all $k$ queries) can be inductively defined by repeating the above structure up to $k$ times.


\smallskip
\noindent
{\bf Game structure for one relay}.
For the case of recruiting only one relay to request up to $k$ queries, 
we denote the protocol's ``light-client'' game as $\Gamma^k_1$.
As shown in Fig \ref{fig:game} (b), it has a  repetition structure (i.e. the stage game $\Gamma_1$) similar to the game $\Gamma_2$, 
except few differences related to the information sets and available actions of the light client.
In particular,
when the client receives response from the only relay, 
it would face three information sets, namely, $Q(at|a'f)$, $Q(af|a't)$ and $Q(ax|a'x)$, instead of nine in $\Gamma_2$.
At each information set, the client always can take an action out of $\{T,X\}\times\{A, A', O\}$, 
which has the same interpretation in the game  $\Gamma_2$.
Since  $\Gamma_1$ is extremely similar to $\Gamma_2$, we omit such details here.

%

\medskip
\noindent
{\bf What if no incentive?} 
If the arbiter contract facilitates   no incentive,
the possible execution result   of the protocol can be concretely interpreted due to the economic aspects of our model as follows:
\begin{itemize}
	
	\item {\em When the client is fooled}. The client loses $\$v$, and the relay $\mathcal{R}_i$ earns  $\$v_i$. Note $\$v$ is related to the value attached to the chain predicate under query, say the transacted amount, due to our economic model; and $\$v_i$ is the malicious benefit earned by $\mathcal{R}_i$ if the client if fooled.
	\item {\em When the client outputs the ground truth}. The relay would not earn any malicious benefit, and the client would not lose any value attached to the chain predicate either.
	\item {\em When the client outputs nothing}. The relay would lose $\$c$, which means it will launch its own (personal) full node to query the chain predicate. In such case, the relay would not learn any malicious benefit. 
\end{itemize}

It is clear to see that without proper incentives to tune the above outcomes, the game cannot reach a desired equilibrium to let all parties follow the protocol, because at least the relays are well motivated to  cheat the client.
Thus we leverage the   deposits placed by the client and relay(s) to design simple yet still useful incentives in   next subsections, 
such that we can fine tune   the above outcome to realize a utility function obtaining desired equilibrium,   thus achieving security.

\subsection{Basic {\color{BrickRed}$\mathsf{incentive}$} mechanism}


%
The incentive subroutine   takes the  $\mathsf{feedback}$ message sent from the client as input, and then facilitates rewards/punishments accordingly.
After that, the       utility function of the ``light-client game'' is supposed to be well tuned  to ensure security.
Here we will present such the carefully designed incentive subroutine, 
and   analyze  the incentive makes the  ``light-client game'' secure to what extent.

\smallskip
\noindent{\bf Basic {\color{BrickRed}$\mathsf{incentive}$} for two   relays}.
If two non-cooperative relays can be recruited,
the incentive subroutine   takes the  $\mathsf{feedback}$ message   from the client as input,
and then facilitates the  incentives  following   hereunder general principles:

\begin{itemize} 
	\item It firstly  verifies whether   the $\mathsf{feedback}$ from the light client indeed encapsulates some $\mathsf{response}$s that were originally sent from     $\mathcal{R}_1$ and/or $\mathcal{R}_2$ (w.r.t. the current chain predicate under query).
	If $\mathsf{feedback}$ contains two validly signed $\mathsf{response}$s, return $\$e$ to the client;
	If $\mathsf{feedback}$ contain one validly  signed $\mathsf{response}$, return $\$e/2$ to the client.

	
	\item If a relay claims $\cP^\ell_N=True$  with attaching an invalid proof $\sigma$,
	its deposit for this query (i.e. $\$d_F$) is confiscated and would not receive any payment.

	\item When a relay sends a $\mathsf{response}$ message containing $\bot$ to claim $\cP^\ell_N=False$, there is no succinct proof attesting the claim.
	The incentive subroutine checks whether the other relay full node provides a proof attesting $\cP^\ell_N=True$. If the other relay proves $\cP^\ell_N=True$, the cheating relay loses its deposit this query (i.e. $\$d_F$) and would not receive any payment.
	%
	%
	For the other relay that falsifies the cheating claim of $\cP^\ell_N=False$, the incentive subroutine assigns it some extra bonuses (e.g. doubled payment).
	%
	

	\item After each query, if the contract does not notice a full node is misbehaving (i.e., no fake proof for truthness or fake claim of falseness), it would pay the node $\$p/2$ as the basic reward (for the honest full node). In addition, the contract returns a portion of the client's initial deposit (i.e. $\$d_L$). Moreover, the contract returns a portion of each relay's initial deposit (i.e. $\$d_F$), 
	if the incentive subroutine does not observe the relay cheats during this query. 
	
\end{itemize}

The rationale behind the   incentive design is straightforward.
First, during any   query, the rational light client will
always report  to the contract whatever the relays actually forward, since the failure of doing so always causes strictly less
utility, no matter the strategy of the relay full nodes;
Second, since the two relay full nodes are non-cooperative, 
they would be incentivized to audit each other, such that the attempt of cheating the client is deterred.
To demonstrate   above general reward/punishment  principles of the incentive mechanism are implementable, we concretely instantiate its  pseudocode  that are deferred to Appendix \ref{incentive:2}.

\smallskip
\noindent{\bf Basic {\color{BrickRed}$\mathsf{incentive}$} for one   relay}.
When any two recruited relays might collude, the situation turns to be pessimistic,
as the light client is now requesting an unknown information from only a single distrustful coalition.
To argue security in such the pessimistic case, we consider  only one relay  in the protocol. 
%
To deal with the pessimistic case, 
we tune  the incentive subroutine by incorporating the next major tuning (different from the the incentive for two relays):
\begin{itemize}
	\item If the relay claims $\cP^\ell_N=False$, its deposit is returned, but it receives a payment less than $\$p$, namely, $\$(p-r)$ where $\$r \in [0,p]$ is a parameter of the incentive.
	\item Other payoff rules are same to the basic incentive mechanism for two non-cooperative relays. 
\end{itemize}
To demonstrate the above delicately tuned  incentive is implementable,
we showcase its   pseudocode that is deferred to Appendix \ref{incentive:1-1}.



\subsection{Security analysis for basic {\color{BrickRed}$\mathsf{incentive}$}}\label{sec:basic-1}

 \smallskip
\noindent{\bf Utility function}. Putting the  financial outcome of protocol executions together with the incentive mechanism,
we can eventually derive the utility functions of   game $\Gamma_2^k$  and   game $\Gamma_1^k$, inductively.  
The formal definitions of   utilities are deferred to Appendix \ref{append:game}. 
Given such utility functions,
we can precisely analyze the light-client game $\Gamma_2^k$ (and the game $\Gamma_1^k$) to precisely understand our light-client protocol is secure to what extent.

\smallskip
\noindent{\bf Security theorems of basic {\color{BrickRed}$\mathsf{incentive}$}}.
For the case of two non-cooperative relays, the security   can be   abstracted as Theorem \ref{the:2}:

\begin{theorem}\label{the:2}
	If  the relays that join the protocol are non-cooperative,  
	there exists a $negl(\lambda)$-sequential equilibrium of $\Gamma^k_2$ that can ensure the game  $\Gamma^k_2$ terminates in a terminal history belonging $\mathbf{Z_{good}}:=(QattTA|Qa'ttTA')\{k\}$ (i.e. no deviation from the protocol), conditioned on $d_F + p/2 >   v_i $, $d_L > (p+e)$. In addition, the rational  client and the rational  relays would collectively instantiate the protocol, if $c>p$ and $p > 0$.
\end{theorem}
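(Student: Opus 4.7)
The plan is to verify Theorem \ref{the:2} by backward induction on the stage game $\Gamma_2$ and then lift the result to the repeated game $\Gamma^k_2$ via the one-shot-deviation principle. I would first fix the candidate strategy profile in which every rational player follows $\Pi_{\mathcal{LW}}$ verbatim: the client plays $Q$; each $\mathcal{R}_i$ plays $t$; at the client's final move, whenever the signed $\mathsf{response}$s are received, the client reports them all in $\mathsf{feedback}$ and outputs the $b \in \{True, False\}$ consistent with the responses, breaking ties at conflicting information sets by trusting the side whose claim is backed by a succinct proof. I would then fix a belief system that is Kreps--Wilson consistent: on-path beliefs are induced by the prior $[\rho,1-\rho]$ of $chance$; off-path beliefs arise as limits of totally-mixed strategies in which $f$ and $x$ are trembled with vanishing probability, and in which the ITMs that forge signatures or Merkle proofs are trembled at weight $negl(\lambda)$, so that at any information set in which a valid proof $\sigma$ is observed, the posterior concentrates on the history consistent with $\cP^\ell_N = True$.

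Next I would run the backward induction inside $\Gamma_2$. At the client's last move, reporting every received $\mathsf{response}$ to $\mathcal{G}_{ac}$ is strictly optimal: together with the refund $\$e$ (or $\$e/2$) and the per-query return of $\$d_L$, the alternative of withholding loses more than can be gained, by $d_L > p+e$; the output choice $A/A'$ is then pinned down by the posterior, and at every on-path information set the prescribed output coincides with the ground truth. At the second relay's move inside $I_1,I_2$ (and symmetrically at the first relay's), playing $t$ yields the basic payment $\$p/2$ and returns the per-query deposit $\$d_F$, while deviating to $f$ is detected by $\mathcal{G}_{ac}$: when $\cP^\ell_N=True$ the honest opponent's proof $\sigma$ passes $\validate$ and falsifies the lie, and when $\cP^\ell_N=False$ the deviator's attempted proof fails $\validate$. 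In either case the deviator forfeits $\$d_F+\$p/2$ against an at-most $\$v_i$ malicious upside, which is strictly negative under $d_F + p/2 > v_i$; the action $x$ is strictly dominated by $t$. At the client's initial move, $Q$ strictly dominates $B$ under $c > p$, and each relay accepts $\mathsf{join}$ because $p>0$ gives nonnegative expected payment with returned deposit.

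The lifting from $\Gamma_2$ to $\Gamma^k_2$ follows because the incentive subroutine refreshes the per-query deposits and payments across repetitions, so stage payoffs are additive and a one-shot deviation at any single stage strictly loses at least $\$d_F+\$p/2-\$v_i$ independently of continuation play; the stage equilibrium therefore lifts to a sequential equilibrium of $\Gamma^k_2$ whose on-path terminal histories all lie in $\mathbf{Z_{good}}=(QattTA\mid Qa'ttTA')\{k\}$. The $negl(\lambda)$ slack comes solely from the possibility that a deviating relay runs an ITM that forges a signature or a Merkle proof; by the EUF-CMA security of the signature scheme and the security of the Merkle-tree scheme, such a strategy succeeds with probability at most $negl(\lambda)$ and contributes at most negligibly to expected utility, giving a $negl(\lambda)$-sequential equilibrium.

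The main obstacle will be specifying the off-equilibrium beliefs at the client's conflicting-evidence information sets (e.g., $I^{\mathcal{LW}}_2$, $I^{\mathcal{LW}}_4$, and the singleton-response sets) so that the prescribed client action is sequentially rational despite the client being unable to run $\validate$ locally. The subtlety is that the client's tie-breaking must be driven purely by the trembling-hand concentration on honest play (since forgeries have weight $negl(\lambda)$), and the argument must show that even in the worst trembling limit the deviating relay's expected loss strictly exceeds $v_i$. Getting this concentration precise enough, while simultaneously ensuring that the client's strategy at those off-path sets is a best response rather than merely neutral, is where the delicate balance between $d_F+p/2$, $v_i$, $d_L$ and $p+e$ must be fully exploited.
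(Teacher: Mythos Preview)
Your backward-induction skeleton matches the paper's: first show the client always plays $T$ at its feedback move (the paper's Lemma~\ref{lemma:lwn}), then show each relay's best response is $t$ under $d_F+p/2>v_i$ (the paper's Lemma~\ref{lemma:2}), then argue $Q$ beats $B$, and lift to $\Gamma^k_2$ by backward reduction. Your treatment of the Kreps--Wilson consistency and the $negl(\lambda)$ slack from forgery is in fact more explicit than the paper's.

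There is, however, a concrete misattribution of the two parametric conditions that you should fix. In the utility tables, the per-query refund $d_L$ is returned to $\mathcal{LW}$ regardless of whether the client plays $T$, $L$, $R$ or $X$; the only thing distinguishing $T$ from the alternatives at the feedback stage is the refund of $\$e$ (or $\$e/2$). So $d_L>p+e$ plays no role in the $T$-versus-$X$ comparison. Where $d_L>p+e$ is actually used is at the \emph{initial} move $Q$ versus $B$: aborting forfeits the remaining per-query deposit, whereas querying returns $d_L$ minus at worst $p+e$, so $Q$ strictly dominates $B$ precisely when $d_L>p+e$. Conversely, $c>p$ is not the condition governing $Q$ versus $B$ inside the game; it governs only the ``in addition'' clause about whether the client prefers to instantiate the protocol at all rather than maintain its own full node. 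Once you swap these two justifications back, your argument goes through and coincides with the paper's.
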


For the case of one single relay (which models cooperative relays), the security of the basic incentive mechanism can be   abstracted as:

\begin{theorem}\label{the:1}
	In the pessimistic case where is only one single relay (which models a coalition of relays),  there exists a  $negl(\lambda)$-sequential equilibrium that can ensure the game $\Gamma^k_1$ terminates in   $\mathbf{Z_{good}}:=(QatTA|Qa'tTA')\{k\}$ (i.e. no deviation from the protocol), conditioned on $d_F + p-r >   v_i $, $r>v_i $, and $d_L > (p+e)$. Moreover, the rational  client and the rational relay would collectively set up the protocol, if $c>p$, $p - r > 0$ and $p > 0$.
\end{theorem}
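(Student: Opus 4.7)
\medskip
\noindent\textbf{Proof plan.}
My approach is to exhibit the strategy profile in which, for every query, the relay plays $t$ (forward the actual ground truth, with the proof $\sigma$ when the predicate is true and $\bot$ when false) and the client plays $Q$ at the request node, $T$ at the feedback node, and outputs whatever the relay claims, and then to verify that this is a $negl(\lambda)$-sequential equilibrium of $\Gamma^k_1$ by backward induction on the stage game $\Gamma_1$. Because the per-query payouts of $d_L, d_F$ (together with $p, e$) are settled stage-by-stage out of the initial deposits, the continuation game after any reached stage is again a game of the same structure with one fewer query remaining; hence the stage-game conclusion lifts to $\Gamma^k_1$ by induction on $k$, and a union bound over the $k=\mathrm{poly}(\lambda)$ stages preserves the $negl(\lambda)$ cryptographic slack.

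The main tension is at the relay's two decision nodes $Qa$ and $Qa'$. At $Qa$ (the predicate is true), action $t$ nets $d_F + p$; action $f$, which replies with an unprovable bogus $False$, still collects $d_F + (p-r)$ from the contract (the contract cannot directly disprove a claim of falseness) plus a malicious benefit of up to $v_i$, so $t$ strictly dominates $f$ exactly when $r > v_i$. Action $x$ forfeits $d_F$ together with the payment and is trivially dominated. At $Qa'$ (the predicate is false), action $t$ yields $d_F + (p-r)$; action $f$ now requires producing a valid $\sigma$ for a \emph{false} predicate, which by the verifiability of $(\evaluate,\validate)$ and the binding of the Merkle scheme succeeds with probability at most $negl(\lambda)$. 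With overwhelming probability the forged $f$-proof is rejected, so $d_F$ and the payment are both forfeited, yielding expected utility at most $v_i + negl(\lambda)$; thus $t$ dominates $f$ up to a negligible slack exactly when $d_F + (p-r) > v_i$, which is the first stated condition.

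At the client's information sets, since the on-path relay plays $t$, the signed response directly reveals the ground truth, and echoing the relay's claim as output ($A$ or $A'$) avoids the potential loss of $v$; any other output is strictly dominated. For the feedback action, reporting the validly signed response ($T$) returns $e$ and recovers the per-query slice $d_L$, while withholding ($X$) forfeits both; the condition $d_L > p+e$ ensures that even an off-path deviation that would otherwise claw back the per-query payment and the query-deposit cannot outweigh the loss of $d_L$. For the setup phase, the client and the relay collectively initialize the protocol only if each has positive expected payoff: the client if its per-query saving over maintaining a personal full node is positive, i.e.\ $c > p$; the relay if its worst-case per-query payment is positive, i.e.\ $p-r > 0$ (with $p > 0$ at $Qa$ automatically implied).

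The main obstacle will be promoting this textbook backward-induction argument to a genuine $negl(\lambda)$-sequential equilibrium in the presence of computationally bounded players. One must specify consistent beliefs at the off-path information sets (notably $Q(ax|a'x)$, and histories where the client receives an invalid signature or the contract receives an invalid $\sigma$) as limits of fully mixed trembles, and then formally argue that any P.P.T.\ strategy attempting to break the signature scheme, to forge a Merkle proof, or to fabricate a $\sigma$ for a false statement gains at most $v_i$ with probability $negl(\lambda)$, contributing only $negl(\lambda)$ expected utility. Propagating this bound through the $k=\mathrm{poly}(\lambda)$ stages by a union bound yields the claimed $negl(\lambda)$-sequential equilibrium with the game terminating in $\mathbf{Z_{good}}$ with probability $1-negl(\lambda)$.
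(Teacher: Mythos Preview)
Your proposal is correct and follows essentially the same backward-induction approach as the paper: first establish that the client always plays $T$ (dominance), then that the relay's best response is $t$ at both $Qa$ (via $r>v_i$) and $Qa'$ (via $d_F+p-r>v_i$ together with the negligible forgery probability for $\validate$), and finally lift from the last stage to $\Gamma^k_1$ by induction on the number of remaining queries. Your write-up is in fact more explicit than the paper's own proof about the cryptographic slack and the consistency of off-path beliefs; the only minor slip is that the hypothesis $d_L>p+e$ is what makes $Q$ dominate $B$ at the request node (so the client does not abort and forfeit its per-query slice $d_L$), not what makes $T$ dominate $X$ at the feedback node, and that playing $x$ does not actually forfeit $d_F$ in the one-relay incentive (the relay simply forgoes the payment $p$), though $x$ is still strictly dominated by $t$.
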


\noindent{\bf Interpretations of    Theorem \ref{the:2}}. 
The  theorem reveals that: conditioned on there are two non-cooperative relays, the sufficient conditions of security are: 
(i) the initial deposit $d_F$ of   relay node is greater than its malicious benefit $v_i$ that can be obtained by fooling the client;
(ii) the initial deposit $d_L$ of the   client is greater than the payment $p$  plus another small parameter $e$;
(iii) for the light client, it.
The above conclusion essentially hints us how to safely set up the light-client protocol to instantiate a cryptocurrency wallet in practice, 
that is: let the light client and the relays finely tune and specify their initial deposits, 
such that the client can query the (non)existence of any transaction, as long as the transacted amount of the transaction is not greater than the initial deposit placed by the relay nodes.

\smallskip
\noindent{\bf Interpretations of    Theorem  \ref{the:1}}. 
The  theorem states that:
even in an extremely hostile scenario where only one single relay exists, 
deviations are still prevented when fooling the light client to believe the non-existence of an existing transaction does not yield better payoff than honestly proving the existence.
The statement presents a feasibility region of our protocol that at least captures many important DApps (e.g. decentralized messaging apps) in practice, namely:
 fooling the client   is not very financially beneficial for the relay, and only brings a payoff $v_i$ to the relay;
 so as long the client prefers to pay a little bit more than $v_i$ to read a record in the blockchain, no one would deviate from the protocol.

\subsection{Augmented {\color{BrickRed}$\mathsf{incentive}$}}

This subsection further discusses the pessimistic scenario that no non-cooperative relays can be identified,
by introducing an extra   assumption that:  at least one public full node (denoted by $\mathcal{PFN}$) can monitor the internal states of the arbiter contract at  cost $\epsilon$, and does not cooperate with the only recruited relay. 
This extra rationality assumption can boost an incentive mechanism to   deter the relay and client from deviating from the light-client protocol. 
Here we present this augmented incentive design, and analyze its security guarantees.

\smallskip
\noindent{\bf Augmented {\color{BrickRed}$\mathsf{incentive}$} for one relay}.
The tuning of the incentive mechanism stems from the observation that:
if there is {\em any} public full node that does not cooperate with the recruited relay (and monitor the internal states of the arbiter contract),
it can stand out to audit a fake claim about $\cP^\ell_N=False$ by producing a proof attesting $\cP^\ell_N=True$.
Thus, we slightly tune the incentive subroutine  (by adding few lines of pseudocode), which can be summarized as:
\begin{itemize}
	\item When a relay forwards a $\mathsf{response}$ message containing $\bot$ to claim $\cP^\ell_N=False$, 
	the incentive subroutine shall wait few clock periods (e.g., one).
	During the waiting time, the public full node is allowed to send a proof attesting $\cP^\ell_N=True$ in order to falsify a fake claim of $\cP^\ell_N=False$;
	in this case, the initial deposit $d_F$ of the cheating relay is confiscated and sent to the public full node who stands out.
	\item Other payoff rules are same to the basic incentive mechanism, so do not  involve the public full node. 
\end{itemize}

We defer the formal   instantiation of the above augmented incentive mechanism to Appendix \ref{incentive:1-2}.

\subsection{Security analysis for augmented {\color{BrickRed}$\mathsf{incentive}$}}

\smallskip
\noindent{\bf Augmented ``light-client game''}. By the introduction of the extra incentive clause, the ``light-client game'' $\Gamma_1$ is extended to the augmented light-client game $G_1$. 
As shown in Fig \ref{fig:game-aug}, the major differences from the original light-client game $\Gamma_1$ are two aspects:
(i)  the public full node ($\mathcal{PFN}$) can choose to monitor the arbiter contract (denoted by $m$) or otherwise ($x$) in each query, which cannot be told by the relay node due to the non-cooperation, and (ii) when the ground truth of predicate is true, if the relay  cheats, $\mathcal{PFN}$ has an action  
``debate'' by showing the   incentive mechanism a proof attesting the predicate is   true, conditioned on having taken   action $m$.

The security intuition thus becomes clear:   if the recruited relay chooses a strategy to cheat with non-negligible probability, the best strategy of the public full node is to act $m$, which on the contrary deters the relay from cheating. In the other word, the relay at most deviate with negligible probability.

\begin{figure}
		\vspace{-3mm}
	\centering
	\includegraphics[width=8.2cm]{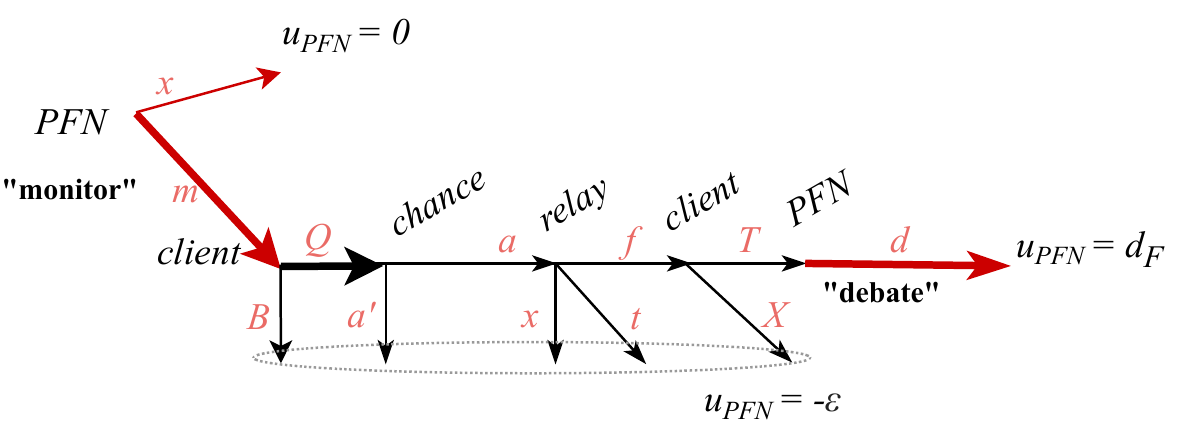}
	\caption{The induced game $G_1$, if having a non-cooperative public full node.}
	\label{fig:game-aug}
	\vspace{-3mm}
\end{figure}

\smallskip
\noindent{\bf Security theorem of augmented {\color{BrickRed}$\mathsf{incentive}$}}.
Now, in the augmented game $G_1$, if the recruited relay deviates when the predicate is true with non-negligible probability, 
the rational $\mathcal{PFN}$ would act $m$ and then $d$, which will confiscate the initial deposit of the relay and deters it from cheating.
More precisely, the security due to the augmented incentive mechanism can be summarized as:

\smallskip
\begin{theorem}\label{the:1-1}
	Given the augmented incentive mechanism, there exists a $\epsilon$-sequential equilibrium of the augmented light-client game $G_1$ such that it can ensure the client and the relay would not deviate from   the protocol except with negligible probability, conditioned on $d_F >   v_i $, $d_L > (p+e)$ and a non-cooperative public full node that can ``monitor'' the arbiter contract costlessly. Also, the   client and the relay would   set up the protocol, if $c>p$ and $p > 0$.
\end{theorem}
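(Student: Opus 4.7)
My plan is to mirror the strategy used (implicitly) for Theorems \ref{the:2} and \ref{the:1}: first argue backward induction over the $k$ repetitions of the stage game $G_1$, so that it suffices to establish the claim on a single-query subgame; then specify a candidate strategy profile together with a consistent belief system; then verify sequential rationality at each information set given the tuned utility function. The candidate profile $\sigma^*$ I would propose is: the client plays $Q$ at the root, reports the received signed response honestly in the feedback stage (action $T$) and outputs exactly what the relay claims; the relay plays $t$ at both histories $Qa$ and $Qa'$; and the public full node $\mathcal{PFN}$ plays $m$ (monitor) at its initial move and plays $d$ (debate) at every history in which it observes a $\bot$-response when $\evaluate(\cP^\ell_N)$ would produce a valid proof $\sigma$. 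Because each stage is finite and the per-stage payoffs are bounded, standard one-shot-deviation logic then lifts stage-wise rationality to the whole $k$-fold game.

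\smallskip
The core argument is the inspection-game subgame between $\mathcal{R}_1$ and $\mathcal{PFN}$ after $chance$ acts. First I would show, at the relay's information sets $Qa$ and $Qa'$: if $\mathcal{PFN}$ plays $m$ with probability $1$ (per $\sigma^*$), then choosing $f$ at $Qa$ yields a payoff of at most $v_1 - d_F$ (the malicious benefit minus the confiscated deposit via $\mathcal{PFN}$'s debate), whereas $t$ yields $p/2$; hence $d_F > v_1$ together with $p > 0$ makes $t$ strictly preferable. A symmetric but easier step works at $Qa'$, where no debate proof exists but the action $f$ is caught directly by $\validate$ producing a failed verification (as in Theorem \ref{the:1}), costing $d_F$. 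Second, I would verify sequential rationality of $\mathcal{PFN}$'s action $m$: since monitoring is costless and the debate step yields $+d_F$ in every history at which the relay cheated, any belief system that assigns non-zero probability to the relay cheating makes $m$ weakly preferred to $x$, and even on the equilibrium path, $m$ is weakly best. To discharge consistency I would use the trembling-hand limit: perturb $\sigma^*$ by having the relay cheat with probability $\epsilon_n \to 0$; along this sequence $\mathcal{PFN}$'s posterior at its debate information set is always well defined and $m$ is strictly best responding, giving us a fully-mixed approximating sequence whose limit supports $\sigma^*$ as a sequential equilibrium.

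\smallskip
For the client's moves, I would argue exactly as in the proof of Theorem \ref{the:1}: at each feedback information set, failing to report a received signed response strictly loses $e$ (since $e$ is refunded only upon reporting), and given the relay plays $t$, mapping the reported claim to the output yields the correct truth value. The condition $d_L > p + e$ ensures the client has enough deposit to cover any single query without being able to refuse payment as a profitable deviation, and $c > p$ makes the setup phase participation rational for the client (paying $p$ per query beats running its own full node at cost $c$), while $p > 0$ compensates $\mathcal{R}_1$ for signing and forwarding. An unavoidable $\mathsf{negl}(\lambda)$ slack enters because a P.P.T. relay might forge a signature, invert a hash, or produce a bogus proof $\sigma'$ accepted by $\validate$; these events absorb into the $\epsilon$ of the $\epsilon$-sequential equilibrium by the security properties of the signature scheme and the Merkle tree.

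\smallskip
The main obstacle I expect is the inspection-game component: unlike the two-relay case where two \emph{symmetric} non-cooperative players audit each other and there is a clear dominance argument, here $\mathcal{PFN}$ receives no direct payment on the equilibrium path (the relay never cheats, so there is nothing to debate), and any positive monitoring cost would collapse $m$ into $x$. Handling this cleanly requires the costless-monitoring hypothesis to be used carefully together with the trembling-hand construction so that $m$ remains a best response in the limit; additionally, care is needed to ensure consistent beliefs across the $k$ stages, since a deviation in an earlier stage could in principle affect continuation beliefs. I would resolve this by exploiting the fact that the stage games are informationally independent modulo the public $\mathsf{ctr}$, so a per-stage sequential equilibrium concatenates into one for $\Gamma^k_1$ augmented to $G_1$.
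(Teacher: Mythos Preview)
Your proposal is correct and follows essentially the same route as the paper: reduce to a single stage by backward induction, show the client's feedback action $T$ is dominant (Lemma~\ref{lemma:lwn}), then argue that if the relay deviated from $t$ with non-negligible probability the costless-monitoring $\mathcal{PFN}$ would play $m$ and confiscate $d_F$, so $d_F>v_i$ pins the relay to $t$ (Lemma~\ref{lemma:1-1-1}). Your write-up is in fact more careful than the paper's own proof in two respects: you spell out the trembling-hand sequence needed for consistency (the paper leaves this implicit), and you correctly flag the delicate point that on the equilibrium path $\mathcal{PFN}$ earns nothing, so the costless-monitoring hypothesis is load-bearing for $m$ to remain a (weak) best response in the limit. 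One minor numerical slip: in the one-relay augmented incentive the honest action $t$ at $Qa$ yields payment $p$ (not $p/2$; see Fig.~\ref{fig:incentive1-2}), but this only strengthens the comparison $p> v_i-d_F$ you need.
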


\smallskip
\noindent{\bf  Interpretations of    Theorem \ref{the:1-1}}. The economics behind the theorem can be   translated similarly to Theorem \ref{the:2}.

\section{Discussions on concrete instantiation}\label{instantiation}
Here  we shed   light on the concrete instantiation of the protocol in practice and emphasize some tips towards feasibility.
%
%
%
Though the current  permissionless blockchains (e.g., Ethereum) are suffering from many baby-age limitations (e.g., high cost of on-chain resources, low throughput, and large latency), a straight instantiation of our light-client protocol has  been arguably practical.

\smallskip
\noindent{\bf On-  and off-chain feasibility}. As shown in Table \ref{tab:evaluation}, we instantiate the   protocol  atop Ethereum (with recruiting one relay and using the basic incentive mechanism, c.f., Section \ref{sec:basic-1}), and measure the costs of repeatedly evaluating five chain predicates about the (non)existence of   different Ethereum transactions. 

Due to the simple nature of our protocol, the off-chain cost of the light client is constant and essentially tiny, as it only needs:
(i) to store two public keys,  (ii) to instantiate two secure
channels to connect the relay nodes (e.g. the off-chain $\mathsf{response}$  message is $<$ 1KB), (iii) to verify
two signatures and to compute a few hashes to few verify Merkle
tree proof(s) in the worst case.
%

\renewcommand{\arraystretch}{1.25}
\begin{table}[h]
	\caption{An instance of the light-client protocol (basic incentive w/ one relay)  that allows the superlight client to predicate the (non)existence up to five Ether  transactions. \label{tab:evaluation}}
	\scriptsize
	\centering
	\vspace{-5mm}
	\begin{tabular}{lllll}
		\hline
		      \begin{tabular}[c]{@{}l@{}}$\mathsf{txid}$ queried for\\evaluating (non)existence\end{tabular}                       & \begin{tabular}[c]{@{}l@{}}Gas of $\mathsf{request}$ \\ ($\mathcal{LW} \rightarrow \mathcal{G}_{ac}$) \end{tabular} & \begin{tabular}[c]{@{}l@{}}Size of  $\mathsf{response}$ \\ ($\mathcal{R}_i \rightarrow \mathcal{LW}$)\end{tabular} & \begin{tabular}[c]{@{}l@{}}Gas of $\mathsf{feedback}$ \\ ($\mathcal{LW} \rightarrow \mathcal{G}_{ac}$ )\end{tabular} \\ \hline\hline
		\href{https://etherscan.io/tx/0x14198912703598b497ffbe17e6a90ffa85de276418ae41d217c0e3d2c76290d7}{$\mathtt{0x141989127035...}$}  & 71,120 gas                                                                            & 947 Byte                                                                              & 199,691 gas                                                                            \\
		\href{https://etherscan.io/tx/0x0661d6e95ab1320e93165c37b0e666c96c97b49a8819c02a8cda05c38ff5a97d}{$\mathtt{0x0661d6e95ab1...}$} & 41,120 gas                                                                           & 951 Byte                                                                              & 251,480 gas                                                                           \\
		\href{https://etherscan.io/tx/0x949ae094deb031cbcfb1aaf36ca62b49d5e1d34affbbda16f7323568d8ac2689}{$\mathtt{0x949ae094deb0...}$} & 41,120  gas                                                                          & 949 Byte                                                                              & 257,473 gas                                                                            \\
		\href{https://etherscan.io/tx/0x1e39d5b4b46d420e960e12ba2544988bc11c4d4e8c12ceca4871d306afd73d44}{$\mathtt{0x1e39d5b4b46d...}$} & 41,120   gas                                                                         & 985 Byte                                                                              & 339,237  gas                                                                          \\
		\href{https://etherscan.io/tx/0xfe28a4dffb8ece2337863fba8dfd2686e9a9c995b50fe7911acbb2589f80b315}{$\mathtt{0xfe28a4dffb8e...}$} & 41,120    gas                                                                        & 951 Byte                                                                              & 248,119   gas                                                                         \\ \hline
	\end{tabular}
	\\	The code of testing those cases is available at \href{https://github.com/yylluu/rational-light-client}{https://github.com/yylluu/rational-light-client}. These five transactions under query are included by the Ethereum blockchain, and we choose them from the blocks having various sizes. For example, the transaction \href{https://etherscan.io/tx/0x1e39d5b4b46d420e960e12ba2544988bc11c4d4e8c12ceca4871d306afd73d44}{$\mathtt{0x1e39...}$} is in a block having 263 transactions, which indicates the evaluation has captured some worst cases of reading from large blocks. In lieu of EIP-210  \cite{EIP210} which  currently is not  available in EVM,  we hardcore  the needed blockhashes (in the contract) to measure the actual on-chain overhead (as if EIP-210 is available).
		\vspace{-2mm}
\end{table}

Besides the straightforward off-chain efficiency, the on-chain cost is also low. 
Particularly, the client only sends two messages (i.e. $\mathsf{request}$ and $\mathsf{feedback}$) to the  contract, which typically   costs mere 300k gases in the worst case as shown in Table \ref{tab:evaluation}. At the time of writing (Jan/13/2020),   ether is \$143   each \cite{ethprice}, and the average gas price is 10 Gwei \cite{ethgasstation}, which corresponds to a cost  of only \$0.43.

\medskip
\noindent{\bf Latency}. 
%
If   the network diffuse functionality \cite{GKL15}   can   approximate the latency of global Internet   \cite{BDO12,gencer2018decentralization}, 
the delay of our light-client protocol will be dominated by the limitations of underlying blockchain. 
The reasons are: (i) many existing blockchains have limited on-chain resources, and the miners are more willing to pack the transactions having higher transaction fees \cite{Nak08,Woo14,BDO12}, and (ii)  messaging the contract suffers from the intrinsic delay caused by   underlying consensus. 
For example, in Ethereum network at the time of writing,
if the light client sets its transactions at the average gas  price (i.e., 9 Gwei), the latency of messaging the contract on average will include: (i) 10 blocks (about two minutes) \cite{ethgasstation} for being mined, plus (ii)  a few more blocks for confirmations (another a few minutes) \cite{gervais2016security}.
If the   client expects the protocol to proceed faster, it can set higher gas price (e.g., 22 Gwei per gas), 
which causes its messages to be included after 2-3 blocks on average (i.e. about 30-45 seconds) \cite{ethgasstation}, though the on-chain cost    increases by 144\%.
After all, once the underlying   blockchain goes through the baby-age limitations, the   protocol's latency can be further reduced to approximate the actual Internet delay.



\medskip
\noindent{\bf Who are the relays?}
The light-client protocol can be deployed in any   blockchain supporting smart contracts. The relays in the protocol can be the full nodes of the chain (e.g., the full nodes of two competing mining pools) that are seeking   the economic  
rewards by relaying blockchain readings to the light clients, so it is reasonable to assume that they can maintain the full nodes to evaluate chain predicates nearly costlessly. 
Even in the extremely adversarial environment where the light client has no confidence in the non-collusion of any two full nodes, the protocol can still be finely tuned (e.g. increase the rewards) to support  at least  a wide range of useful low-value chain predicates.

\ignore{
In addition, it is even more compatible with the existing proof-of-stake (PoS) blockchain infrastructures (e.g. Ouroboros \cite{ouroboros}).
In the PoS blockchain, there are two basic facts: (i) the stakeholders have to freeze coins to delegate some full nodes (a.k.a. delegates or pools \cite{pos-pool}) to participant PoS ``mining'' on behalf of them; (ii) the delegates of the stakeholders have to execute the so-called full node protocol to keep on maintaining the update-to-state ledger, such that their can propose valid blocks.

The above facts hint us that it seems quite natural to let the delegates of the stakeholders in the PoS blockchain to play the role of relay nodes  by reusing their already frozen stakes as the initial deposits in the lightweight protocol.
Particularly, we can envision a promising instantiation atop a particularly enhanced PoS blockchain, in which:
\begin{itemize}
\item The relay full nodes of the lightweight protocol are selected from the delegates of the stakeholders in PoS blockchain, which not only somehow mitigates Sybil attacks, but also gives us a good belief that these relay full nodes are usually online to respond the requests of light clients.
\item The stakeholders or their delegates are allowed to join the light client protocol, by placing the already frozen coins for PoS ``mining'' to serve as the initial deposits for the light client protocol. At the same time, these deposits for the light client protocol are still good stakes for PoS ``mining''.
\end{itemize}

Through the above instantiation, the (delegates of) stakeholders can earn extra incentives besides PoS mining, if they choose to serve as relay full nodes.  Since the relay full nodes have to freeze coins anyway to do PoS mining. Remark that joining the light client protocol does not requires the delegates to freeze any extra coins, and therefore does not incur any extra economic cost.


}



\medskip
\noindent{\bf The initial setup}.
We explicitly decouple the presentations  of   ``protocol'' and ``incentive mechanism'' to provide the next insight:
the initial deposit is not necessary to be cryptocurrency as our design, 
and it can be any form of ``collateral'', such as business reputations and subscriptions;
especially, if the ``deposit'' is publicly known off the chain, the setup phase also becomes arguably removable, 
as the light client has no need to rely on a personal full node   to verify the correct on-chain setup of the initial deposit anymore.

\medskip
\noindent{\bf The amount of initial deposits}. One might worry that the amount of initial deposit, especially when considering that the needed initial deposit is linear to the number of queries to be asked. In practice, a few instantiations can avoid the deposit from being too large to be feasible. One of those is to let the light client and the relay node(s) to negotiate before (or during) the setup phase to choose a moderate number of queries to support, and then they can periodically reset the protocol, which is feasible as the light client user can afford to periodically reset her personal full node for a short term to handle the setups. Another possibility, as already mentioned, is relying on some external  ``collateral'' (e.g., reputations and subscriptions) to replace the deposit of cryptocurrency in the protocol.


\section{Conclusion \& Future Outlook}
We present a generic superlight client for permissionless blockchains in the game-theoretic model. 
Our protocol   corresponds a special variant of multi-party computation in the rational setting \cite{cryptoeprint:2011:396,groce2012fair,fuchsbauer2010efficient,IML05,HT04,ADGH06}, in particular, 
by adopting the smart contract to implement a concrete utility function for desired equilibrium.

This is the first work that formally discusses the light  clients of permissionless blockchains in game-theoretic settings, and the area remains largely unexplored. 
Particularly, a few potential studies can be conducted to explore more realistic instantiations. For example, many crypto-economic protocols (e.g. PoS blockchains \cite{ouroboros,snowwhite,algorand}, payment channels \cite{sprites,perun}, blockchain-backed assets \cite{ZHL19})   already introduce  many locked deposits, and it becomes enticing to explore the composability of using the same collateral in our lightweight protocol without scarifying the securities of both protocols. 



	\bibliographystyle{splncs04}
	\bibliography{reference}


    \appendix

	\newpage
	\section{Merkle Tree Algorithms}\label{append:merkle}		
	
	Fig \ref{fig:buildmt}, \ref{fig:genmerkle} and \ref{fig:vrfymerkle} are the deferred algorithms related to Merkle tree.
	A Merkle tree (denoted by $\MT$) is  a binary tree: (i) the $i$-th leaf node is labeled by $\cH(\tx_i)$, where $\cH$ is a cryptographic hash function; (ii) any non-leaf node in the Merkle tree is labeled by the hash of the labels of its siblings.
	$\mathsf{BuildMT}(\cdot)$ takes a sequence of transactions as input, and outputs a Merkle tree whose root commits these transactions. $\mathsf{GenMTP}$ takes a Merkle tree $\MT$ and a transaction $tx$ as input, and can generate a   Merkle tree proof $\pi$ for the inclusion of $tx$ in the   tree. Finally, $\mathsf{VrfyMTP}$   takes $\cH(tx)$,  the tree $\cR$, and the Merkle   proof $\pi$, and can output whether $\cH(tx)$ labels a leaf of the Merkle tree $\MT$. The security notions of Merkle tree can be found in Section \ref{preliminary}.

	\begin{figure}
		\fbox{%
			\parbox{.9\linewidth}{%
				\vspace{-3mm}
				\begin{center}
					{\bf $\mathsf{BuildMT}$ algorithm}
				\end{center}
				\vspace{-3mm}
			{\scriptsize			
				$\mathsf{BuildMT}(\TX=(tx_1,\cdots,tx_n))$:
				\vspace{-3mm}
				\begin{itemize}
					\item if $|\TX|=1$:
					\begin{itemize}
						\item $label(\cR) = \cH(tx_1)$
					\end{itemize}
					\item else:
					\begin{itemize}
						\item $lchild = \mathsf{BuildMT}(tx_1,\dots,tx_{\lceil n/2 \rceil})$
						\item $rchild = \mathsf{BuildMT}(tx_{\lceil n/2 \rceil + 1},\dots,tx_n)$
						\item $label(\cR) = \cH(label(lchild) || label(rchild))$
					\end{itemize}					
					\item return Merkle tree $MT$ with $\cR$
				\end{itemize}
		}
		\vspace{-3mm}
			}
		}
		\caption{$\mathsf{BuildMT}$ that generates a Merkle tree with $\cR$ for $\TX=(tx_1,\cdots,tx_n)$.}		\label{fig:buildmt}
	\end{figure}	
				\vspace{-3mm}

	\begin{figure}
		\fbox{%
			\parbox{.9\linewidth}{%
				\vspace{-3mm}
				\begin{center}
					{\bf $\mathsf{GenMTP}$ algorithm}
				\end{center}
			\vspace{-3mm}
				{\scriptsize			
				$\mathsf{GenMTP}(\MT, tx)$:
				\vspace{-3mm}
				\begin{itemize}
					\item $x \leftarrow$ the leaf node labeled by $\cH(tx)$
					\item while $x \neq label(\MT.\cR)$:
					\begin{itemize}
						\item $lchild \leftarrow x.parent.lchild$
						\item $rchild \leftarrow x.parent.rchild$
						\item if $x = lchild$, $b_i \leftarrow 0$, $l_i = label(rchild)$
						\item otherwise, $b_i \leftarrow 1$, $l_i = label(lchild)$
						\item $x \leftarrow x.parent$
					\end{itemize}
					\item return $\pi = ((l_i,b_i))_{i \in [1,n]}$
				\end{itemize}
				}
			\vspace{-3mm}
			}
		}
		\caption{$\mathsf{GenMTP}$ that generates a Merkle tree proof.}		\label{fig:genmerkle}
	\end{figure}
				\vspace{-3mm}

	\begin{figure}
		\fbox{%
			\parbox{.9\linewidth}{%
				\vspace{-3mm}
				\begin{center}
					{\bf $\mathsf{VrfyMTP}$ algorithm}
				\end{center}
			\vspace{-3mm}
				{\scriptsize			
				$\mathsf{VrfyMTP}(lable(\cR),\pi,\cH(tx))$:
				\vspace{-3mm}
				\begin{itemize}
					\item parse $\pi$ as a list $((l_i,b_i))_{i \in [1,n]}$  
					\item $x=\cH(tx)$
					\item for $i$ in $[1,n]$:
					\begin{itemize}
						\item if $b_i = 0$, $x \leftarrow \cH(x||l_{i})$, else $x \leftarrow \cH(l_{i}||x)$
					\end{itemize}
					\item if $x \neq lable(\cR)$, return $False$, otherwise return $True$
				\end{itemize}
				\vspace{-3mm}
				}
			}
		}
		\caption{$\mathsf{VrfyMTP}$ that verifies a Merkle tree proof.}		\label{fig:vrfymerkle}
	\end{figure}
			\vspace{-3mm}

	\section{Deferred game-theory definitions}\label{append:seq}
	
	
	Here down below are the deferred formal definitions of the finite incomplete-information extensive-form game  and the  sequential equilibrium.

	\begin{definition}
		\noindent {\bf Finite incomplete-information extensive-form game} 
		$\Gamma$ is a tuple of $\langle {\bf N,H},P, f_c, (u_i)_{i\in \bf N}, ({\bf I}_i)_{i\in \bf N} \rangle$ \cite{osborne1994course}:
		\begin{itemize}[leftmargin=0.2in] 
			\item ${\bf N}$ is a finite set representing the players.
			\item $\bf H$ is a set of sequences that satisfies: (i) $\emptyset \in \bf H$;
			(ii) if $h=\langle a_1,\dots,a_K \rangle \in \bf H$, then any prefix of $h$ belongs to $\bf H$. 
			Each member of $\bf H$ is a \emph{history} sequence. The elements of a history are called \emph{actions}. A history sequence $h=\langle a_1,\dots,a_K \rangle \in \bf H$ is \emph{terminal}, iff  $h$ is not a prefix of any other histories in $\bf H$. Let ${\bf Z}$ denote the set of terminal histories. For any \emph{non-terminal} history $h=\langle a_1,\dots,a_K \rangle \in \bf H \setminus Z$, the set of actions available after $h$ can be defined as $A(h) = \{a| \langle a_1,\dots,a_K, a \rangle \in \bf H\}$.
			\item $P:{\bf H \setminus Z} \rightarrow {\bf N} \cup \{chance\}$ is the player function to assign a player (or \emph{chance}) to move at a non-terminal history $h$. Particularly when $P(h)=chance$, a special ``player'' called \emph{chance} 
			acts at the history $h$.
			\item $(u_i)_{i\in \bf N}:{\bf Z} \rightarrow \R^{|\bf N|}$ is the utility function that defines the utility of the players at each terminal history (e.g., $u_i(h)$ specifies the utility of player $i$ at the terminal history $h$).
			\item $f_c$ is a function associating each history $h \in \{ h|P(h)=chance\}$  with a probability measure $f_c(;h)$ on $A(h)$, i.e., $f_c(a;h)$ determines the probability of the occurrence of $a \in A(h)$ after the history $h$ of the  player ``\emph{chance}''. 
			\item $({\bf I}_i)_{i\in \bf N}$ is a set of partitions. Each ${\bf I}_i$ is a partition for the set $\{h|P(h)=i\}$, and called as the \emph{information partition} of player $i$; a member $I_{i,j}$ of the partition ${\bf I}_i$ is a set of histories, and is said to be an \emph{information set} of player $i$. We require $A(h)=A(h')$ if $h$ and $h'$ are in the same information set, and then denote the available actions of player $i$ at an information set $I_{i,j} \in {\bf I}_i$ as $A(I_{i,j})$.
		\end{itemize}
	\end{definition}
	
	Note that in our context, the strategy of a player is actually a probabilistic polynomial-time ITM, whose action is to produce a string feed to the protocol. 
	
	\noindent
	\begin{definition}
		A {\em behavioral strategy} (or {\em  strategy} for short) of player $i$ (denoted by $s_i$) is a collection of independent probability measures denoted by $\{\beta_i(I_{i,j})\}_{I_{i,j}\in{\bf I}_i}$, where $\beta_i(I_{i,j})$ is a probability measure over $A(I_{i,j})$ (i.e., the available actions of player $i$ at his information set $I_{i,j}$). 
		We say $\vec{s}=(s)_{i\in{\bf {\bf N}}}$ is a {\em behavior strategy profile} (or strategy profile for short), if every $s_i \in \vec{s}$ is a behavior strategy of player $i$. When $\vec{s}=(\{\beta_i(I_{i,j})\}_{I_{i,j}\in{\bf I}_i})_{i\in{\bf {\bf N}}}$ assigns positive probability to every
		action, it is called completely mixed.
	\end{definition}

	\noindent
	\begin{definition} 
		An {\em assessment} $\sigma$ in an extensive game is a pair $(\vec{s},\mu)$, where $\vec{s}$ is a behavioral strategy profile and $\mu$ is a function that assigns to every information set a probability measure on the set of histories in the information set (i.e., every). We say the function $\mu$ is a belief system. 
	\end{definition}

	
	\noindent
	\begin{definition} 
		The {\em expected utility} of a player $i$ determined by the assessment $\sigma=(\vec{s},\mu)$ conditioned on $I_{i,j}$ is defined as:

		$$\bar{u}_i(\vec{s},\mu|I_{i,j})=\sum_{h \in I_{i,j}} \mu(I_{i,j})(h) \sum_{z \in {\bf Z}} \rho(\vec{s}|h)(z)u_i(z)$$

		\noindent where $h=\langle a_1,\dots,a_L \rangle \in {\bf H \setminus Z}$, $z=\langle a_1,\dots,a_K \rangle \in {\bf Z}$, and $\rho(\vec{\sigma}|h)(z)$ denotes the distribution over terminal histories induced by the strategy profile $\vec{s}$ conditioned on the history $h$ being reached (for player $P(h)$ to take an action), i.e.,

		$$\rho(\vec{s}|h)(z) = 
		\begin{cases}
		0 \text{, if $h$ is not a prefix of $z$}\\
		\prod_{k=L}^{K-1} \beta_{P(a_1,\dots,a_k)}(a_1,\dots,a_k)(a_{k+1})  \text{, otherwise}\\
		\end{cases}
		$$
	\end{definition}
	


	
	\noindent
	\begin{definition} 
		We say an assessment $\sigma=(\vec{\sigma},\mu)$ is a $\epsilon$-{\em  sequential equilibrium}, if it is $\epsilon$-sequentially rational and consistent:
	\end{definition}
	\begin{itemize}
		\item $(\vec{s},\mu)$ is $\epsilon$-{\em sequentially rational} if for every play $i \in {\bf N}$ and his every information set $I_{i,j} \in {\bf I}_i$, the strategy $s_i$ of player $i$ is a best response to the others' strategies $\vec{s}_{-i}$ given his belief at $I_{i,j}$, i.e., $\bar{u}_i(\vec{s},\mu|I_{i,j})+\epsilon \ge \bar{u}_i((s_{i}^\ast,\vec{s}_{-i}),\mu|I_{i,j})$ for every strategy $s_{i}^\ast$ of every player $i$ at every information set $I_{i,j} \in {\bf I}_i$. Note that   $\vec{s}_{-i}$   denotes the strategy profile $\vec{s}$ with its $i$-th element removed, and $(s^\ast,\vec{s}_{-i})$ denotes $\vec{s}$ with its $i$-th element replaced by $s^\ast$. 
		\item $(\vec{s},\mu)$ is {\em consistent}, if $\exists$ a sequence of assessments $((\vec{s}^k,\mu^k))_{k=1}^\infty$ converges to $(\vec{s},\mu)$, where  $\vec{s}^k$ is completely mixed and $\mu^k$ is derived from $\vec{s}^k$ by Bayes' rules.
	\end{itemize}

\section{Deferred formal description of incentive subroutines}\label{append:incentive}

\subsection{Basic incentive   for the protocol with two  relays}\label{incentive:2}

Fig \ref{fig:incentive2}, \ref{fig:payout} and \ref{fig:payout1} showcase the basic incentive mechanism for two relays is implementable 
with being given the $\validate$ algorithm.
Fig \ref{fig:payout} presents the detailed incentive clauses when the client feeds back two validly signed response messages from both recruited relays (clause 1-6).
Fig \ref{fig:payout1} presents how to deal with the scenario where the client only  feeds back one validly signed response message from only relay  (clause 7-9).
In case the client feeds back nothing, every party simply gets their initial deposit back (clause 10), while the locked deposit $\$(p+e)$ of the client (for this query) is ``burnt''.

\begin{figure}[H]
	\fbox{%
		\parbox{\linewidth}{%
			\vspace{-3mm}
			\begin{center}
				{\bf {\color{BrickRed}$\mathsf{Incentive}$}  subroutine} for two non-cooperative relays
			\end{center}
			\vspace{-3mm}
			{\scriptsize
			{\color{BrickRed}$\mathsf{Incentive}$}$(${\color{TealBlue}$\mathsf{responses}$}$, \cP^\ell_N)$:
			\vspace{-3mm}
			
			\begin{itemize}[leftmargin=0.4cm]

				\item[] if $|\mathsf{responses}|=2$ then 
				\begin{itemize}[leftmargin=0.4cm]
					\item[] parse $\{(\mathsf{result}_i, \mathsf{sig}_i)\}_{i \in \{1,2\}}:=\mathsf{responses}$
					\item[] if  ${\vrfy}(\langle\mathsf{result}_i, \mathsf{ctr}\rangle, \mathsf{sig}_i, pk_1)=1$  for each $i\in[1,2]$
					\begin{itemize}[leftmargin=0.4cm]
						\item[] call {\color{CadetBlue}$\mathsf{Payout}$}$(\mathsf{result}_1, \mathsf{result}_2, \cP^\ell_N)$ subroutine in Fig  \ref{fig:payout}
						\item[] $\cLedger[\mathcal{LW}]:=\cLedger[\mathcal{LW}]+\$d_L$
						\item[] return
					\end{itemize}
				\end{itemize}

				\item[] if $|\mathsf{responses}|=1$ then 
				\begin{itemize}[leftmargin=0.4cm]
					\item[] parse $\{(\mathsf{result}_i, \mathsf{sig}_i)\}:=\mathsf{responses}$
					\item[] if $\mathsf{vrfy}(\mathsf{result}_i||\mathsf{ctr}, \mathsf{sig}_i)=1$
					\begin{itemize}[leftmargin=0.4cm]
						\item[] call {\color{Lavender}$\mathsf{Payout}^\prime$}$(\mathsf{result}_i, \cP^\ell_N)$ subroutine in Fig \ref{fig:payout1} 
						\item[] $\cLedger[\mathcal{LW}]:=\cLedger[\mathcal{LW}]+\$d_L$
						\item[] return
					\end{itemize}
				\end{itemize}
			
				\item[]  {\color{Dandelion} // Clause 10}

						\item[] $\cLedger[\mathcal{R}_i]:=\cLedger[\mathcal{R}_i]+\${d_F}$
						\item[] $\cLedger[\mathcal{R}_{|1-i|}]:=\cLedger[\mathcal{R}_{|1-i|}]+\${d_F}$
						\item[] $\cLedger[\mathcal{LW}]:=\cLedger[\mathcal{LW}]+\$d_L$

			\end{itemize}
			\vspace{-3mm}
			}
		}%
	}
	\caption{{\color{BrickRed}$\mathsf{Incentive}$} subroutine (two non-cooperative relays).}\label{fig:incentive2}
 			\vspace{-3mm}
\end{figure}

\begin{figure}[H]
	\fbox{%
		\parbox{1.01\linewidth}{%
			\vspace{-3mm}	
			\begin{center}
				{\bf {\color{CadetBlue}$\mathsf{Payout}$} subroutine} 
			\end{center}
			\vspace{-3mm}
			{\scriptsize
			{\color{CadetBlue}$\mathsf{Payout}$}$(\mathsf{result}_1, \mathsf{result}_2, \cP^\ell_N)$:
			\vspace{-3mm}
			
			\begin{itemize}[leftmargin=0.4cm]
				\item[] if $\mathsf{result}_1$ can be parsed as $\sigma_1$ then
				\begin{itemize}[leftmargin=0.4cm]
					\item[] if $\validate(\sigma_1, \cP^\ell_N)=1$ then 
					\begin{itemize}[leftmargin=0.4cm]
						\item[] if $\mathsf{result}_2$ can be parsed as $\sigma_2$ then
						\begin{itemize}[leftmargin=0.4cm]
							\item[] if $\validate(\sigma_2, \cP^\ell_N)=1$ then  {\color{Dandelion} // Clause 1}
							\begin{itemize}[leftmargin=0.4cm]
								\item[] $\cLedger[\mathcal{R}_i]:=\cLedger[\mathcal{R}_i]+\$(\frac{p}{2}+ {d_F})$, $\forall i \in \{1,2\}$
								\item[] $\cLedger[\mathcal{LW}]:=\cLedger[\mathcal{LW}]+\$(e)$ 
							\end{itemize}
							\item[] else {\color{gray}// i.e., $\validate(\sigma_2, \cP^\ell_N)=0$} {\color{Dandelion} // Clause 2}
							\begin{itemize}[leftmargin=0.4cm]
								\item[] $\cLedger[\mathcal{R}_1]:=\cLedger[\mathcal{R}_1]+\$(p+\frac{3}{2}d_F)$
								\item[] $\cLedger[\mathcal{LW}]:=\cLedger[\mathcal{LW}]+\$(e+\frac{d_F}{2})$ 
							\end{itemize}
						\end{itemize}
						\item[] else {\color{gray}// i.e., $\mathsf{result}_2=\bot$} {\color{Dandelion} // Clause 3}
						\begin{itemize}[leftmargin=0.4cm]
							\item[] $\cLedger[\mathcal{R}_1]:=\cLedger[\mathcal{R}_1]+\$(p+\frac{3}{2}d_F)$
							\item[] $\cLedger[\mathcal{LW}]:=\cLedger[\mathcal{LW}]+\$(e+\frac{d_F}{2})$ 
						\end{itemize}
					\end{itemize}
					\item[] else {\color{gray}// i.e., $\validate(\sigma_1, \cP^\ell_N)=0$}
					\begin{itemize}[leftmargin=0.4cm]
						\item[] if $\mathsf{result}_2$ can be parsed as $\sigma_2$ then
						\begin{itemize}[leftmargin=0.4cm]
							\item[] if $\validate(\sigma_2, \cP^\ell_N)=1$ then {\color{Dandelion} // Clause 2}
							\begin{itemize}[leftmargin=0.4cm]
								\item[] $\cLedger[\mathcal{R}_2]:=\cLedger[\mathcal{R}_2]+\$(p+\frac{3}{2}d_F)$
								\item[] $\cLedger[\mathcal{LW}]:=\cLedger[\mathcal{LW}]+\$(e+\frac{d_F}{2})$
							\end{itemize}
							\item[] else {\color{gray}// i.e., $\validate(\sigma_2, \cP^\ell_N)=0$} {\color{Dandelion} // Clause 4}
							\begin{itemize}[leftmargin=0.4cm]
								\item[] $\cLedger[\mathcal{LW}]:=\cLedger[\mathcal{LW}]+\$(p+e+2{d_F})$ 
							\end{itemize}
						\end{itemize}
						\item[] else {\color{gray}// i.e., $\mathsf{result}_2=\bot$}  {\color{Dandelion} // Clause 5}
						\begin{itemize}[leftmargin=0.4cm]
							\item[] $\cLedger[\mathcal{R}_1]:=\cLedger[\mathcal{R}_1]+\$(p/2-r+{d_F})$
							\item[] $\cLedger[\mathcal{LW}]:=\cLedger[\mathcal{LW}]+\$(\frac{p}{2}+e +r+{d_F})$ 
						\end{itemize}
					\end{itemize}
				\end{itemize}
				\item[] else {\color{gray}// i.e., $\mathsf{result}_1=\bot$}
				\begin{itemize}[leftmargin=0.4cm]
					\item[] if $\mathsf{result}_2$ can be parsed as $\sigma_2$ then
					\begin{itemize}[leftmargin=0.4cm]
						\item[] if $\validate(\sigma_2, \cP^\ell_N)=1$ then  {\color{Dandelion} // Clause 3}
						\begin{itemize}[leftmargin=0.4cm]
							\item[] $\cLedger[\mathcal{R}_2]:=\cLedger[\mathcal{R}_2]+\$(p+\frac{3d_F}{2})$
							\item[] $\cLedger[\mathcal{LW}]:=\cLedger[\mathcal{LW}]+\$(e+\frac{d_F}{2})$ 
						\end{itemize}
						\item[] else {\color{gray}// i.e., $\validate(\sigma_2, \cP^\ell_N)=0$} {\color{Dandelion} // Clause 5}
						\begin{itemize}[leftmargin=0.4cm]
							\item[] $\cLedger[\mathcal{R}_2]:=\cLedger[\mathcal{R}_2]+\$(p/2-r+{d_F})$
							\item[] $\cLedger[\mathcal{LW}]:=\cLedger[\mathcal{LW}]+\$(\frac{p}{2}+e +r+{d_F})$ 
						\end{itemize}
					\end{itemize}
					\item[] else {\color{gray}// i.e., $\mathsf{result}_2=\bot$} {\color{Dandelion} // Clause 6}
					\begin{itemize}[leftmargin=0.4cm]
						\item[] $\cLedger[\mathcal{R}_i]:=\cLedger[\mathcal{R}_i]+\$(\frac{p}{2}-r+{d_F})$, $\forall i \in \{1,2\}$
						\item[] $\cLedger[\mathcal{LW}]:=\cLedger[\mathcal{LW}]+\$(e+2r)$ 
					\end{itemize}
				\end{itemize}
				
			\end{itemize}
			
			\vspace{-3mm}
			
			
			}
		}%
	}
	\caption{$\mathsf{Payout}$ subroutine called by Fig 9.}	\label{fig:payout}
	\vspace{-3mm}
\end{figure}

\begin{figure}[H]
	\centering
	\fbox{%
		\parbox{0.75\linewidth}{%
			\vspace{-3mm}
			\begin{center}
				{\bf {\color{Lavender}$\mathsf{Payout}^\prime$} subroutine} 
			\end{center}
			\vspace{-3mm}
			{\scriptsize
			{\color{Lavender}$\mathsf{Payout}^\prime$}$(\mathsf{result}_i, \cP^\ell_N)$:
			\vspace{-3mm}

			\begin{itemize}[leftmargin=0.4cm]
				\item[] if $\mathsf{result}_i$ can be parsed as $\sigma_i$ then
				\begin{itemize}[leftmargin=0.4cm]
					\item[] if $\validate(\sigma_i, \cP^\ell_N)=1$ then {\color{Dandelion} // Clause 7}
					\begin{itemize}
						\item[] $\cLedger[\mathcal{R}_i]:=\cLedger[\mathcal{R}_i]+\$(p+{d_F})$
						\item[] $\cLedger[\mathcal{R}_{|1-i|}]:=\cLedger[\mathcal{R}_{|1-i|}]+\$({d_F})$
						\item[] $\cLedger[\mathcal{LW}]:=\cLedger[\mathcal{LW}]+\$(\frac{e}{2})$
					\end{itemize}
					\item[] else  {\color{gray}// i.e., $\validate(\sigma_i, \cP^\ell_N)=0$} {\color{Dandelion} // Clause 8}
					\begin{itemize}
						\item[] $\cLedger[\mathcal{R}_{|1-i|}]:=\cLedger[\mathcal{R}_{|1-i|}]+\$({d_F})$
						\item[] $\cLedger[\mathcal{LW}]:=\cLedger[\mathcal{LW}]+\$(\frac{p+e}{2}+\frac{d_F}{2})$
					\end{itemize}
				\end{itemize}
				
				\item[] else  {\color{gray}// i.e., $\mathsf{result}_i=\bot$} {\color{Dandelion} // Clause 9}
				\begin{itemize}
					\item[] $\cLedger[\mathcal{R}_i]:=\cLedger[\mathcal{R}_i]+\$(\frac{p}{2}-r+{d_F})$
					\item[] $\cLedger[\mathcal{R}_{|1-i|}]:=\cLedger[\mathcal{R}_{|1-i|}]+\$({d_F})$
					\item[] $\cLedger[\mathcal{LW}]:=\cLedger[\mathcal{LW}]+\$(\frac{e}{2}+r)$
				\end{itemize}
				
			\end{itemize}
			\vspace{-3mm}
			}
		}%
	}
	\caption{$\mathsf{Payout}'$ subroutine called by Fig 9.}	\label{fig:payout1}
\end{figure}

\subsection{Basic incentive for the protocol with one single relay}\label{incentive:1-1}

When the light client protocol is joined by only one single relay node (which models that the client does not believe there are non-cooperative relays), we tune the incentive to let the only relay node to audit itself by ``asymmetrically'' pay the proved claim of truthness and the unproved claim of falseness (with an extra protocol parameter $r$), and thus forwarding the correct evaluation result of the chain predicate becomes dominating.

Remark that the similar tuning can be straightly adopted by the incentive subroutine of the protocol for two relays; actually, we already incorporate the idea of introducing the extra parameter $r$ in the incentive subroutine of the protocol for two relays. For presentation simplicity, we analyze the effect of the idea of using ``asymmetrically''  payoffs for one single relay (instead of the two colluding relays).

\begin{figure}[H]
	\centering
	\fbox{%
		\parbox{0.75\linewidth}{%
			\vspace{-3mm}
			\begin{center}
				{\bf {\color{BrickRed}$\mathsf{Incentive}$}  subroutine} for one single relay
			\end{center}
			\vspace{-3mm}
			{\scriptsize
			{\color{BrickRed}$\mathsf{Incentive}$}$(${\color{TealBlue}$\mathsf{responses}$}$, \cP^\ell_N)$:
			\vspace{-3mm}
			\begin{itemize}[leftmargin=0.4cm]
				\item[] if $|\mathsf{responses}|=1$ then 
				\begin{itemize}[leftmargin=0.4cm]
					\item[] parse $\{(\mathsf{result}_i, \mathsf{sig}_i)\}:=\mathsf{responses}$
					\item[] if $\mathsf{vrfy}(\mathsf{result}_i||\mathsf{ctr}, \mathsf{sig}_i)=1$
					\begin{itemize}[leftmargin=0.4cm]
						\item[] if $\mathsf{result}_i$ can be parsed as $\sigma_i$ then
						\begin{itemize}[leftmargin=0.4cm]
							\item[] if $\validate(\sigma_i, \cP^\ell_N)=1$ then
							\begin{itemize}[leftmargin=0.4cm]
								\item[] $\cLedger[\mathcal{R}_i]:=\cLedger[\mathcal{R}_i]+\$(p+{d_F})$
								\item[] $\cLedger[\mathcal{LW}]:=\cLedger[\mathcal{LW}]+\$e$
							\end{itemize}
							\item[] else  {\color{gray}// i.e., $\validate(\sigma_i, \cP^\ell_N)=0$} 
							\begin{itemize}[leftmargin=0.4cm]
								\item[] $\cLedger[\mathcal{LW}]:=\cLedger[\mathcal{LW}]+\$(p+e+d_F)$
							\end{itemize}
						\end{itemize}
						
						\item[] else  {\color{gray}// i.e., $\mathsf{result}_i=\bot$} 
						\begin{itemize}[leftmargin=0.4cm]
							\item[] $\cLedger[\mathcal{R}_i]:=\cLedger[\mathcal{R}_i]+\$(p-r+{d_F})$
							\item[] $\cLedger[\mathcal{LW}]:=\cLedger[\mathcal{LW}]+\$(e+ r)$
						\end{itemize}
						\item[] return
					\end{itemize}
				\end{itemize}
				
				\item[] $\cLedger[\mathcal{R}_i]:=\cLedger[\mathcal{R}_i]+\${d_F}$
				\item[] $\cLedger[\mathcal{LW}]:=\cLedger[\mathcal{LW}]+\$d_L$
	
			\end{itemize}
			\vspace{-3mm}
			}
		}%
	}
	\caption{{\color{BrickRed}$\mathsf{Incentive}$} subroutine for the protocol with (one single relay).}	\label{fig:incentive1-1}
\end{figure}

\subsection{Augmented incentive   for the protocol with one relay}\label{incentive:1-2}

Here is the augmented incentive mechanism for the protocol with one single relay. Different from the aforementioned idea of using incentive to create ``self-audition'', 
we explicitly allow extra public full nodes to audit the relay node in the system. To do so, the incentive subroutine has to wait for some ``debating'' message sent from the public, once it receives the feedback from the client about what the relay node does forward. In the relay node is cheating to claim a fake unprovable side, the public can generate a proof attesting that the relay was dishonest, thus allowing the contract punish the cheating relay. 

\begin{figure}[H]
	\centering
	\fbox{%
		
		\parbox{0.75\linewidth}{%
			\vspace{-3mm}

			\begin{center}
				{\bf {\color{BrickRed}$\mathsf{Incentive}$}  subroutine} for one single relay
			\end{center}
			\vspace{-3mm}
			
			{\scriptsize
			{\color{BrickRed}$\mathsf{Incentive}$}$(${\color{TealBlue}$\mathsf{responses}$}$, \cP^\ell_N)$:
			\vspace{-3mm}

			\begin{itemize}[leftmargin=0.4cm]

				\item[] if $|\mathsf{responses}|=1$ then 
				\begin{itemize}[leftmargin=0.4cm]
					\item[] parse $\{(\mathsf{result}_i, \mathsf{sig}_i)\}:=\mathsf{responses}$
					\item[] if $\mathsf{vrfy}(\mathsf{result}_i||\mathsf{ctr}, \mathsf{sig}_i)=1$
					\begin{itemize}[leftmargin=0.4cm]
						\item[] if $\mathsf{result}_i$ can be parsed as $\sigma_i$ then
						\begin{itemize}[leftmargin=0.4cm]
							\item[] if $\validate(\sigma_i, \cP^\ell_N)=1$ then 
							\begin{itemize}[leftmargin=0.4cm]
								\item[] $\cLedger[\mathcal{R}_i]:=\cLedger[\mathcal{R}_i]+\$(p+{d_F})$
								\item[] $\cLedger[\mathcal{LW}]:=\cLedger[\mathcal{LW}]+\$e$
							\end{itemize}
							\item[] else  {\color{gray}// i.e., $\validate(\sigma_i, \cP^\ell_N)=0$} 
							\begin{itemize}[leftmargin=0.4cm]
								\item[] $\cLedger[\mathcal{LW}]:=\cLedger[\mathcal{LW}]+\$(p+e+d_F)$
							\end{itemize}
						\end{itemize}
						
						\item[] else  {\color{gray}// i.e., $\mathsf{result}_i=\bot$} 
						\begin{itemize}[leftmargin=0.4cm]
							\item[] $T_{debate}:=T+\Delta T $
							\item[] $\mathsf{debate}:=\cP^\ell_N$
						\end{itemize}
						\item[] return
					\end{itemize}
				\end{itemize}
				\item[] $\cLedger[\mathcal{R}_i]:=\cLedger[\mathcal{R}_i]+\${d_F}$
				\item[] $\cLedger[\mathcal{LW}]:=\cLedger[\mathcal{LW}]+\$d_L$
			\end{itemize}
		
			
			\begin{itemize}[leftmargin=1.2cm]
				\item[{\bf Debate.}]  Upon receiving  $(\mathsf{debating}, \cP^\ell_N, {\sigma}_{pfn})$ from $\mathcal{PFN}$:
				\begin{itemize} [leftmargin=0.4cm]
					\item[] assert $\mathsf{debate}\ne\emptyset$ and $\mathsf{debate}=\cP^\ell_N$
					\item[] if $\validate({\sigma}_{pfn}, \cP^\ell_N)=1$ then:
					\begin{itemize} [leftmargin=0.4cm]
						\item[] $\cLedger[\mathcal{PFN}]:=\cLedger[\mathcal{PFN}]+\${d_F}$
						\item[] $\cLedger[\mathcal{LW}]:=\cLedger[\mathcal{LW}]+\$(p+e)$
						\item[]  $\mathsf{debate}:=\emptyset$
					\end{itemize}
				\end{itemize}
				\item[{\bf Timer.}] Upon $T \geq  T_{debate}$ and $\mathsf{debate}\ne\emptyset$:
				\begin{itemize} [leftmargin=0.4cm]
						\item[] $\cLedger[\mathcal{R}_i]:=\cLedger[\mathcal{R}_i]+\$({d_F}+p)$
						\item[] $\cLedger[\mathcal{LW}]:=\cLedger[\mathcal{LW}]+\$e$
						\item[]  $\mathsf{debate}:=\emptyset$
				\end{itemize}
			\end{itemize}
		}
		
		\vspace{-3mm}
		
		}%
	}
	\caption{Augmented {\color{BrickRed}$\mathsf{Incentive}$} subroutine (a single relay) with another public full node.}	\label{fig:incentive1-2}
	\vspace{-3mm}
\end{figure}

\section{Inductive increments of utilities}\label{append:game}

Let $h$ denote a history in $\Gamma_2^k$ the to represent the beginning of a reachable query, at which the $\mathcal{LW}$ moves to determine whether to abort or not, and then $chance$, $\mathcal{R}_1$, $\mathcal{R}_2$ and $\mathcal{LW}$ sequentially move.
Then the utilities of  $\mathcal{LW}$, $\mathcal{R}_1$ and $\mathcal{R}_2$ can be defined recursively as shown in Table \ref{tab:utility-a}, \ref{tab:utility-a1} and \ref{tab:utility-o}.

We make the following remarks about the recursive definition of the utilities:
(i)  when $h:=\emptyset$, set $u_{\mathcal{LW}}(h):=0$, $u_{\mathcal{R}_1}(h):=0$, and $u_{\mathcal{R}_2}(h):=0$;
(ii) $u_{\mathcal{LW}}(h) = u_{\mathcal{LW}}(h) +  d_L$ and the utility functions depict the final outcomes of the players, when $|h|=6k$ (i.e. the protocol expires);
(iii) when the $chance$ choose $a$ and the lightweight client outputs $A'$, or when the $chance$ choose $a'$ and the lightweight client outputs $A$, the light client is fooled, which means its utility increment shall subtract $v$, and the utility increments of $\mathcal{R}_1$ and $\mathcal{R}_2$ shall add $v_1:=v_1(\cP^{1}_{h},\cC)$ and $v_2:=v_2(\cP^{1}_{h},\cC)$ respectively;
(iv) when the light client outputs $O$, which means the client decides to setup its own full node and $c$ is subtracted from its utility increment to reflect the cost.
%
%
Also note that if two non-cooperative relays $\mathcal{R}_1$ and $\mathcal{R}_2$ share no common conflict of cheating the light client, one more constrain is applied to ensure   $v_1(\cP^{1}_{h}, \cC) \cdot v_2(\cP^{1}_{h},\cC)=0$.


The utility functions of game $\Gamma_1^k$ and $G_1^k$ can be inductively defined similarly.

\renewcommand{\arraystretch}{1.2}
\begin{table*}[]
	\centering
	\caption{The recursive definition of utility function: $h$ denotes a history in $\Gamma_2^k$ the to represent any history at which $LW$ moves to determine raise a query or abort the protocol ``-'' means an unreachable history. (To be continued in Table \ref{tab:utility-a1}) 
	} \label{tab:utility-a}
	\tiny
	\begin{tabular}{l|llllll} 
		\hline\hline
		\begin{tabular}[c]{@{}l@{}}Info Set \\ of $LW$\end{tabular} & \begin{tabular}[c]{@{}l@{}}Histories \\ of $LW$\end{tabular} & \begin{tabular}[c]{@{}l@{}}Actions \\ of $LW$~\end{tabular} & Utility of $LW$           & Utility of $R_1$     & Utility of $R_2$       \ignore{&}           \\

		\hline\hline
		\multirow{8}{*}{$I^1_{LW}$ }                                         
		& \multirow{4}{*}{$hQatt$ }                                        & $TA$                                                            & $u_{LW}(h)+d_L-p$               & $u_{R_1}(h)+p/2+d_F$            & $u_{R_2}(h)+p/2+d_F$          \ignore{& Clause 1}   \\
		&                                                                 & $LA$                                                            & $u_{LW}(h)+d_L-p-e/2$           & -                           & -                           \ignore{& Clause  7}   \\
		&                                                                 & $RA$                                                            & $u_{LW}(h)+d_L-p-e/2$           & -                           & -                           \ignore{& Clause  7}   \\
		&                                                                 & $XA$                                                            & 
		$u_{LW}(h)+d_L-p-e$         & -                           & -                          \ignore{& Clause 10}  \\ 
		\cline{2-7}
		& \multirow{4}{*}{$hQa'ff$ }                                       & $TA$                                                           &
		$u_{LW}(h)+d_L-v+2d_F$        & $u_{R_1}(h)+v_1$    & $u_{R_2}(h)+v_2$  \ignore{& Clause  4}  \\
		&                                                                 & $LA$                                                            & 
		$u_{LW}(h)+d_L-v-p/2-e/2+d_F/2$  & -                           & -                           \ignore{& Clause  8}   \\
		&                                                                 & $RA$                                                            & 
		$u_{LW}(h)+d_L-v-p/2-e/2+d_F/2$  & -                           & -                           \ignore{& Clause  8}   \\
		&                                                                 & $XA$                                                            & 
		$u_{LW}(h)+d_L-v-p-e$       & -                           & -                           \ignore{& Clause 10}  \\ 
		\hline\hline
		\multirow{8}{*}{$I^2_{LW}$ }                                         
		& \multirow{4}{*}{$hQatf$ }                                        & $TA$                                                            & 
		$u_{LW}(h)+d_L-p+d_F $          & $u_{R_1}(h)+p+3d_F/2$     & $u_{R_2}(h)$        \ignore{& Clause  2, 3}   \\
		&                                                                 & $LA$                                                            & $u_{LW}(h)+d_L-p-e/2$             & -                           & -                           \ignore{& Clause  7}   \\
		&                                                                 & $RA$                                                            & $u_{LW}(h)+d_L-p-e/2+r$             & -                           & -                           \ignore{& Clause  9}   \\
		&                                                                 & $XA$                                                            & 
		$u_{LW}(h)+d_L-p-e$           & -                           & -                           \ignore{& Clause 10}  \\ 
		\cline{2-7}
		& \multirow{4}{*}{$hQa'ft$ }                                       & $TA$                                                            & $u_{LW}(h)+d_L-v-p/2+r+d_F $     & $u_{R_1}(h)+v_1$        & $u_{R_2}(h)+v_2+p/2-r+d_F $        & \ignore{Clause 5}   \\
		&                                                                 & $LA$                                                            & $u_{LW}(h)+d_L-v-p/2-e/2+d_F/2$      & -                           & -                           & \ignore{Clause  8}   \\
		&                                                                 & $RA$                                                            & $u_{LW}(h)+d_L-v-p-e/2+r$             & -                           & -                           & \ignore{Clause  9}   \\
		&                                                                 & $XA$                                                            & 
		$u_{LW}(h)+d_L-v-p-e$           & -                           & -                           \ignore{& Clause 10}  \\ 
		\hline\hline
		\multirow{4}{*}{$I^3_{LW}$ }                                         
		& \multirow{2}{*}{$hQatx$ }                                        & $TA$                                                            & $u_{LW}(h)+d_L-p-e/2$             & $u_{R_1}(h)+p+d_F $            & $u_{R_2}(h)+d_F $              & \ignore{Clause  7}   \\
		&                                                                 & $XA$                                                            & 
		$u_{LW}(h)+d_L-p-e$           & -                           & -                          \ignore{& Clause 10}  \\ 
		\cline{2-7}
		& \multirow{2}{*}{$hQa'fx$ }                                       & $TA$                                                            & $u_{LW}(h)+d_L-v-p/2-e/2+d_F/2$      & $u_{R_1}(h)+v_1$        & $u_{R_2}(h)+v_2+d_F $              & \ignore{Clause  8}   \\
		&                                                                 & $XA$                                                            & 
		$u_{LW}(h)+d_L-v-p-e$           & -                           & -                           \ignore{& Clause 10}  \\ 
		\hline\hline
		\multirow{8}{*}{$I^4_{LW}$ }                                         
		& \multirow{4}{*}{$hQaft$ }                                        & $TA$                                                            & $u_{LW}(h)+d_L-p+d_F/2$          & $u_{R_1}(h)$        & $u_{R_2}(h)+p+3d_F/2$     & \ignore{Clause  2, 3}   \\
		&                                                                 & $LA$                                                            & $u_{LW}(h)+d_L-p-e/2+r$             & -                           & -                           & \ignore{Clause  9}   \\
		&                                                                 & $RA$                                                            & $u_{LW}(h)+d_L-p-e/2$             & -                           & -                           & \ignore{Clause  7}   \\
		&                                                                 & $XA$                                                            & 
		$u_{LW}(h)+d_L-p-e$           & -                           & -                           \ignore{& Clause 10}  \\ 
		\cline{2-7}
		& \multirow{4}{*}{$hQa'tf$ }                                       & $TA$                                                            & $u_{LW}(h)+d_L-v-p/2+r+d_F $     & $u_{R_1}(h)+v_1+p/2-r+d_F $        & $u_{R_2}(h)+v_2$        \ignore{& Clause 5}   \\
		&                                                                 & $LA$                                                            & $u_{LW}(h)+d_L-v-p-e/2+r$             & -                           & -                           \ignore{& Clause  9}   \\
		&                                                                 & $RA$                                                            & 
		$u_{LW}(h)+d_L-v-p/2-e/2+d_F/2$      & -                           & -                           \ignore{& Clause  8}   \\
		&                                                                 & $XA$                                                            & 
		$u_{LW}(h)+d_L-v-p-e$           & -                           & -                           \ignore{& Clause 10}  \\ 
		\hline\hline
		\multirow{8}{*}{$I^5_{LW}$ }                                         
		& \multirow{4}{*}{$hQaff$ }                                        & $TA$                                                            & 
		$u_{LW}(h)+d_L-p+2r$         & $u_{R_1}(h)+p/2-r+d_F $  & $u_{R_2}(h)+p/2-r+d_F $  \ignore{& Clause  6}   \\
		&                                                                 & $LA$                                                            & 
		$u_{LW}(h)+d_L-p-e/2+r$         & -                           & -                           \ignore{& Clause  9}   \\
		&                                                                 & $RA$                                                            & 
		$u_{LW}(h)+d_L-p-e/2+r$         & -                           & -                           \ignore{& Clause  9}   \\
		&                                                                 & $XA$                                                            & 
		$u_{LW}(h)+d_L-p-e$       & -                           & -                           \ignore{& Clause 10}  \\ 
		\cline{2-7}
		& \multirow{4}{*}{$hQa'tt$ }                                       & $TA$                                                            & 
		$u_{LW}(h)+d_L-v -p+2r$           & $u_{R_1}(h)+v_1+p/2-r+d_F $        & $u_{R_2}(h)+v_2+p/2-r+d_F $        \ignore{& Clause  6}   \\
		&                                                                 & $LA$                                                            & 
		$u_{LW}(h)+d_L-v-p-e/2+r$           & -                           & -                           \ignore{& Clause  9}   \\
		&                                                                 & $RA$                                                            & 
		$u_{LW}(h)+d_L-v-p-e/2+r$           & -                           & -                           \ignore{& Clause  9}   \\
		&                                                                 & $XA$                                                            & 
		$u_{LW}(h)+d_L-v-p-e$         & -                           & -                           \ignore{& Clause 10}  \\ 
		\hline\hline
		\multirow{4}{*}{$I^6_{LW}$ }                                         
		& \multirow{2}{*}{$hQafx$ }                                        & $TA$                                                            & $u_{LW}(h)+d_L-p-e/2+r$             & $u_{R_1}(h)+p/2-r+d_F $        & $u_{R_2}(h)+d_F $              \ignore{& Clause  9}   \\
		&                                                                 & $XA$                                                            & 
		$u_{LW}(h)+d_L-p-e$           & -                           & -                           \ignore{& Clause 10}  \\ 
		\cline{2-7}
		& \multirow{2}{*}{$hQa'tx$ }                                       & $TA$                                                            & $u_{LW}(h)+d_L-v-p-e/2+r$              & $u_{R_1}(h)+v_1+p/2-r+d_F $         & $u_{R_2}(h)+v_2+d_F $              \ignore{& Clause  9}   \\
		&                                                                 & $XA$                                                            & 
		$u_{LW}(h)+d_L-v-p-e$           & -                           & -                           \ignore{& Clause 10}  \\ 
		\hline\hline
		\multirow{4}{*}{$I^7_{LW}$ }                                         
		& \multirow{2}{*}{$hQaxt$ }                                        & $TA$                                                            & $u_{LW}(h)+d_L-p-e/2$             & $u_{R_1}(h)+d_F $              & $u_{R_2}(h)+p+d_F $            \ignore{& Clause  7}   \\
		&                                                                 & $XA$                                                            & 
		$u_{LW}(h)+d_L-p-e$           & -                           & -                           \ignore{& Clause 10}  \\ 
		\cline{2-7}
		& \multirow{2}{*}{$hQa'xf$ }                                       & $TA$                                                            & $u_{LW}(h)+d_L-v-p/2-e/2+d_F/2$      & $u_{R_1}(h)+v_1+d_F $              & $u_{R_2}(h)+v_2$       \ignore{& Clause  8}   \\
		&                                                                 & $XA$                                                            & 
		$u_{LW}(h)+d_L-v-p-e$           & -                           & -                           \ignore{& Clause 10}  \\ 
		\hline\hline
		\multirow{4}{*}{$I^8_{LW}$ }                                         
		& \multirow{2}{*}{$hQaxf$ }                                        & $TA$                                                            & 
		$u_{LW}(h)+d_L-p-e/2+r$             & $u_{R_1}(h)+d_F $              & $u_{R_2}(h)+p/2-r+d_F $         \ignore{& Clause  9}   \\
		&                                                                 & $XA$                                                            & 
		$u_{LW}(h)+d_L-p-e$           & -                           & -                            \ignore{&  Clause  10}  \\ 
		\cline{2-7}
		& \multirow{2}{*}{$hQa'xt$ }                                       & $TA$                                                            & $u_{LW}(h)+d_L-v-p-e/2+r$             & $u_{R_1}(h)+v_1+d_F $              & $u_{R_2}(h)+v_2+p/2-r+d_F $         \ignore{& Clause  9}   \\
		&                                                                 & $XA$                                                            & 
		$u_{LW}(h)+d_L-v-p-e$           & -                           & -                            \ignore{& Clause  10}  \\ 
		\hline\hline
		\multirow{4}{*}{$I^9_{LW}$ }                                         
		& \multirow{2}{*}{$hQaxx$ }                                        & $TA$                                                            & 
		$u_{LW}(h)+d_L-p-e$             & $u_{R_1}(h)+d_F $              & $u_{R_2}(h)+d_F $         \ignore{& Clause  9}   \\
		&                                                                 & $XA$                                                            & 
		$u_{LW}(h)+d_L-p-e$           & -                           & -                            \ignore{&  Clause  10}  \\ 
		\cline{2-7}
		& \multirow{2}{*}{$hQa'xx$ }                                       & $TA$                                                            & $u_{LW}(h)+d_L-v-p-e$             & $u_{R_1}(h)+v_1+d_F $              & $u_{R_2}(h)+v_2+d_F $         \ignore{& Clause  9}   \\
		&                                                                 & $XA$                                                            & 
		$u_{LW}(h)+d_L-v-p-e$           & -                           & -                            \ignore{&  Clause  10} \\ 
		\hline\hline

		-   & $h$ & $B$     & $u_{LW}(h)$           & -              & -              &   \\
		\hline\hline
		
	\end{tabular}
	
\end{table*}

\renewcommand{\arraystretch}{1.2}
\begin{table*}[]
	\centering
	\caption{Continuation of Table.\ref{tab:utility-a}: The recursive definition of utility function: $h$ denotes a history in $\Gamma_2^k$ the to represent any history at which $LW$ moves to determine raise a query or abort the protocol ``-'' means an unreachable history. } \label{tab:utility-a1}
	\tiny
	\begin{tabular}{l|llllll} 
		\hline\hline
		\begin{tabular}[c]{@{}l@{}}Info Set \\ of $LW$\end{tabular} & \begin{tabular}[c]{@{}l@{}}Histories \\ of $LW$\end{tabular} & \begin{tabular}[c]{@{}l@{}}Actions \\ of $LW$~\end{tabular} & Utility of $LW$           & Utility of $R_1$     & Utility of $R_2$     \ignore{&}            \\

		\hline\hline
		\multirow{8}{*}{$I^1_{LW}$ }                                         
		& \multirow{4}{*}{$hQatt$ }                                        & $TA'$                                                            & $u_{LW}(h)+d_L-v-p$               & $u_{R_1}(h)+v_1+p/2+d_F $            & $u_{R_2}(h)+v_2+p/2+d_F $          \ignore{& Clause 1}   \\
		&                                                                 & $LA'$                                                            & $u_{LW}(h)+d_L-v-p-e/2$           & -                           & -                           \ignore{& Clause  7}   \\
		&                                                                 & $RA'$                                                            & $u_{LW}(h)+d_L-v-p-e/2$           & -                           & -                           \ignore{& Clause  7}   \\
		&                                                                 & $XA'$                                                            & 
		$u_{LW}(h)+d_L-v-p-e$         & -                           & -                           \ignore{& Clause 10}  \\ 
		\cline{2-7}
		& \multirow{4}{*}{$hQa'ff$ }                                       & $TA'$                                                           &
		$u_{LW}(h)+d_L+2d_F $        & $u_{R_1}(h)$    & $u_{R_2}(h)$  \ignore{& Clause  4}   \\
		&                                                                 & $LA'$                                                            & 
		$u_{LW}(h)+d_L-p/2-e/2+d_F/2$  & -                           & -                           \ignore{& Clause  8}   \\
		&                                                                 & $RA'$                                                            & 
		$u_{LW}(h)+d_L-p/2-e/2+d_F/2$  & -                           & -                           \ignore{& Clause  8}   \\
		&                                                                 & $XA'$                                                            & 
		$u_{LW}(h)+d_L-p-e$       & -                           & -                           \ignore{& Clause 10}  \\ 
		\hline\hline
		\multirow{8}{*}{$I^2_{LW}$ }                                         
		& \multirow{4}{*}{$hQatf$ }                                        & $TA'$                                                            & 
		$u_{LW}(h)+d_L-v-p+d_F/2$          & $u_{R_1}(h)+v_1+p+3d_F/2$     & $u_{R_2}(h)+v_2$        \ignore{& Clause  2, 3}   \\
		&                                                                 & $LA'$                                                            & $u_{LW}(h)+d_L-v-p-e/2$             & -                           & -                            \ignore{& Clause  7}   \\
		&                                                                 & $RA'$                                                            & $u_{LW}(h)+d_L-v-p-e/2+r$             & -                           & -                            \ignore{& Clause  9}   \\
		&                                                                 & $XA'$                                                            & 
		$u_{LW}(h)+d_L-v-p-e$           & -                           & -                           \ignore{& Clause 10}  \\ 
		\cline{2-7}
		& \multirow{4}{*}{$hQa'ft$ }                                       & $TA'$                                                            & 
		$u_{LW}(h)+d_L-p/2+r+d_F $     & $u_{R_1}(h)$        & $u_{R_2}(h)+p/2-r+d_F $        \ignore{& Clause 5}   \\
		&                                                                 & $LA'$                                                            & $u_{LW}(h)+d_L-p/2-e/2+d_F/2$      & -                           & -                          \ignore{ & Clause  8}   \\
		&                                                                 & $RA'$                                                            & $u_{LW}(h)+d_L-p-e/2+r$             & -                           & -                           \ignore{& Clause  9}   \\
		&                                                                 & $XA'$                                                            & 
		$u_{LW}(h)+d_L-p-e$           & -                           & -                           \ignore{& Clause 10}  \\ 
		\hline\hline
		\multirow{4}{*}{$I^3_{LW}$ }                                         
		& \multirow{2}{*}{$hQatx$ }                                        & $TA'$                                                            & $u_{LW}(h)+d_L-v-p-e/2$             & $u_{R_1}(h)+v_1+p+d_F $            & $u_{R_2}(h)+v_2+d_F $              \ignore{& Clause  7}   \\
		&                                                                 & $XA'$                                                            & $u_{LW}(h)+d_L-v-p-e$           & -                           & -                           \ignore{& Clause 10}  \\ 
		\cline{2-7}
		& \multirow{2}{*}{$hQa'fx$ }                                       & $TA'$                                                            & $u_{LW}(h)+d_L-p/2-e/2+d_F/2$      & $u_{R_1}(h)$        & $u_{R_2}(h)+d_F $              \ignore{& Clause  8}   \\
		&                                                                 & $XA'$                                                            & 
		$u_{LW}(h)+d_L-p-e$           & -                           & -                           \ignore{& Clause 10}  \\ 
		\hline\hline
		\multirow{8}{*}{$I^4_{LW}$ }                                         
		& \multirow{4}{*}{$hQaft$ }                                        & $TA'$                                                            & $u_{LW}(h)+d_L-v-p+d_F/2$          & $u_{R_1}(h)+v_1$        & $u_{R_2}(h)+v_2+p+3d_F/2$     \ignore{& Clause  2, 3}   \\
		&                                                                 & $LA'$                                                            & $u_{LW}(h)+d_L-v-p-e/2+r$             & -                           & -                           \ignore{& Clause  9}   \\
		&                                                                 & $RA'$                                                            & $u_{LW}(h)+d_L-v-p-e/2$             & -                           & -                           \ignore{& Clause  7}   \\
		&                                                                 & $XA'$                                                            & $u_{LW}(h)+d_L-v-p-e$           & -                           & -                           \ignore{& Clause 10}  \\ 
		\cline{2-7}
		& \multirow{4}{*}{$hQa'tf$ }                                       & $TA'$                                                            & $u_{LW}(h)+p/2+r+d_F $     & $u_{R_1}(h)+d_L+p/2-r+d_F $        & $u_{R_2}(h)$        \ignore{& Clause 5}   \\
		&                                                                 & $LA'$                                                            & $u_{LW}(h)+d_L-p-e/2+r$             & -                           & -                           \ignore{& Clause  9}   \\
		&                                                                 & $RA'$                                                            & 
		$u_{LW}(h)+d_L-p/2-e/2+d_F/2$      & -                           & -                           \ignore{& Clause  8}   \\
		&                                                                 & $XA'$                                                            & 
		$u_{LW}(h)+d_L-p-e$           & -                           & -                           \ignore{& Clause 10}  \\ 
		\hline\hline
		\multirow{8}{*}{$I^5_{LW}$ }                                         
		& \multirow{4}{*}{$hQaff$ }                                        & $TA'$                                                            & 
		$u_{LW}(h)+d_L-v-p+2r$         & $u_{R_1}(h)+v_1+p/2-r+d_F $  & $u_{R_2}(h)+v_2+p/2-r+d_F $  \ignore{& Clause  6}   \\
		&                                                                 & $LA'$                                                            & 
		$u_{LW}(h)+d_L-v-p-e/2+r$         & -                           & -                           \ignore{& Clause  9}   \\
		&                                                                 & $RA'$                                                            & 
		$u_{LW}(h)+d_L-v-p-e/2+r$         & -                           & -                           \ignore{& Clause  9}   \\
		&                                                                 & $XA'$                                                            & 
		$u_{LW}(h)+d_L-v-p-e$       & -                           & -                           \ignore{& Clause 10}  \\ 
		\cline{2-7}
		& \multirow{4}{*}{$hQa'tt$ }                                       & $TA'$                                                            & 
		$u_{LW}(h)+d_L -p+2r$           & $u_{R_1}(h)+p/2-r+d_F $        & $u_{R_2}(h)+p/2-r+d_F $        \ignore{& Clause  6}   \\
		&                                                                 & $LA'$                                                            & 
		$u_{LW}(h)+d_L-p-e/2+r$           & -                           & -                           \ignore{& Clause  9}   \\
		&                                                                 & $RA'$                                                            & 
		$u_{LW}(h)+d_L-p-e/2+r$           & -                           & -                           \ignore{& Clause  9}   \\
		&                                                                 & $XA'$                                                            & 
		$u_{LW}(h)+d_L-p-e$         & -                           & -                           \ignore{& Clause 10}  \\ 
		\hline\hline
		\multirow{4}{*}{$I^6_{LW}$ }                                         
		& \multirow{2}{*}{$hQafx$ }                                        & $TA'$                                                            & $u_{LW}(h)+d_L-v-p-e/2+r$             & $u_{R_1}(h)+v_1+p/2-r+d_F $        & $u_{R_2}(h)+v_2+d_F $              \ignore{& Clause  9}   \\
		&                                                                 & $XA'$                                                            & 
		$u_{LW}(h)+d_L-v-p-e$           & -                           & -                           \ignore{& Clause 10}  \\ 
		\cline{2-7}
		& \multirow{2}{*}{$hQa'tx$ }                                       & $TA'$                                                            & $u_{LW}(h)+d_L-p-e/2+r$              & $u_{R_1}(h)+p/2-r+d_F $         & $u_{R_2}(h)+d_F $              \ignore{& Clause  9}   \\
		&                                                                 & $XA'$                                                            & 
		$u_{LW}(h)+d_L-p-e$           & -                           & -                           \ignore{& Clause 10}  \\ 
		\hline\hline
		\multirow{4}{*}{$I^7_{LW}$ }                                         
		& \multirow{2}{*}{$hQaxt$ }                                        & $TA'$                                                            & $u_{LW}(h)+d_L-v-p-e/2$             & $u_{R_1}(h)+v_1+d_F $              & $u_{R_2}(h)+v_2+p+d_F $            \ignore{& Clause  7}   \\
		&                                                                 & $XA'$                                                            & 
		$u_{LW}(h)+d_L-v-p-e$           & -                           & -                           \ignore{& Clause 10}  \\ 
		\cline{2-7}
		& \multirow{2}{*}{$hQa'xf$ }                                       & $TA'$                                                            & $u_{LW}(h)+d_L-p/2-e/2+d_F/2$      & $u_{R_1}(h)+d_F $              & $u_{R_2}(h)$        \ignore{& Clause  8}   \\
		&                                                                 & $XA'$                                                            & 
		$u_{LW}(h)+d_L-p-e$           & -                           & -                           \ignore{& Clause 10}  \\ 
		\hline\hline
		\multirow{4}{*}{$I^8_{LW}$ }                                         
		& \multirow{2}{*}{$hQaxf$ }                                        & $TA'$                                                            & 
		$u_{LW}(h)+d_L-v-p-e/2+r$             & $u_{R_1}(h)+v_1+d_F $              & $u_{R_2}(h)+v_2+p/2-r+d_F $        \ignore{& Clause  9}   \\
		&                                                                 & $XA'$                                                            & 
		$u_{LW}(h)+d_L-v-p-e$           & -                           & -                           \ignore{& Clause 10}  \\ 
		\cline{2-7}
		& \multirow{2}{*}{$hQa'xt$ }                                       & $TA'$                                                            & $u_{LW}(h)+d_L-p-e/2+r$             & $u_{R_1}(h)+d_F $              & $u_{R_2}(h)+p/2-r+d_F $        \ignore{& Clause  9}   \\
		&                                                                 & $XA'$                                                            & 
		$u_{LW}(h)+d_L-e$           & -                           & -                           \ignore{& Clause 10}  \\ 
		\hline\hline
		\multirow{2}{*}{$I^9_{LW}$ }                                         
		& $hQaxx$                                                          & $TA'$                                                            & $u_{LW}(h)+d_L-v-p-e$           & $u_{R_1}(h)+v_1$              & $u_{R_2}(h)+v_2$              \ignore{& Clause 10}  \\ 
		& $hQaxx$                                                          & $XA'$                                                            & $u_{LW}(h)+d_L-v-p-e$           & $u_{R_1}(h)+v_1$              & $u_{R_2}(h)+v_2$              \ignore{& Clause 10}  \\
		\cline{2-7}
		& $hQa'xx$                                                         & $TA'$                                                            & 
		$u_{LW}(h)+d_L-p-e$           & $u_{R_1}(h)$              & $u_{R_2}(h)$              \ignore{& Clause 10}  \\
				& $hQa'xx$                                                         & $XA'$                                                            & 
		$u_{LW}(h)+d_L-p-e$           & $u_{R_1}(h)$              & $u_{R_2}(h)$              \ignore{& Clause 10}  \\
		\hline\hline

		-   & $h$ & $B$     & $u_{LW}(h)$           & -              & -         &      \\
		\hline\hline
		
	\end{tabular}
	
\end{table*}


\renewcommand{\arraystretch}{1.2}
\begin{table*}[]
	\centering
	\caption{Continuation of Table.\ref{tab:utility-a}: The recursive definition of utility function: $h$ denotes a history in $\Gamma_2^k$ the to represent any history at which $LW$ moves to determine raise a query or abort the protocol ``-'' means an unreachable history.} \label{tab:utility-o}
	\tiny
	\begin{tabular}{l|llllll} 
		
		\hline\hline
		\begin{tabular}[c]{@{}l@{}}Info Set \\ of $LW$\end{tabular} & \begin{tabular}[c]{@{}l@{}}Histories \\ of $LW$\end{tabular} & \begin{tabular}[c]{@{}l@{}}Actions \\ of $LW$~\end{tabular} & Utility of $LW$           & Utility of $R_1$     & Utility of $R_2$     &            \\

		\hline\hline
		\multirow{8}{*}{$I^1_{LW}$ }                                         
		& \multirow{4}{*}{$hQatt$ }                                        & $TO$                                                            & $u_{LW}(h)-c-p$               & $u_{R_1}(h)+p/2+d_F $            & $u_{R_2}(h)+p/2+d_F $          \ignore{& Clause 1}    \\
		&                                                                 & $LO$                                                            & $u_{LW}(h)-c-p-e$           & -                           & -                           \ignore{& Clause  7}    \\
		&                                                                 & $RO$                                                            & $u_{LW}(h)-c-p-e$           & -                           & -                           \ignore{& Clause  7}    \\
		&                                                                 & $XO$                                                            & 
		$u_{LW}(h)-c-p-e+\epsilon$         & -                           & -                           \ignore{& Clause 10}   \\ 
		\cline{2-7}
		& \multirow{4}{*}{$hQa'ff$ }                                       & $TO$                                                           &
		$u_{LW}(h)-c+2d_F $        & $u_{R_1}(h)$    & $u_{R_2}(h)$  \ignore{& Clause  4}    \\
		&                                                                 & $LO$                                                            & 
		$u_{LW}(h)-c-p/2-e/2+d_F/2$  & -                           & -                           \ignore{& Clause  8}    \\
		&                                                                 & $RO$                                                            & 
		$u_{LW}(h)-c-p/2-e/2+d_F/2$  & -                           & -                           \ignore{& Clause  8}    \\
		&                                                                 & $XO$                                                            & 
		$u_{LW}(h)-c-p-e+\epsilon$       & -                           & -                           \ignore{& Clause 10}   \\ 
		\hline\hline
		\multirow{8}{*}{$I^2_{LW}$ }                                         
		& \multirow{4}{*}{$hQatf$ }                                        & $TO$                                                            & 
		$u_{LW}(h)-c-p+d_F/2$          & $u_{R_1}(h)+p+3d_F/2$     & $u_{R_2}(h)$        \ignore{& Clause  2, 3}    \\
		&                                                                 & $LO$                                                            & $u_{LW}(h)-c-p-e/2$             & -                           & -                           \ignore{& Clause  7}    \\
		&                                                                 & $RO$                                                            & $u_{LW}(h)-c-p-e/2+r$             & -                           & -                           \ignore{& Clause  9}    \\
		&                                                                 & $XO$                                                            & 
		$u_{LW}(h)-c-p-e$           & -                           & -                           \ignore{& Clause 10}   \\ 
		\cline{2-7}
		& \multirow{4}{*}{$hQa'ft$ }                                       & $TO$                                                            & $u_{LW}(h)-c-p/2+r+d_F $     & $u_{R_1}(h)$        & $u_{R_2}(h)+p/2-r+d_F $        \ignore{& Clause 5}    \\
		&                                                                 & $LO$                                                            & $u_{LW}(h)-c-p/2-e/2+d_F/2$      & -                           & -                           \ignore{& Clause  8}    \\
		&                                                                 & $RO$                                                            & $u_{LW}(h)-c-p-e/2+r$             & -                           & -                           \ignore{& Clause  9}    \\
		&                                                                 & $XO$                                                            & 
		$u_{LW}(h)-c-p-e$           & -                           & -                           \ignore{& Clause 10}   \\ 
		\hline\hline
		\multirow{4}{*}{$I^3_{LW}$ }                                         
		& \multirow{2}{*}{$hQatx$ }                                        & $TO$                                                            & $u_{LW}(h)-c-p-e/2$             & $u_{R_1}(h)+p+d_F $            & $u_{R_2}(h)+d_F $              \ignore{& Clause  7}    \\
		&                                                                 & $XO$                                                            & $u_{LW}(h)-c-p-e$           & -                           & -                           \ignore{& Clause 10}   \\ 
		\cline{2-7}
		& \multirow{2}{*}{$hQa'fx$ }                                       & $TO$                                                            & 
		$u_{LW}(h)-c-p/2-e/2+d_F/2$      & $u_{R_1}(h)$        & $u_{R_2}(h)+d_F $              \ignore{& Clause  8}    \\
		&                                                                 & $XO$                                                            & 
		$u_{LW}(h)-c-p-e$           & -                           & -                           \ignore{& Clause 10}   \\ 
		\hline\hline
		\multirow{8}{*}{$I^4_{LW}$ }                                         
		& \multirow{4}{*}{$hQaft$ }                                        & $TO$                                                            & $u_{LW}(h)-c-p+d_F/2$          & $u_{R_1}(h)$        & $u_{R_2}(h)+p+3d_F/2$     \ignore{& Clause  2, 3}    \\
		&                                                                 & $LO$                                                            & $u_{LW}(h)-c-p-e/2+r$             & -                           & -                           \ignore{& Clause  9}    \\
		&                                                                 & $RO$                                                            & $u_{LW}(h)-c-p-e/2$             & -                           & -                           \ignore{& Clause  7}    \\
		&                                                                 & $XO$                                                            & $u_{LW}(h)-c-p-e$           & -                           & -                           \ignore{& Clause 10}   \\ 
		\cline{2-7}
		& \multirow{4}{*}{$hQa'tf$ }                                       & $TO$                                                            & 
		$u_{LW}(h)-c+p/2+r+d_F $     & $u_{R_1}(h)+p/2-r+d_F $        & $u_{R_2}(h)$        \ignore{& Clause 5}    \\
		&                                                                 & $LO$                                                            & $u_{LW}(h)-c-p-e/2+r$             & -                           & -                           \ignore{& Clause  9}    \\
		&                                                                 & $RO$                                                            & 
		$u_{LW}(h)-c-p/2-e/2+d_F/2$      & -                           & -                           \ignore{& Clause  8}    \\
		&                                                                 & $XO$                                                            & 
		$u_{LW}(h)-c-p-e$           & -                           & -                           \ignore{& Clause 10}   \\ 
		\hline\hline
		\multirow{8}{*}{$I^5_{LW}$ }                                         
		& \multirow{4}{*}{$hQaff$ }                                        & $TO$                                                            & 
		$u_{LW}(h)-c-p+2r$         & $u_{R_1}(h)+p/2-r+d_F $  & $u_{R_2}(h)+p/2-r+d_F $  \ignore{& Clause  6}    \\
		&                                                                 & $LO$                                                            & 
		$u_{LW}(h)-c-p-e/2+r$         & -                           & -                           \ignore{& Clause  9}    \\
		&                                                                 & $RO$                                                            & 
		$u_{LW}(h)-c-p-e/2+r$         & -                           & -                           \ignore{& Clause  9}    \\
		&                                                                 & $XO$                                                            & 
		$u_{LW}(h)-c-p-e$       & -                           & -                           \ignore{& Clause 10}   \\ 
		\cline{2-7}
		& \multirow{4}{*}{$hQa'tt$ }                                       & $TO$                                                            & 
		$u_{LW}(h)-c -p+2r$           & $u_{R_1}(h)+p/2-r+d_F $        & $u_{R_2}(h)+p/2-r+d_F $        \ignore{& Clause  6}    \\
		&                                                                 & $LO$                                                            & 
		$u_{LW}(h)-c-p-e/2+r$           & -                           & -                           \ignore{& Clause  9}    \\
		&                                                                 & $RO$                                                            & 
		$u_{LW}(h)-c-p-e/2+r$           & -                           & -                           \ignore{& Clause  9}    \\
		&                                                                 & $XO$                                                            & 
		$u_{LW}(h)-c-p-e$         & -                           & -                           \ignore{& Clause 10}   \\ 
		\hline\hline
		\multirow{4}{*}{$I^6_{LW}$ }                                         
		& \multirow{2}{*}{$hQafx$ }                                        & $TO$                                                            & $u_{LW}(h)-c-p-e/2+r$             & $u_{R_1}(h)+p/2-r+d_F $        & $u_{R_2}(h)+d_F $              \ignore{& Clause  9}    \\
		&                                                                 & $XO$                                                            & 
		$u_{LW}(h)-c-p-e$           & -                           & -                           \ignore{& Clause 10}   \\ 
		\cline{2-7}
		& \multirow{2}{*}{$hQa'tx$ }                                       & $TO$                                                            & $u_{LW}(h)-c-p-e/2+r$              & $u_{R_1}(h)+p/2-r+d_F $         & $u_{R_2}(h)+d_F $             \ignore{ & Clause  9}    \\
		&                                                                 & $XO$                                                            & 
		$u_{LW}(h)-c-p-e$           & -                           & -                           \ignore{& Clause 10}   \\ 
		\hline\hline
		\multirow{4}{*}{$I^7_{LW}$ }                                         
		& \multirow{2}{*}{$hQaxt$ }                                        & $TO$                                                            & $u_{LW}(h)-c-p-e/2$             & $u_{R_1}(h)+d_F $              & $u_{R_2}(h)+p+d_F $            \ignore{& Clause  7}    \\
		&                                                                 & $XO$                                                            & 
		$u_{LW}(h)-c-p-e$           & -                           & -                           \ignore{& Clause 10}   \\ 
		\cline{2-7}
		& \multirow{2}{*}{$hQa'xf$ }                                       & $TO$                                                            & $u_{LW}(h)-c-p/2-e/2+d_F/2$      & $u_{R_1}(h)+d_F $              & $u_{R_2}(h)$        \ignore{& Clause  8}   \\
		&                                                                 & $XO$                                                            & 
		$u_{LW}(h)-c-p-e$           & -                           & -                           \ignore{& Clause 10}   \\ 
		\hline\hline
		\multirow{4}{*}{$I^8_{LW}$ }                                         
		& \multirow{2}{*}{$hQaxf$ }                                        & $TO$                                                            & 
		$u_{LW}(h)-c-p-e/2+r$             & $u_{R_1}(h)+d_F $              & $u_{R_2}(h)+p/2-r+d_F $        \ignore{& Clause  9}   \\
		&                                                                 & $XO$                                                            & 
		$u_{LW}(h)-c-p-e$           & -                           & -                           \ignore{& Clause 10}  \\ 
		\cline{2-7}
		& \multirow{2}{*}{$hQa'xt$ }                                       & $TO$                                                            & $u_{LW}(h)-c-p-e/2+r$             & $u_{R_1}(h)+d_F $              & $u_{R_2}(h)+p/2-r+d_F $        \ignore{& Clause  9}   \\
		&                                                                 & $XO$                                                            & 
		$u_{LW}(h)-c-e$           & -                           & -                           \ignore{& Clause 10}  \\ 
		\hline\hline
		\multirow{2}{*}{$I^9_{LW}$ }                                         
		& $hQaxx$                                                          & $TO$                                                            & $u_{LW}(h)-c-p-e$           & $u_{R_1}(h)$              & $u_{R_2}(h)$              \ignore{& Clause 10}  \\ 
		& $hQaxx$                                                          & $XO$                                                            & $u_{LW}(h)-c-p-e$           & $u_{R_1}(h)$              & $u_{R_2}(h)$              \ignore{& Clause 10}  \\ 
		\cline{2-7}
		& $hQa'xx$                                                         & $TO$                                                            & 
		$u_{LW}(h)-c-p-e$           & $u_{R_1}(h)$              & $u_{R_2}(h)$              \ignore{& Clause 10}  \\
		& $hQa'xx$                                                         & $XO$                                                            & 
		$u_{LW}(h)-c-p-e$           & $u_{R_1}(h)$              & $u_{R_2}(h)$              \ignore{& Clause 10}  \\
		\hline\hline

		-   & $h$ & $B$     & $u_{LW}(h)$           & -              & -              \ignore{& Clause 11} \\
		\hline\hline
		
	\end{tabular}
	
\end{table*}


\section{Proofs for security theorems}\label{append:crypto}

\subsection{Proof for theorem \ref{the:2}}

\begin{lemma}\label{lemma:lwn} 
	If the client raises a query in the game $\Gamma^k_2$, the sequentially rational strategies of the light client $\mathcal{LW}$ (under any belief system) will not include   $L$, $R$ and $X$ (i.e., the light client will always take $T$ to report the contract the whatever it receives from the relay nodes) in this query. 
\end{lemma}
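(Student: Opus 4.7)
\medskip
\noindent\textbf{Proof plan for Lemma \ref{lemma:lwn}.} The plan is to prove strict dominance of $T$ over $L$, $R$, $X$ at each of the nine client information sets $I^{\mathcal{LW}}_1,\dots,I^{\mathcal{LW}}_9$ in the query phase, and then lift this single-query dominance to the full repeated game $\Gamma^k_2$ by backward induction on the query counter. The core observation driving the proof is that the feedback action only affects how the escrow $\$e$ (and some portion of $\$d_F$) is refunded by the $\mathsf{Incentive}$ subroutine, not the structural state of the game that carries into subsequent queries.

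First, I would do an exhaustive case analysis over the nine client information sets and, within each, over each possible history and each output choice $\{A,A',O\}$. Reading directly off Tables~\ref{tab:utility-a}, \ref{tab:utility-a1}, and \ref{tab:utility-o}, one checks that the client's utility increment at the current query under $T$ exceeds that under $L$ or $R$ by at least $e/2$, and exceeds that under $X$ by at least $e/2$ as well, uniformly across every history in the information set and every choice of output. For example, at $I^{\mathcal{LW}}_1$ the history $hQatt$ gives increments $d_L-p$, $d_L-p-e/2$, $d_L-p-e/2$, $d_L-p-e$ for $TA,LA,RA,XA$ respectively, and the same gap of at least $e/2$ appears at $hQa'ff$ and at all other histories in all other information sets. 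Since the gap is independent of the action of $chance$ and of either relay, it holds under \emph{every} belief system $\mu$ the client may hold at the information set.

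Next, I would promote the single-query strict dominance to the repeated game via backward induction on $\mathsf{ctr}$. In the base case $\mathsf{ctr}=1$ no continuation remains, so the immediate dominance establishes the lemma. For the inductive step, assume $T$ is the unique sequentially rational feedback in every subsequent query. The post-query state of the contract $\mathcal{G}_{ac}$ (the decremented counter, updated ledger, and unchanged $\mathsf{pubKeys}$) differs across the actions $T,L,R,X$ only in the allocation of the current query's escrow, which is already fully accounted for by the current-query increment captured in the utility tables. Consequently the relays' information sets in every subsequent query, and therefore the fixed continuation strategies in the assumed sequential equilibrium, are identical across the four current-feedback choices. The continuation utility contribution is thus the same, and the current-query strict gap of $e/2$ survives into the total expected utility.

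The main obstacle I anticipate is the worry that the relays, by reading the contract, could make their future responses depend on the current feedback and thereby ``punish'' the client for playing $T$. I would neutralize this by noting that the $\mathsf{Incentive}$ subroutine only records which response(s) were reported for the current query and then resets $\mathsf{state}$ to $\mathsf{READY}$ with all relay-facing state fields wiped clean; the only history visible to a relay entering the next query is its own internal record plus the current ledger balances, which by the computation above shift identically across $T,L,R,X$ up to the already-accounted escrow redistribution. Hence no equilibrium relay strategy can condition future play on the current $T$-vs-others distinction in any payoff-relevant way, and any client behavior that places positive probability on $L$, $R$, or $X$ at any $I^{\mathcal{LW}}_j$ would admit a strictly profitable one-shot deviation to $T$, contradicting $\epsilon$-sequential rationality for $\epsilon < e/2$.
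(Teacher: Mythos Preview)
Your proposal takes the same approach as the paper---checking from the utility tables that $T$ dominates $L$, $R$, $X$ history-by-history---but is considerably more thorough; the paper's own proof is a two-sentence appeal to the recursive utility definition without any explicit backward induction over queries.

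Two small points are worth tightening. First, the uniform $e/2$ gap fails at $I^{\mathcal{LW}}_9$: when both relays abort, the tables give identical increments for $T$ and $X$ (both trigger Clause~10), so there you have only weak dominance (or an $\epsilon$-gap once you account for the negligible probability of forging a signature). This does not affect the downstream argument, since $I^{\mathcal{LW}}_9$ is off the equilibrium path, but your blanket ``at least $e/2$'' claim should be qualified. Second, your resolution of the ``relays might punish $T$'' obstacle is not quite right as stated: the relays \emph{can} read the contract and hence their information sets in later queries \emph{do} differ across $T,L,R,X$. The cleaner fix is to run the backward induction jointly with Lemma~\ref{lemma:2}: at the last query establish both lemmas; then, having pinned down that the relays play $t$ in all later queries regardless of earlier feedback, the client's continuation value is indeed constant across $T,L,R,X$, and the stage-game dominance carries through. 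This is what the paper's proof of Theorem~\ref{the:2} implicitly does via ``backward reduction,'' though it never spells it out.
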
 

\begin{proof} 
It is clear to see the Lemma from the recursive formulation of the utility function of $\mathcal{LW}$. 
Because no matter at any history of any information set, taking an action including $L$, $R$ or $X$ is dominated by replacing the character by $T$.
\end{proof} 

\begin{lemma} \label{lemma:2} 
	At the last query (history $h$) in the game $\Gamma^k_2$, if the client raises the last query (i.e. reaching the history $hQ$),
	when non-cooperative $\mathcal{R}_1$ and $\mathcal{R}_2$ are non-cooperative, 
	the game terminates in $hQ(attTA|a'ttTA')$,   conditioned on $d_F + p/2 >   v_i $.
\end{lemma}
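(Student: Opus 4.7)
The plan is to exhibit a strategy profile and belief system forming a $\mathit{negl}(\lambda)$-sequential equilibrium at the last query, and then verify that every sequentially rational play drives the game into $hQ(attTA \mid a'ttTA')$.

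First I would invoke Lemma~\ref{lemma:lwn} to collapse $\mathcal{LW}$'s available actions at the last query to the family $\{TA, TA', TO\}$. I then propose the candidate profile in which each $\mathcal{R}_i$ plays $t$ at both of its information sets, $\mathcal{LW}$ plays $TA$ at $I_1^{\mathcal{LW}}$ and $TA'$ at $I_5^{\mathcal{LW}}$, and every other client information set is handled symmetrically. The accompanying belief system is taken as the limit of Bayes' rule along a completely mixed trembling-hand sequence that assigns vanishing weight to each non-$t$ action of every relay; on the equilibrium path this places mass $1$ on $Qat$ for $\mathcal{R}_2$'s $I_1$, on $Qa't$ for $I_2$, on $Qatt$ for $I_1^{\mathcal{LW}}$, and on $Qa'tt$ for $I_5^{\mathcal{LW}}$.

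Next I would verify sequential rationality by backward induction on the stage game, reading payoffs from the utility tables in Appendix~\ref{append:game} together with the payout clauses of the incentive subroutine. For $\mathcal{LW}$ at $I_1^{\mathcal{LW}}$, action $TA$ yields an increment of $d_L - p$ while $TA'$ yields $d_L - v - p$, so $TA$ strictly dominates; the argument at $I_5^{\mathcal{LW}}$ is symmetric. For $\mathcal{R}_2$ at $I_1$ with belief on $Qat$, the three actions pay $p/2 + d_F$ (Clause~1), $0$ (Clause~2, since its $\bot$-claim is falsified by $\mathcal{R}_1$'s valid proof), and $d_F$ (Clause~7) respectively, so $t$ dominates whenever $p > 0$. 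At $I_2$ with belief on $Qa't$, the comparisons give $p/2 - r + d_F$ (Clause~6), $0$ (Clause~5, since any fake proof of truth fails $\validate$ by the verifiability of the scheme), and $d_F$ (Clause~9), so $t$ dominates whenever $p/2 > r$. Now $\mathcal{R}_1$ at $Qa$ and $Qa'$, anticipating $\mathcal{R}_2$'s $t$, faces the same comparisons; crucially, the malicious benefit $v_i$ is never realized under a unilateral deviation because $\mathcal{R}_{-i}$'s honest response (a valid proof for $True$ at $a$, or the verified absence of any valid proof at $a'$) leaves $\mathcal{LW}$'s best response unchanged.

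I would then pinpoint where the hypothesis $d_F + p/2 > v_i$ bites. It is the sharp margin that absorbs off-path trembles: if $\mathcal{R}_{-i}$'s perturbation puts vanishing probability on $f$, the best-response calculation for $\mathcal{R}_i$ picks up a worst-case term of at most $v_i - (d_F + p/2)$ (the malicious bonus minus the forgone payment-plus-deposit when the cheat is detected), which stays strictly negative under the hypothesis and therefore keeps $t$ strictly optimal throughout the approximating sequence. The companion inequality $d_L > p+e$ confirms that the contract is always collateralized to settle every clause. The main obstacle I expect is the bookkeeping: tracking payoff differences across the nine $\mathcal{LW}$ information sets and the two relay information sets while keeping the ordering of trembles consistent everywhere. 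Once one cross-references the utility tables with Clauses~1--10 of the incentive subroutine, each strict inequality is a short calculation, and the $\mathit{negl}(\lambda)$ slack in the equilibrium notion absorbs the negligible-probability cryptographic deviations (signature forgery, or manufacturing a $\sigma$ that fools $\validate$ on a false predicate) bundled into the action $x$.
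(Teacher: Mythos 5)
Your proposal is correct and follows essentially the same route as the paper's own proof: prune the client's actions via the auxiliary lemma, argue by backward induction that $t$ is each relay's sequentially rational action at both of its information sets, and then derive the consistent beliefs that force $TA$ at $I^{\mathcal{LW}}_1$ and $TA'$ at $I^{\mathcal{LW}}_5$. You are in fact more explicit than the paper, which never actually points to where $d_F + p/2 > v_i$ enters the payoff comparison, whereas you correctly isolate it as the margin $v_i - (d_F + p/2) < 0$ that makes cheating unprofitable even when a unilateral deviation would fool the client.
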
 

\begin{proof} 
	Let   the history $h$ denote any history where is the turn of the light client to choose from $\{Q,B\}$. 
	If the client raises the query
	due to Lemma   \ref{lemma:lwn},  it can be seen that the only reachable histories from $h$ must have the prefix of: $hQ(a|a')(t|f|x)(t|f|x)T(A|A'|O)$. 
	
	Say $hQ(a|a')(t|f|x)(t|f|x)T(A|A'|O)$ terminates the game, and w.l.o.g. let  $\mathcal{R}_i$ choose to deviate from $t$. Due to such the strategy of $\mathcal{R}_i$, a belief system of $\mathcal{R}_j$ consistent with that has to assign some non-negligible probability to the history corresponding  that $\mathcal{R}_i$ takes an action off $t$, so the best response of $\mathcal{R}_j$ after $h$ is also to take action $t$ according to the utility definitions. Then we can reason the sequential rationality backwardly: in any information set of the full nodes, the best response of the relay full nodes is to take action $t$. 
	Thus the strategy of $\mathcal{R}_i$ consistent to $\mathcal{R}_j$'s strategy must act $t$.
	Note the joint strategy $\vec{\sigma}$ that $\mathcal{R}_1$ and $\mathcal{R}_2$ always choose the action of $t$ and the $\mathcal{LW}$ always chooses the strategy of $T$. So for the light client, its belief system consistent with the fact must assign the probability of 1 to $hQatt$ out of the information set of $I^{LW}_1$ and the probability of 1 to $hQa'tt$ out of the information set of $I^{LW}_5$. Conditioned on such the belief, the light client must choose $TA$ in $I^{LW}_1$ and choose $TA'$ in $I^{LW}_5$. 
	This completes the proof for the lemma.
	
	
\end{proof}


\noindent Deferred proof for Theorem \ref{the:2}:
\begin{proof} 
	At the last query in the game $\Gamma^k_2$, the client will raise the query (namely act $Q$), conditioned on $d_L > (p+e)$.
	Additionally from Lemma \ref{lemma:2}, it immediately proves the Theorem \ref{the:2} due to backward reduction.
\end{proof}

\subsection{Proof for theorem \ref{the:1}}
\begin{lemma}\label{lemma:lwn} 
	If the client raises a query in the game $\Gamma^k_1$, the sequentially rational strategies of the light client $\mathcal{LW}$ (under any belief system) will not include   $X$ (i.e., the light client will always take $T$ to report the contract the whatever it receives from the relay nodes) in this query. 
\end{lemma}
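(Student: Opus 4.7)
The plan is to mirror the proof of Lemma 1 by direct inspection of the one-relay analogue of the utility tables, showing that at each of $\mathcal{LW}$'s three information sets in $\Gamma^k_1$, namely $Q(at|a'f)$, $Q(af|a't)$ and $Q(ax|a'x)$, and for every output choice $b\in\{A,A',O\}$, the action $Tb$ weakly dominates $Xb$, and does so strictly whenever any validly signed response was received.

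First I would read off the relevant per-query payoff increments directly from the basic {\color{BrickRed}$\mathsf{Incentive}$} subroutine for one relay (Fig.\ 11). The key observation is that the second component $b\in\{A,A',O\}$ alone determines the client's output this query, and hence fixes the value-loss term $-v$ (for outputting against the ground truth) and the full-node cost $-c$ (for outputting $O$), whereas the first component ($T$ vs.\ $X$) only affects which branch of the incentive subroutine fires. Reporting a validly signed response via $T$ unlocks the query deposit $e$ (or $e+r$ when the response encodes $\bot$, or even $p+e+d_F$ when the response is an invalid truthness proof), while $X$ delivers no feedback, so the contract falls through to its default clause and the entire query deposit $p+e$ stays forfeit on the client side. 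At the information sets $Q(at|a'f)$ and $Q(af|a't)$ this produces a strict utility gap of at least $e>0$ between $Tb$ and $Xb$, irrespective of the belief system, so no sequentially rational $\mathcal{LW}$-strategy can place positive probability on $Xb$ there.

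For the third information set $Q(ax|a'x)$, where the client has received no message, $T$ corresponds to sending an empty $\mathsf{feedback}$ while $X$ ranges over all other P.P.T.\ strategies, including attempts to forge a signature under the relay's public key. By unforgeability of the signature scheme, such forgeries succeed with probability at most $\mathsf{negl}(\lambda)$, so the expected payoff gain of any $Xb$ over $Tb$ at this information set is bounded by $\mathsf{negl}(\lambda)$. Thus $Tb$ weakly dominates $Xb$ up to a negligible slack that is absorbed by the $\epsilon$-sequential-rationality tolerance of our security notion in Section \ref{problem}, which again excludes $Xb$ from every sequentially rational strategy (and, if one insists on strictness, from every sufficiently-small-$\epsilon$ variant).

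Finally I would lift the one-query argument to the full $k$-repeated game by a routine backward induction: because the $\mathsf{state}/\mathsf{ctr}$ transitions after a completed query are identical under $T$ and $X$, the continuation distribution over subsequent histories, and hence the continuation value of the remaining $\mathsf{ctr}-1$ queries, is independent of the current $T/X$ choice, so the per-query gap transfers directly to total expected utility. The main obstacle is the $Q(ax|a'x)$ case, where dominance is only strict modulo the cryptographic $\mathsf{negl}(\lambda)$ slack; everything else reduces to a mechanical inspection of Fig.\ 11, exactly parallel to the proof of Lemma 1 in $\Gamma^k_2$.
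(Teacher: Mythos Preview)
Your proposal is correct and takes essentially the same approach as the paper: a dominance argument showing that at every information set of $\mathcal{LW}$ in $\Gamma^k_1$, the action $T$ yields weakly higher utility than $X$ regardless of belief. The paper's own proof is a one-line assertion (``No matter at any history of any information set, taking an action including $X$ is dominated by replacing the character by $T$, which is clear from the utility function''), so your version is considerably more detailed---in particular your explicit treatment of the $Q(ax|a'x)$ information set via signature unforgeability and the $\mathsf{negl}(\lambda)$ slack is more careful than anything the paper spells out, and your backward-induction lift to the $k$-repeated game is likewise only implicit in the paper.
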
 

\begin{proof} 
	No matter at any history of any information set, taking an action including  $X$ is dominated by replacing the character by $T$,
	which is clear from the utility function.
\end{proof} 

\begin{lemma} \label{lemma:1} 
	At the last query (history $h$) in the game $\Gamma^k_1$, if the client raises the last query (i.e. reaching the history $hQ$),
	the game terminates in $hQ(atTA|a'tTA')$,   conditioned on $d_F + p-r >   v_i $, $r>v_i $.
\end{lemma}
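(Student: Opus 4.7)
The plan is to perform backward induction inside the one-shot subgame rooted at $hQ$, assuming as given the preceding lemma that the client's strategy component $X$ is strictly dominated by $T$ at every information set of $\Gamma^k_1$. Thus I may restrict attention to strategies where the client always reports the received response to $\mathcal{G}_{ac}$ and only chooses among $\{A, A', O\}$ at its three information sets $I_1 = \{hQat, hQa'f\}$ (client sees a ``True'' claim), $I_2 = \{hQaf, hQa't\}$ (client sees $\bot$), and $I_3 = \{hQax, hQa'x\}$ (client sees nothing).

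Next, I read off from the pseudocode of the one-relay {\color{BrickRed}$\mathsf{Incentive}$} subroutine (Appendix \ref{incentive:1-1}) the utility increments $u_{\mathcal{R}}$ for each (chance, relay) action pair. At $hQa$: playing $t$ (valid proof of trueness) yields $p + d_F$; playing $f$ (sending $\bot$ against the truth) yields $p - r + d_F + v_i$ since the client is fooled; playing $x$ yields $d_F$. At $hQa'$: playing $t$ (sending $\bot$) yields $p - r + d_F$; playing $f$ (attempting an invalid proof of trueness) yields only $v_i$, because the invalid proof triggers confiscation of $d_F$ yet still fools the client; playing $x$ yields $d_F$. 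Comparing payoffs, $t$ strictly dominates $f$ at $hQa$ iff $p + d_F > p - r + d_F + v_i$, equivalently $r > v_i$; and $t$ strictly dominates $f$ at $hQa'$ iff $p - r + d_F > v_i$, equivalently $d_F + p - r > v_i$. These are exactly the two hypotheses of the lemma; dominance of $t$ over $x$ reduces to the implicit $p > 0$ and $p > r$. Hence under \emph{any} belief system, the relay's unique sequentially rational action is $t$ at both $hQa$ and $hQa'$.

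Finally, I close the induction on the client's side. Because the relay plays $t$ with probability $1$ in equilibrium, the histories $hQa'f$ and $hQaf$ are off-path; any assessment consistent with a completely mixed approximating sequence $(\vec{s}^k, \mu^k)$ forces the limit belief at $I_1$ to concentrate on $hQat$ and at $I_2$ on $hQa't$. Given these beliefs, the client's sequentially rational action at $I_1$ is $TA$ (outputting $True$ matches the ground truth and earns the refund of $e$), and at $I_2$ is $TA'$, as reading off the utility tables confirms. Combined with the relay's strategy, the only reachable terminal histories are $hQatTA$ and $hQa'tTA'$, completing the lemma.

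The main obstacle I anticipate is the accounting for the $f$-at-$hQa'$ branch: one must be careful that the confiscation of $d_F$ on an \emph{invalid} proof is reflected as an increment of $v_i$ rather than $v_i + d_F$, so that the dominance condition reads $d_F + p - r > v_i$ and not a weaker inequality. A secondary technicality is that any relay strategy attempting to forge a valid proof of a false predicate succeeds with probability at most $negl(\lambda)$ by the security of $\validate$; this loses only a negligible amount of expected utility and is what turns the strict sequential equilibrium into the $\epsilon$-sequential equilibrium required by Theorem \ref{the:1}.
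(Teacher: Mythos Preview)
Your proposal is correct and follows essentially the same backward-induction approach as the paper, which merely asserts that ``$t$ is dominating from the utility function'' without writing out the payoffs; your explicit computation of the relay increments at $hQa$ and $hQa'$ and the resulting inequalities $r>v_i$ and $d_F+p-r>v_i$ is exactly the content behind that one line. Your remarks on the confiscation accounting at $hQa'f$ and on the $negl(\lambda)$ slack from forging proofs are accurate and go a bit beyond what the paper spells out.

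One small terminological point: when you say the relay's action $t$ is sequentially rational ``under any belief system,'' note that the relay's information sets $hQa$ and $hQa'$ are singletons, so beliefs are trivial there; what actually matters is the client's strategy at $I_1,I_2,I_3$. Your payoff comparisons implicitly fix the client to the protocol actions ($A$ at $I_1$, $A'$ at $I_2$), which is exactly what is needed to verify the candidate sequential equilibrium, and in fact $t$ versus $f$ is a genuine dominance under the lemma's hypotheses even for arbitrary client output actions (the worst case still gives $r-v_i>0$ and $p-r+d_F-v_i>0$). The $t$ versus $x$ comparison, as you note, additionally leans on $p>r>0$, which the paper places in Theorem~\ref{the:1} rather than in the lemma itself.
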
 

\begin{proof} 
	Let   the history $h$ denote the beginning of the last query. 
	If the client raises the query, the game reaches $hQ$.
	Due to Lemma   \ref{lemma:lwn},  it can be seen that the only reachable histories from $h$ must have the prefix of: $hQ(a|a')(t|f|x)T(A|A'|O)$. 
	Then it is immediate to see $t$ is dominating from the utility function.
	This completes the lemma.
	
	
\end{proof}

\noindent Deferred proof for Theorem \ref{the:2}:
\begin{proof} 
	From Lemma \ref{lemma:1}, acting $Q$ is strictly dominates   $B$ due to the utility function, at least in the last query.
	So the last query would include no deviation at all.
	It immediately allows us prove the Theorem \ref{the:1} due to backward reduction from Lemma \ref{lemma:1}.
\end{proof}

\subsection{Proof for theorem \ref{the:1-1}}

\begin{lemma}\label{lemma:lwn} 
	If the client raises a query in the augmented game $G^k_1$, the sequentially rational strategies of the light client $\mathcal{LW}$ (under any belief system) will not include   $X$ (i.e., the light client will always take $T$ to report the contract the whatever it receives from the relay nodes) in this query. 
\end{lemma}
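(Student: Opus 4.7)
The plan is to mirror the two earlier $\mathcal{LW}$-dominance lemmas (the one preceding Lemma~\ref{lemma:2} and the one preceding Lemma~\ref{lemma:1}) by arguing that, in the augmented game $G_1^k$, every pure action of $\mathcal{LW}$ whose first component is $X$ is strictly dominated by the action obtained by replacing that $X$ with $T$, uniformly over information sets, beliefs, and continuations. Since sequential rationality at an information set is determined by expected utility conditional on that information set, pointwise dominance of actions at every history in the information set implies dominance under every belief, so no sequentially rational strategy of $\mathcal{LW}$ can place positive probability on any $X(\cdot)$ action at an information set reached after the client raises the query.

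I would first fix an arbitrary history $h$ preceding the client's ``feedback'' move within some query. From the protocol (Fig.~\ref{fig:protocol}) and the augmented incentive subroutine (Fig.~\ref{fig:incentive1-2}), the client's continuation payoff depends on two things: (i) whether the arbiter contract refunds any portion of the per-query deposit $\$(p+e)$, and (ii) whether $\mathcal{PFN}$ subsequently debates. The crucial observation is that the choice between $T$ and $X$ affects only (i): the contract's ``timer'' branch, taken when $|\mathsf{responses}|=0$, refunds the client only its initial $d_L$ and burns the full $(p+e)$, whereas reporting any validly signed relay response activates either {\color{CadetBlue}$\mathsf{Payout}$}, {\color{Lavender}$\mathsf{Payout}'$}, or (in the augmented version) the debate window, each of which refunds at least $\$e/2$ to $\mathcal{LW}$. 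Whether $\mathcal{PFN}$ acts in the debate phase is independent of whether $\mathcal{LW}$ chose $T$ or $X$ conditional on the same triggering predicate status, so the second component of the client's action $\{A, A', O\}$ and all subsequent moves can be held fixed in the comparison.

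Next I would translate this observation into a per-history utility inequality. Inspecting the recursive utility tables (Tables~\ref{tab:utility-a}, \ref{tab:utility-a1}, \ref{tab:utility-o}), adapted to $G_1$ by retaining only the singleton information sets and substituting the augmented payout rules, shows that for every terminal-prefix history $h'$ reached from $h$ and every fixed second-component and continuation strategy, the increment to $u_{\mathcal{LW}}$ under $T$ exceeds that under $X$ by a strictly positive constant $\delta \geq e/2$. Taking expectations with respect to any belief $\mu$ at the current information set preserves this $\delta$-gap. Hence the action with $X$ cannot be a $\epsilon$-best response for any $\epsilon<\delta$; in particular it cannot lie in the support of any behavioral strategy forming an $\epsilon$-sequential equilibrium with $\epsilon=negl(\lambda)$.

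The only delicate point I expect is handling the possibility that $\mathcal{LW}$ chooses a P.P.T.\ strategy that attempts to forge a signature $\mathsf{sig}_i$ to forge a feedback that triggers a more favorable payout clause; this is the ``$X$ with cryptographic attack'' case already anticipated in the paper's computational-equilibrium discussion. I would dispatch it by a standard reduction: any such strategy succeeds with probability at most $negl(\lambda)$ by EUF-CMA security of the signature scheme, contributing at most $\epsilon$ to the expected utility, which is absorbed into the $\epsilon$-sequential-equilibrium slack and does not overturn the strict $\delta$-gap established above. This completes the argument and yields the lemma in the same form as its analogues for $\Gamma_2^k$ and $\Gamma_1^k$.
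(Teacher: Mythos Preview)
Your proposal is correct and takes essentially the same approach as the paper: a pointwise dominance argument showing that replacing $X$ by $T$ strictly increases $\mathcal{LW}$'s utility at every history, hence under any belief. The paper's own proof is a single sentence to this effect (``No matter how the relay and the public full node act, taking an action including $X$ is dominated by replacing the character by $T$ in the augmented game $G^k_1$''); your version simply fleshes out the refund-gap $\delta\ge e/2$ and the negligible-forgery caveat that the paper leaves implicit.
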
 
\begin{proof}
		No matter how the relay and the public full node act, taking an action including  $X$ is dominated by replacing the character by $T$ in the augmented game $G^k_1$.
\end{proof}

\begin{lemma} \label{lemma:1-1-1} 
	At the last query (history $h$) in the game $G^k_1$, if the client raises the last query (i.e. reaching the history $h(m|x)Q$),
	the relay node would not deviate off $t$ with non-negligible probability,   conditioned on $d_F >   v_i $.
\end{lemma}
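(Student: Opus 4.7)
The plan is to carry out an inspection-game style argument on the last-query subgame, exploiting the natural tension created by the augmented incentive between the recruited relay and the non-cooperative public full node $\mathcal{PFN}$. By Lemma~\ref{lemma:lwn}, I may assume the light client always plays $T$ (up to the negligible-probability slack of an $\epsilon$-sequential equilibrium), so the only remaining players whose behavior I need to pin down at the last stage are $\mathcal{PFN}$ (choosing between $m$ and $x$) and the relay (choosing in $\{t,f,x\}$ after observing $chance$).

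First I would reduce to the relevant deviation. Actions mapped to $x$ by the relay either require breaking the signature scheme (contributing only $\epsilon = negl(\lambda)$ to expected utility) or simply abort, which is strictly dominated by $t$ given $p>0$. So the only economically meaningful deviation is $f$ when the $chance$-action is $a$ (claiming $\cP^\ell_N=False$ when true), because when $chance$ plays $a'$ the trueness is unprovable and no debate proof exists, so $\mathcal{PFN}$ cannot punish; in that branch the relay's payoffs are independent of $\mathcal{PFN}$'s move and it is immediate from the utility table that $t$ weakly dominates $f$ there. Thus everything reduces to the probability $\delta$ with which the relay plays $f$ at the history $hmQa$ (and symmetrically at $hxQa$, where both histories lie in the same information set of the relay by non-cooperation).

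Next I would set up the mutual-best-response calculation. Suppose, for contradiction, that in some candidate $\epsilon$-sequential equilibrium the relay plays $f$ at this information set with non-negligible probability $\delta$. Consider $\mathcal{PFN}$'s decision between $m$ and $x$: the only difference in payoffs occurs on histories where the relay plays $f$ given $a$, in which case under $m$ the $\mathcal{PFN}$ can send a valid $\sigma_{pfn}$ and collect $d_F$ (a strictly positive, non-negligible payoff), whereas under $x$ it collects nothing from that branch. Since monitoring costs only $\epsilon$, the expected gain from $m$ over $x$ is at least $\rho \cdot \delta \cdot d_F - \epsilon$, which is non-negligible. Sequential rationality then forces $\mathcal{PFN}$ to assign probability $1-negl(\lambda)$ to $m$ in any consistent belief/strategy pair. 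But conditioned on $\mathcal{PFN}$ playing $m$ essentially always, the relay's expected utility from $f$ at history $hmQa$ is at most $v_i - d_F + negl(\lambda)$, while from $t$ it is $p + d_F$; since $d_F > v_i$ by hypothesis, $t$ strictly dominates $f$ by a non-negligible margin, contradicting that the relay plays $f$ with non-negligible probability. The main obstacle I expect is the familiar circularity of inspection-game equilibria: I must be careful to phrase the argument in terms of the consistent sequence $(\vec{s}^k,\mu^k)$ converging to the equilibrium, showing that for every tail of the sequence $\mathcal{PFN}$'s conditional best response at its information set forces $\Pr[m] \to 1$, and that this in turn forces the relay's $\Pr[f] \to 0$ at a rate faster than any inverse polynomial in $\lambda$, so the limiting assessment indeed qualifies as an $\epsilon$-sequential equilibrium with $\epsilon = negl(\lambda)$.

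Finally I would tie the last two pieces together. By combining the conclusions above with Lemma~\ref{lemma:lwn}, every player's sequentially rational strategy at the last query coincides with the protocol-prescribed action up to negligible slack, which is exactly the claim of the lemma. The backward-induction extension to all $k$ queries (needed for the full Theorem~\ref{the:1-1}) then proceeds analogously to the reductions used for Theorems~\ref{the:2} and~\ref{the:1}, but it is not needed for this lemma itself.
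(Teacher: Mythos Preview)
Your proposal is correct and follows essentially the same approach as the paper: use Lemma~\ref{lemma:lwn} to restrict attention to histories of the form $h(m|x)Q(a|a')(t|f|x)T(\cdot)$, observe that the relay's action $x$ is strictly dominated, and then run the inspection-game argument between the relay and $\mathcal{PFN}$ to rule out $f$ with non-negligible probability under $d_F>v_i$. The paper's own proof is a two-line sketch of exactly this; your version supplies the details (the explicit gain $\rho\cdot\delta\cdot d_F-\epsilon$ for $\mathcal{PFN}$, the care about the consistent sequence $(\vec{s}^k,\mu^k)$) that the paper leaves implicit. One minor quibble: in the $a'$ branch, playing $f$ means sending an invalid proof of trueness, which the contract rejects and confiscates $d_F$, so $t$ in fact \emph{strictly} (not merely weakly) dominates $f$ there as well.
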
  

\begin{proof} 
	Let   the history $h$ denote any history where is the turn of the light client to choose from $\{Q,B\}$. 
	If the client raises the query
	due to Lemma   \ref{lemma:lwn},  it can be seen that the only reachable histories from $h$ must have the prefix of: $h(m|x)Q(a|a')(t|f|x)T(A|A'|O)$. 
	Given such fact, the relay would not deviate off $t$ with non-negligible probability:
	(i) deviate to play $x$ is strictly dominated;
	(ii) deviate to play $f$ with negligible probability will consistently cause the public full node acts $m$.
	This completes the lemma.
	
	
\end{proof}

\noindent Deferred proof for Theorem \ref{the:1-1}:
\begin{proof} 
	From Lemma \ref{lemma:1-1-1}, acting $Q$ is strictly dominates   $B$ due to the utility function, at least in the last query.
	Thus we can argue due to backward reduction from the last query to the first query.
	It immediately completes the proof for Theorem \ref{the:1-1}.
\end{proof}

\ignore{
\subsection{Security theorems (perfect cryptography)}\label{append:proofs} 

\begin{lemma}\label{lemma:query} 
	The sequentially rational strategies of the light client $\mathcal{LW}$ (under any belief system) will not take action $B$ (i.e., the light client never aborts) in $\Gamma^k_2$, conditioned on $d_L > k(v+p+e+c)$. 
\end{lemma} 

\begin{proof} 
	A fact is that: given any history where the light client chooses $Q$ or $B$, the worst incremental utility of the light client during this ``query'' when choosing $B$ is $-v-p-e-c$. Then we can argue the lemma backwardly. With being given any history $h$ that $|h|=6k-6$, we can conclude that the light client will take the action $Q$ at this history $h$, since $d_L>v+p+e+c$. Then, let $h':=h[0:-6]$, where $h'$ represents the history at which the light client decides to abort or not in second to the last query. Conditioned on that the light client will not abort in the last query, the worst payoff if not aborting in the second to the last query will be $d_L-2(v+p+e+c)$, which is strictly better than abort, since abort yield an incremental utility of 0. So the light client will not abort in the second to the last query. And so on, we can conclude this Lemma.
\end{proof} 

\begin{theorem} \label{the:non-colluding-1} 
	If  $\mathcal{R}_1$ and $\mathcal{R}_2$ are non-cooperative, 
	the sequential equilibrium $(\vec{\sigma},\mu)$ of the game $\Gamma^k_2$ ensures that 
	the reached terminal history of the game $\Gamma^k_2$ always matches $(QattTA|Qa'ttTA')\{k\}$, conditioned on $d_F > k \cdot (p+2v)$ and $d_L > k(v+p+e+c)$. 
\end{theorem} 
\begin{proof} 
	The basic tactic to prove this theorem is similar to prove Theorem \ref{the:non-colluding}, that is to use Lemma \ref{lemma:query} and \ref{lemma:lwn} to remove the unreachable histories, and then induce the relevant belief of the light client by applying sequential rationality to the full nodes.
	Here is details: again, we let $h$ denote any history where the light client makes an action out of $\{Q,B\}$. 
	Lying on Lemma \ref{lemma:query} and \ref{lemma:lwn},  it can be seen that the only reachable histories from $h$ must include the prefix of $hQ(a|a')(t|f|x)(t|f|x)T(A|A'|O)$. 
	Again, it is trivial to check backwardly to see all relay nodes will move by taking the action of $t$ on matter at which information set under any belief system conditioned on $d_F > k \cdot (p+2v)$, according to the definition of utilities. Now let us check the belief system of the light client, 
	to be consistent with the sequential rational strategy with the relays, the light client must assign the probability of 1 to $hQatt$ which is the information set of $I^{LW}_1$ and the probability of 1 to $hQa'tt$ which is in the information set of $I^{LW}_5$. Conditioned on such the belief, the light client must choose $TA$ in $I^{LW}_1$ and choose $TA'$ in $I^{LW}_5$ to be sequential rational. This completes the proof.
\end{proof}

\begin{theorem} \label{the:colluding} 
	If the two relay full nodes can arbitrarily collude, 
	the sequential equilibrium $(\vec{\sigma},\mu)$ of the game $\Gamma^k_2$ ensures that 
	the reached terminal history of the game $\Gamma^k_2$ always matches $(QattTA|Qa'ttTA')\{k\}$, conditioned on $v/2<r<p/2$ and $d_L > k(v+p+e+c)$. 
\end{theorem} 
\begin{proof} 
	The proving tactic is still similar to proving Theorem \ref{the:non-colluding} and \ref{the:non-colluding-1}. The only remarkable change is the the two relays form a coalition, and the utility of the coalition can be induced by summing the utilities of the two relays. Then the proof sketch is as follows. Let $h$ denote any history where the light client makes an action out of $\{Q,B\}$. Lying on Lemma \ref{lemma:query} and \ref{lemma:lwn},  it is clear to see that the only reachable histories from $h$ must include the prefix of $hQ(a|a')(t|f|x)(t|f|x)T(A|A'|O)$. Let us abuse the notation $\{tt, tf, tx, ft,ff,fx\}$ to denote the actions of the coalition of the two relays. Since the condition of $v<2r<p$, it becomes clear to see the coalition's sequential rational strategy must take $tt$. Based on the sequential rational strategy of the coalition, the light client has a belief system to assign the probability of 1 to $hQatt$ which is the information set of $I^{LW}_1$ and the probability of 1 to $hQa'tt$ which is in the information set of $I^{LW}_5$. This completes the proof.
\end{proof}

\subsection{Security theorems (standard cryptography)}\label{append:crypto-proofs}

\begin{theorem} 
	{\em } Conditioned on the assumptions of Theorem \ref{the:non-colluding}, \ref{the:non-colluding-1} and \ref{the:colluding} respectively, the game $G^k_2$ terminates at a history in $(QaststsTA|Qa‘ststsTA’)\{k\}$ with overwhelming probability.
\end{theorem}
\begin{proof}
	(Sketch) 
	Note the similar game structures, utility functions and same information structure among the game $\Gamma^k_2$ and $G^k_2$. The same  tactic of proving $\Gamma^k_2$'s theorems (i.e. Theorem \ref{the:non-colluding}, \ref{the:non-colluding-1} and \ref{the:colluding}) can be used to prove the theorems of $G^k_2$. The only remarkable differences is that the player of $chance$ incurs ``trembling'' to assign negligible probabilities of reaching the terminal histories out of $(QaststsTA|Qa‘ststsTA’)\{k\}$.
	This also completes the proofs of Theorem \ref{the:1}, \ref{the:2} and \ref{the:3}.
\end{proof}

}

\ignore{

\section{On game-theoretic fairness}\label{append:fairness}
Here we discuss the ``fairness'' of the game representing the underlying game (i.e. the game $\Gamma^k_2$ instead of $G^k_2$ for intuitions). Though the notion of ``fairness'' is not a prior security goal, it is worth to recalling that by the game theory literature, the fairness can capture the irrational behaviors of people, and is therefore important to make a mechanism design really work in practice.
Remark that the sequential equilibrium essentially depicts a specific refinement of Nash equilibrium, and we can focus on the well-accepted Rabin fairness notion for Nash equilibrium \cite{rabin}.

Let us begin with reviewing a few relevant concepts in the game theory settings.

\begin{definition}
Given a strategy profile $\vec{\sigma}=(\sigma_{i})_{i \in {\bf N}}$, it is said to be {\em mutual-max} outcome,
if for every player $i \in {\bf N}$, its strategy in the profile satisfying $\sigma_i \in \bigcap_{j\neq i}	 \argmax_{\sigma_{i}} u_j((\sigma_{i},\vec{\sigma}))$, which clearly indicates that given the other players' strategies in the profile, each player's strategy can maximize the others players' utilities.
\end{definition}

\begin{proposition}
If a Nash equilibrium is mutual-max outcome, it is also a Rabin fairness equilibrium, no matter what the ``irrationality parameters'' of the game players are.
\end{proposition}

Then we are ready to show that the desired sequential equilibria in the game $\Gamma^k_2$ (c.f. Theorem \ref{the:non-colluding}, \ref{the:non-colluding-1}, \ref{the:colluding}) corresponds Rabin fairness Nash equilibria, as is mutual-max outcome. Considering the following strategy profile $\vec{\sigma}$: (i) for the light client $\mathcal{LW}$, it takes the strategy $\sigma_{\mathcal{LW}}$, i.e., always report the blockchain whatever it receives from the full nodes in the queries; (ii) for the relay full node $\mathcal{R}_i$ for every $i \in [1,2]$, it takes the strategy $\sigma_{\mathcal{LW}_i}$, i.e., forward the correct chain predicate statement to the light client. Due to Theorem Theorem \ref{the:non-colluding}, \ref{the:non-colluding-1}, \ref{the:colluding}, the above strategy profile $\vec{\sigma}$ is Nash equilibrium, conditioned on different conditions accordingly.

Then conditioned on $d_F < c$, \footnote{We believe $d_F < c$ is   natural, as $c$ corresponds the cost that light client to maintain its own blockchain full node, which can be rather considerable.}
the above strategy profile $\vec{\sigma}$ is a mutual-max outcome, because: (i) fixing the relay nodes' strategies, when $\mathcal{LW}$ deviates from the strategy $\sigma_{\mathcal{LW}}$, the relay nodes will get less payoff, due to losing the payment for the correct forward of chain predicate; (ii) fixing the strategy of the light client $\mathcal{LW}$ and any relay node $\mathcal{R}_i$, when the other relay node $\mathcal{R}_j$ deviates from the strategy $\sigma_{\mathcal{R}_j}$, the light client will get less payoff, because the failure of reading the blockchain renders a loss of utility that cannot be fully covered by the initial deposit of the dishonest full node due to $d_F < c$. So we can have the following theorem.

\begin{theorem}\label{the:fairness}
Conditioned on $d_F < c$, the joint strategy ``the light client always report the contract what it receives from the relay full nodes, and the relay nodes always forward the correct blockchain readings to the light client'' is a Rabin fairness equilibrium, if the joint strategy corresponds a sequential equilibrium.
\end{theorem}

\begin{proof}
Because the strategy profile is a Nash equilibrium as well as a mutual-max outcome.
\end{proof}

}


\section{Other Pertinent Work}\label{related}
Here we thoroughly review the insufficiencies of relevant work.\footnote{Remark that there are also many studies \cite{gervais2014privacy,BITE} focus on protecting the privacy of lightweight blockchain client. However, we are emphasizing the basic functionality of the light client instead of such the advanced property.}

\smallskip
\noindent{\bf Lightweight protocols for blockchains}.
The SPV client is the first light-client protocol for Proof-of-Work (PoW) blockchains, proposed as early as Bitcoin \cite{Nak08}.
The main weakness of SPV client is that the block headers to download, verify and store increase linearly with the growth of the chain, which nowadays is $>2$ GB in Ethereum.
To realize super-light client for PoW, the ideas of Proofs of PoW   and FlyClient \cite{Flyclient} were proposed, so the light client can store only the genesis block to verify the existence of a blockchain record at a sub-linear cost, once connecting at least one honest full node.

%

Though the existence of   superlight protocols for PoW  chains, all these schemes cannot be applied for other types of consensuses such as the proof-of-stake (PoS) \cite{algorand,ouroboros,snowwhite}.
%
%
A recent study on PoS sidechains \cite{GKZ19} (and a few relevant  industry proposals \cite{Coda,Cosmos}) presented the idea of ``certificates'' to cheaply convince some always on-line clients on the incremental growth of the PoS chains; however, their obvious limitation  not only corresponds linear cost for the frequently off-line clients, but also renders serious vulnerability for the clients re-spawning from a ``deep-sleep'', because the inherent costless simulation makes the deep sleepers cannot distinguish a forged chain from the correct main chain \cite{bleeding,snowwhite}.

Few fast-bootstrap protocols  such as \cite{vault} and \cite{ouroboros-genesis} exist for PoS chains, but they are concretely designed for Algorand \cite{algorand} and Ouroboros Praos \cite{ouroboros-praos} respectively, let alone they are only suitable to boot {\em full nodes} and still incur substantial costs that are not affordable by   resource-starved clients. Different from Vault \cite{vault} and Ouroboros Genesis \cite{ouroboros-genesis}, another PoS protocol Snow White \cite{snowwhite} makes a heavyweight assumption that a permissioned list of full nodes with honest-majority can be identified, such that ``social consensus'' of these full nodes can be leveraged to allow deep sleepers to efficiently re-spawn. In contrast to \cite{snowwhite}, our design is based on a   different assumption, {\em rationality}, which is less heavy and more realistic in many real-world scenarios from our point of view.

Vitalik Buterin \cite{VitalikPoS} proposed to avoid forks in PoS  chains by incentives: a selected committee member will be punished, if she proposes two different blocks in one epoch. In this way, the light client was claimed to be supported, as the light who receives a fake block different from the correct one can send the malicious block back to the blockchain network to punish the creator of the cheating block. However, as the committees rotate periodically, it is unclear whether the protocol still works if the light client is querying about some ``ancient'' blocks generated before the relay node becomes a committee. 
%
%
A recent work \cite{GLK18} noticed the lack of incentives  due to cryptographic treatments, and proposed to use the smart contract to incorporate incentives,
but it relies on the existing lightweight-client protocols as underlying primitives, and cannot function independently. 

Another line of studies propose the concept of stateless client \cite{edrax,IOPs,Coda}. Essentially, a stateless client can validate new blocks without storing the aggregated states of the ledger (e.g., the UTXOs in Bitcoin \cite{Nak08} or the so-called states in Ethereum \cite{Woo14,buterin2014next}), such that the stateless client can incrementally update its local blockchain replica cheaply. 
Though those   studies can  reduce the burden of (always on-line) full nodes, 
they currently cannot help the frequently off-line light clients to read arbitrarily ancient records of the blockchain.


\noindent{\bf Outsourced computations}.
Reading from the blockchain can be viewed as a special computation over the ledger, which could be supported by the general techniques of outsourcing computations.
But the general techniques do not perform well.


{\em Verifiable computation} allows a  prover to convince a verifier that an output is obtained through correctly computing a function \cite{GGP10}.
%
Recent constructive development of generic verifiable computations such as SNARKs \cite{BCG13} enables   efficient verifications (for general NP-statements) with   considerable cost of proving. Usually, for some heavy statements such as proving the PoWs/PoSs of a blockchain, the generic verifiable computation tools are infeasible in practice \cite{Flyclient,Coda}.

To reduce the high cost of generality, Coda \cite{Coda} proposed to use the idea of incremental proofs to allow (an always on-line) stateless client to validate blocks cheaply, but it is still unclear how to use the same idea to convince the (frequently off-line) clients without prohibitive proving cost. Some other studies \cite{versum,al2018fraud} focus on allowing the resource-constrained clients to efficiently verifying the computations taking the blockchain as input, but they require the clients already have the valid chain headers, and did not tackle the major problem of light clients, i.e., how to get the correct chain of headers. 

{\em Attestation via trusted hardwares} has attracted many attentions recently \cite{CD16,ZCC16}, and it becomes enticing to employ the trusted hardwares towards the practical light-client scheme \cite{BITE}.
However, recent Foreshadow attacks \cite{foreshadow} put SGX's attestation keys in danger of leakage,
and potentially allow an adversary to forge attestations, which might fully break the remote attestation of SGX and subsequently challenge the fundamental assumption of trusted hardwares.

{\em Outsourced computation via incentive games} were discussed before \cite{PKC14,Kupcu17}.
Some recent studies even consider the blockchain to facilitate games for outsourced computations \cite{DWA17,truebit}. All the studies assume an implicit game mediator (e.g. the blockchain) who can speak/listen to all involving parties, including the requester, the workers and a trusted third-party (TTP). However, in our setting, we have to resolve a special issue that there is neither TTP nor mediator, since the blockchain, as a potential candidate, can neither speak to the light nor validate some special blockchain readings.

\end{document}